%% file: arXiv.tex
\documentclass{article} 
\usepackage{iclr2027_conference,times}
\input{math_commands.tex}

\usepackage{hyperref}
\usepackage{url}
\usepackage{amssymb}
\usepackage{amsmath}
\usepackage{amsthm}
\usepackage{caption}
\usepackage{subcaption} 
\usepackage{booktabs}
\usepackage{adjustbox}
\usepackage{multirow} 
\usepackage{enumitem}
\usepackage{kotex}
\usepackage{braket}
\usepackage{booktabs}
\usepackage{makecell}
\usepackage[section]{placeins}

\usepackage{cuted}
\usepackage{apxproof}
\definecolor{lightgray}{cmyk}{0,0,0,.05}

\newlength\GreyboxOuterVspace
\newlength\GreyboxPadding
\newlength\GreyboxRule
\newcommand{\GreyboxFrameColor}{black}

\newcommand{\greybox}[1]{%
  \par\addvspace{\GreyboxOuterVspace}%
  \noindent\begingroup
  \setlength{\fboxsep}{\GreyboxPadding}%
  \setlength{\fboxrule}{\GreyboxRule}%
  \fcolorbox{\GreyboxFrameColor}{lightgray}{%
    \parbox{\dimexpr\linewidth-2\fboxsep-2\fboxrule\relax}{#1}
    
  }
  \endgroup%
  \par\addvspace{\GreyboxOuterVspace}%
}

\newtheorem{proposition}{Proposition}
\newtheorem{definition}{Definition}

\newtheorem{lemma}{Lemma}

\usepackage{graphicx,wrapfig,lipsum, tikz}
\usepackage{algorithm}
\usepackage{algpseudocode}

\newtheorem{corollary}{Corollary}

\hypersetup{
  colorlinks,
  linkcolor={red},
  citecolor={blue!50!black},
  urlcolor={blue!80!black},
}

\definecolor{ForestGreen}{RGB}{34,200, 1}
\definecolor{salmon}{RGB}{250,128,114}
\definecolor{OliveGreen}{RGB}{0,127,0}
\PassOptionsToPackage{dvipsnames,svgnames,table}{xcolor}

\title{Fourier-Geometric Circuit Design for \\  Gate and Entanglement Placement in \\  Quantum Neural Networks}

\author{
Seungcheol Oh\textsuperscript{1},
Chaemoon Im\textsuperscript{1},
Daeyeun Kim\textsuperscript{1},
Soohyun Park\textsuperscript{2},
Vaneet Aggarwal\textsuperscript{3}, \\
\textbf{
Mohsen Heidari\textsuperscript{4},
Joongheon Kim\textsuperscript{1,5,$\ast$}
}
\\[0.5em]
\normalfont
\textsuperscript{1}Korea University \\
\textsuperscript{2}Sookmyung Women's University \\
\textsuperscript{3}Purdue University \\
\textsuperscript{4}Indiana University\\
\textsuperscript{5}Seoul National University Hospital
}

\iclrfinalcopy 

\begin{document}
\maketitle

\begingroup
\renewcommand{\thefootnote}{*}
\footnotetext{
Corresponding author: Joongheon Kim (\texttt{joongheon@korea.ac.kr})
}
\endgroup

\lhead{}
\renewcommand{\headrulewidth}{0pt}

\begin{abstract}
The output of a parameterized quantum circuit (PQC) can be expressed as a finite Fourier series whose accessible frequencies are fixed by the data-encoding gates. While the encoder determines which frequencies can appear, the corresponding Fourier coefficients depend on how the trainable and entangling gates are arranged. Although existing studies provide metrics for characterizing how gate structure affects Fourier coefficients, they do not translate these analyses into an explicit design criterion specifying how gates should be arranged to make a target coefficient reachable. In this paper, we provide such a criterion. Using the adjoint action of the encoding generator, we decompose operator space into two-dimensional invariant planes indexed by frequency and show that each Fourier coefficient is exactly a sum of bilinear projections of the effective state and observable onto the planes at that frequency. Because, for Pauli encodings, high-frequency planes are spanned by mixed multi-qubit Pauli strings, a target coefficient can contribute to the output only when local rotations and entangling layers are arranged so that both the effective state and observable acquire support on one of its planes. For Pauli readouts and commuting two-qubit entanglers, this yields a circuit design rule that specifies, for a given interaction graph, a placement of local rotations and entangling layers that makes a target Fourier coefficient reachable. Using a single encoding layer, we validate the proposed design through a placement ablation, regression tasks from PDEBench and physics-informed Maxwell field modeling, and further demonstrate its robustness to moderate simulated gate noise and damping.
\end{abstract}

\section{Introduction}
Parameterized quantum circuits (PQCs) have emerged as a standard framework for near-term quantum machine learning (QML) due to their compatibility with shallow, low-qubit implementations \citep{benedetti2019parameterized}. However, their design, particularly the arrangement of encoding, local and entangling gates, remains closer to an art than a mathematically grounded discipline because circuits are assembled from a small catalog of templates, chosen with little reference to the function the PQC is meant to represent. This leaves a basic question unanswered. When a target needs particular frequencies, as the oscillatory fields of scientific machine learning do, where should the local and entangling gates be placed?

The closest answer comes from studies of the representational capacity of PQCs. \citet{Schuld} showed that the output of a PQC is a finite Fourier series whose accessible frequencies are fixed by the data-encoding gates, and repeating the encoding, either across multiple qubits or through data re-uploading \citep{perez2020data}, enlarges that set in a way that follows directly from the encoder. Therefore, the accessible frequency set is fixed by the encoding gates before training. What is not settled is whether the coefficient at an available frequency is actually nonzero. Coefficients are known to depend on the trainable blocks and the observable, and a growing body of work has since computed them, characterized their statistics, and traced their consequences for generalization and trainability \citep{casas2023multidimensional,barthe2024gradients,peters2023generalization,wiedmann2025fourier,campbell2026harmonic,ragone2024lie}. However, \emph{these analyses begin from an already chosen circuit.} They tell us which coefficients a particular arrangement of gates produces, but not how that arrangement should be chosen. No existing result specifies how to place local and entangling gates so that a desired Fourier coefficient becomes reachable.

An explicit placement rule is hard to obtain by following the state through the circuit, where a Fourier coefficient appears only when the state meets the observable at the end, so it belongs to the whole circuit rather than to any single gate. Seen through operators, it can be traced. We show that under the adjoint action of the encoding generator, operator space separates into two-dimensional planes, each carrying exactly one frequency, with the input rotating each plane at its own rate. This lets us write the coefficient at a frequency exactly, as the inner product of the effective state and observable with the corresponding planes. We find that high-frequency planes involve multi-qubit operators, so a local measurement cannot reach them unless entangling gates first spread it across qubits. Whether that happens depends on where those gates sit, and the same gates in a different order can leave a Fourier component present in or absent from the output. We exploit this to derive a placement rule for Pauli encodings and readouts with commuting two-qubit entanglers. The rule makes the frequencies a target needs reachable, which is necessary for learning them but does not guarantee it, so we test that link empirically.

We highlight the contributions as follows:
\\
\textbf{(i) An exact expression for each Fourier coefficient.} Using the adjoint action of the encoding generator, we express each coefficient as a sum of bilinear projections of the effective state and observable onto the two-dimensional invariant planes associated with that frequency.
\\
\textbf{(ii) A placement rule for trainable gates and entanglement.} For Pauli encodings and readouts, commuting two-qubit entanglers and a given interaction graph, we give a placement of local rotations and entangling gates that suffices for a target Fourier coefficient to be nonzero at almost every parameter setting, identify which of its conditions are also necessary, and show that each such frequency can reach unit amplitude on its own and that reordering the same gates can suppress it. The rule builds the circuit in linear time.
\\
\textbf{(iii) Controlled validation.} We evaluate the resulting design through parameter-matched comparisons with circuit templates over multiple seeds, a placement ablation that moves the same gates, supervised PDE regression and physics-informed Maxwell modeling, together with damping, gate-noise and finite-shot noise studies.

\section{Prior Work}
\noindent \textbf{Analyses of a Given Circuit.}
\citet{Schuld} established the finite Fourier representation and its encoder-determined frequency set, \citet{perez2020data} showed that data re-uploading enlarges this set. \citet{casas2023multidimensional} move closer to circuit design by showing that parallel layouts can represent multidimensional series of higher degree than line layouts. Their trainable blocks, however, are treated as generic unitaries. Later works similarly analyze predefined circuits by computing exact spectra and structurally vanishing coefficients \citep{wiedmann2025fourier,nemkov2023fourier}, comparing coefficient statistics across circuit families \citep{campbell2026harmonic,mhiri2025constrained}, and studying the frequency profiles and generalization of re-uploading models \citep{barthe2024gradients,peters2023generalization}. These works characterize what a given circuit can represent, but do not specify which gate, acting on which qubit and at which position, determines a particular Fourier coefficient. We derive this relationship and use it to construct trainable blocks whose reachable coefficients are determined by design.
\\
\noindent \textbf{Lie-Algebraic and Symmetry-Based Design.}
The dynamical Lie algebra (DLA) generated by a circuit characterizes its expressivity and trainability, but the circuit is represented through its generating algebra rather than the placement of individual gates. \citet{ragone2024lie} explicitly leave the arrangement of gates within a layer to future work \citep{larocca2022diagnosing,fontana2024characterizing}. Symmetry-aware constructions prescribe circuit structure from algebraic constraints \citep{larocca2022group,meyer2023exploiting,nguyen2024theory}, but likewise determine suitable gate sets without specifying their ordering. Consequently, circuits with the same encoding, interaction graph, and gate family may have the same accessible frequencies and generating algebra while differing in whether a spectral component reaches the measured observable. Our work addresses this finer dependence with a placement of local and entangling gates that makes a target Fourier coefficient reachable.
\\
\noindent \textbf{Circuit Templates and Applications.}
PQCs are commonly built from heuristic circuit templates \citep{sim2019expressibility,kandala2017hardware}, including those used for differential equations \citep{kyriienko2021solving,trahan2024quantum,sedykh2024hybrid,berger2025trainable,panichi2026quantum,farea2025qcpinn,xiao2024physics,chen2025maxwell}, where oscillatory solutions make spectral representation particularly relevant. Increasing entanglement alone does not necessarily improve performance, and \citet{trahan2024quantum} found strongly entangled layers suboptimal for their model.
This motivates studying not only how much entanglement is used, but where the gates are placed so that the required spectral components can contribute to the output.

\section{Fourier-Geometric Circuit Design}
\noindent\textbf{Background.} 
A PQC implements a unitary transformation $U(\mathbf{x},\boldsymbol{\Theta})$ acting on an initial state $\ket{\psi}^{\otimes N}$, where $\mathbf{x}\in\mathbb{R}^{d}$, $\boldsymbol{\Theta}=(\boldsymbol{\theta}^{(L)},\boldsymbol{\theta}^{(R)})\in\mathbb{R}^{p}$, and $N$ are the input data, trainable parameters, and number of qubits, respectively. We consider an architecture consisting of a data-encoding gate $S(x)$ and trainable blocks $W_{L},W_{R}$, each composed of parameterized and entangling gates, such that $U(\mathbf{x},\boldsymbol{\Theta})=W_{L}(\boldsymbol{\theta}^{(L)})\,S(x)\, W_{R}(\boldsymbol{\theta}^{(R)})$. With the density matrix $\rho_0=(|\psi\rangle\langle\psi|)^{\otimes N}$ and an observable $M$, the model output is $f(x,\Theta)=\mathrm{Tr}(\rho_0U^{\dagger}MU)$.
Both $\rho_{0}$ and $M$ belong to the Hilbert--Schmidt operator space $\mathcal{B}(\mathcal{H})$ with inner product $\langle A,B\rangle=2^{-N}\operatorname{Tr}(A^{\dagger}B)$ \citep{mielnik1968geometry}, which is spanned by the Pauli strings $P_k\in\{I,X,Y,Z\}^{\otimes N}$. The parameterized gates $R_{P_k}(\theta)=e^{-i\theta P_k/2}$ act on operators through the adjoint map $\mathrm{ad}_{P_k}(A)=[P_k,A]$, since $U^{\dagger}MU=e^{\operatorname{ad}_{iA}}(M)$ for $U=e^{-iA}$, so gates rotate the Pauli-string components of the state and observable within $\mathcal{B}(\mathcal{H})$.
For an encoding gate $S(x)=e^{-ixG/2}$, the adjoint action of the encoding generator $G$ decomposes the operator space into invariant subspaces associated with differences between its eigenvalues. Consequently, the PQC output admits a finite Fourier representation whose accessible frequencies are determined by the spectrum of the encoding generator \citep{Schuld}. The trainable blocks $W_L$ and $W_R$, in contrast, determine how the state and observable project onto these frequency subspaces, and hence the Fourier coefficients. However, \emph{an explicit circuit-level relationship between the placement of trainable and entangling gates and the reachability of a target Fourier coefficient remains unclear.} We establish this relationship and use it to derive a circuit design rule for making target coefficients reachable. The decomposition underlying it holds for any encoding generator, whereas the design rule assumes a specific encoder, readout and entangler, which we state when we derive it.

\noindent \textbf{Closed Invariant Subspace.}
For any encoding generator $G$, conjugation by $S(x)=e^{-\frac{ix}{2}G}$ expands into nested commutators with $G$, which produce the trigonometric terms of the Fourier series, as the following lemma makes precise.
\begin{lemma}[\textbf{Hadamard's Lemma}]
\label{lem:Hadamard's Lemma1}
Based on the Baker--Campbell--Hausdorff formula \citep{chevalley2018theory}, for a generator $G$, an operator $\mathcal{O}\in\mathcal{B}(\mathcal{H})$ and a scalar $t$,
$e^{tG} \mathcal{O} e^{-tG} \;=\; \sum_{k=0}^{\infty}\nolimits \frac{t^{k}}{k!}\,\mathrm{ad}_G^{k}(\mathcal{O}).$
\end{lemma}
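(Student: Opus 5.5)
The plan is to prove the identity by treating both sides as operator-valued functions of $t$ and showing they satisfy the same linear ODE with the same initial condition. We work in the finite-dimensional operator space $\mathcal{B}(\mathcal{H})$, of dimension $4^N$, so every linear map on it is bounded.

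Define $F(t)=e^{tG}\mathcal{O}e^{-tG}$. Differentiating the matrix exponentials, which are entire in $t$, gives
\[
F'(t)=Ge^{tG}\mathcal{O}e^{-tG}-e^{tG}\mathcal{O}e^{-tG}G=\mathrm{ad}_G(F(t)),
\]
with $F(0)=\mathcal{O}$. Thus $F$ solves the linear ODE $F'=\mathrm{ad}_G F$ on $\mathcal{B}(\mathcal{H})$.

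Since $\mathrm{ad}_G$ is a linear map on a finite-dimensional space, its operator exponential
\[
e^{t\,\mathrm{ad}_G}(\mathcal{O})=\sum_{k\ge0}\frac{t^k}{k!}\mathrm{ad}_G^k(\mathcal{O})
\]
converges absolutely, with $\|\mathrm{ad}_G^k(\mathcal{O})\|\le(2\|G\|)^k\|\mathcal{O}\|$. Differentiating this series term by term, which is justified by absolute and locally uniform convergence, shows it also solves $F'=\mathrm{ad}_G F$ with value $\mathcal{O}$ at $t=0$. Uniqueness of solutions to linear ODEs with constant coefficients then gives equality for all $t$.

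As a cross-check, or an alternative route avoiding ODE uniqueness, one can write $e^{tG}\mathcal{O}e^{-tG}=e^{tL_G}e^{-tR_G}(\mathcal{O})$, where $L_G(A)=GA$ and $R_G(A)=AG$ are left and right multiplication. These commute, so
\[
e^{tL_G}e^{-tR_G}=e^{t(L_G-R_G)}=e^{t\,\mathrm{ad}_G}.
\]
Expanding by the binomial theorem recovers the nested-commutator series directly.

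The only point needing care is justifying the interchange of differentiation and summation, or equivalently the identity $e^{A}e^{B}=e^{A+B}$ for commuting $A,B$. Both follow from absolute convergence in finite dimensions, so I expect no real obstacle.

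Finally, the case used later in the paper, $t=-ix/2$, is covered because the argument holds for complex scalars $t$. This yields $S(x)^\dagger\mathcal{O}S(x)$ as a series in $\mathrm{ad}_G$, matching the earlier remark $U^\dagger MU=e^{\mathrm{ad}_{iA}}(M)$.
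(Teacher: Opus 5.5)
Your proof is correct and shares its first step with the paper's: both set $F(t)=e^{tG}\mathcal{O}e^{-tG}$ and obtain $F'(t)=\mathrm{ad}_G(F(t))$ with $F(0)=\mathcal{O}$. After that the routes differ.

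The paper iterates the derivative to get $F^{(k)}(0)=\mathrm{ad}_G^k(\mathcal{O})$ and writes down the Taylor series of $F$ at $0$. This tacitly assumes that $F$ equals its Taylor series. That is true, because each $e^{\pm tG}$ is an entire power series, or because the integral remainder is bounded by $\frac{|t|^{k+1}}{(k+1)!}(2\|G\|)^{k+1}\sup_{|s|\le|t|}\|F(s)\|$. The paper does not argue it, however.

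You instead show that $e^{t\,\mathrm{ad}_G}(\mathcal{O})$ solves the same initial value problem and invoke uniqueness for linear constant-coefficient ODEs. This fully justifies identifying $F$ with the series. Your alternative, $e^{tG}\mathcal{O}e^{-tG}=e^{tL_G}e^{-tR_G}(\mathcal{O})=e^{t(L_G-R_G)}(\mathcal{O})$ with commuting $L_G,R_G$, needs no calculus at all. It holds verbatim for complex $t$ and lands directly on the form $U^{\dagger}MU=e^{\mathrm{ad}_{iA}}(M)$ quoted in the paper's background. The paper's version is shorter and keeps the derivative recursion $F^{(k)}=\mathrm{ad}_G^k(F)$ in view, but yours is the more complete argument.

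One small slip in your final remark: $S(x)^{\dagger}\mathcal{O}S(x)=e^{ixG/2}\mathcal{O}e^{-ixG/2}$ corresponds to $t=ix/2$, not $t=-ix/2$. Since the identity holds for every complex $t$, nothing depends on it.
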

While Lemma \ref{lem:Hadamard's Lemma1} gives an infinite series, the series sums to trigonometric functions whenever two operators $\mathcal{X}_{\omega},\mathcal{Y}_{\omega}\in\mathcal{B}(\mathcal{H})$ span a subspace closed under the adjoint map of $G$\footnote{Detailed proof of Lemma \ref{lem:Hadamard's Lemma1} is provided in Appendix \ref{appendix:Hadamard_Lemma}.}, as the following proposition states.
\greybox{
\begin{proposition}[\textbf{Closed Invariant Subspace}]
\label{prop:2Dclosure}
Let $G$ be a generator for encoder $S(x)$ and $\mathcal{X_{\omega},Y_{\omega}}\in\mathcal{B}(\mathcal{H})$ be operators such that $\mathrm{ad}_G(\cdot)=[G,\cdot]$ closes on $\mathrm{span}\{\mathcal{X_{\omega},Y_{\omega}}\}$ in the sense that $\exists$  $\omega\in\mathbb{R}$ such that
$[G,\mathcal{X}_{\omega}]=-2i\omega\,\mathcal{Y}_{\omega},$ and $[G,\mathcal{Y}_{\omega}]=2i\omega\,\mathcal{X}_{\omega}.$
Then, $\forall x\in\mathbb{R}$,
\begin{align}
\label{eq:Conjugations}
S(x)^\dagger \mathcal{X}_{\omega} S(x) = \cos(\omega x)\mathcal{X}_{\omega}+\sin(\omega x)\mathcal{Y}_{\omega}, \quad
S(x)^\dagger \mathcal{Y}_{\omega} S(x) = \cos(\omega x)\mathcal{Y}_{\omega}-\sin(\omega x)\mathcal{X}_{\omega}.
\end{align}
\end{proposition}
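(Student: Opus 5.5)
The plan is to apply Lemma~\ref{lem:Hadamard's Lemma1} directly, with the scalar chosen to match the encoder. Since $S(x)=e^{-\frac{ix}{2}G}$, we have $S(x)^\dagger\mathcal{O}S(x)=e^{tG}\mathcal{O}e^{-tG}$ with $t=\tfrac{ix}{2}$, and therefore
\begin{align*}
S(x)^\dagger\mathcal{O}S(x)=\sum_{k=0}^{\infty}\frac{1}{k!}\,(t\,\mathrm{ad}_G)^{k}(\mathcal{O}).
\end{align*}
The first step is to record how the rescaled map $T:=t\,\mathrm{ad}_G$ acts on the plane $\mathrm{span}\{\mathcal{X}_\omega,\mathcal{Y}_\omega\}$. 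Using the hypotheses,
\begin{align*}
T\mathcal{X}_\omega=\tfrac{ix}{2}(-2i\omega)\mathcal{Y}_\omega=\omega x\,\mathcal{Y}_\omega,\qquad
T\mathcal{Y}_\omega=\tfrac{ix}{2}(2i\omega)\mathcal{X}_\omega=-\omega x\,\mathcal{X}_\omega .
\end{align*}
So, restricted to the plane and written in the basis $(\mathcal{X}_\omega,\mathcal{Y}_\omega)$, $T$ is $\omega x$ times the generator of planar rotations. In particular $T^2=-(\omega x)^2$ on the plane.

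The second step is an induction on $m$ that turns $T^2=-(\omega x)^2$ into closed forms for all powers:
\begin{align*}
T^{2m}\mathcal{X}_\omega=(-1)^m(\omega x)^{2m}\mathcal{X}_\omega,\qquad
T^{2m+1}\mathcal{X}_\omega=(-1)^m(\omega x)^{2m+1}\mathcal{Y}_\omega,
\end{align*}
and analogously $T^{2m}\mathcal{Y}_\omega=(-1)^m(\omega x)^{2m}\mathcal{Y}_\omega$ and $T^{2m+1}\mathcal{Y}_\omega=(-1)^{m+1}(\omega x)^{2m+1}\mathcal{X}_\omega$. Closure under $\mathrm{ad}_G$ is what keeps every iterate inside the two-dimensional span. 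Without it, the nested commutators could leave the plane and the series would not collapse.

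The third step substitutes these forms into the Hadamard series and splits it into even and odd $k$. The even terms give $\sum_m(-1)^m(\omega x)^{2m}/(2m)!=\cos(\omega x)$ and the odd terms give $\sum_m(-1)^m(\omega x)^{2m+1}/(2m+1)!=\sin(\omega x)$. This yields both identities in \eqref{eq:Conjugations}; the minus sign in the second comes from the extra $(-1)$ in $T^{2m+1}\mathcal{Y}_\omega$. Regrouping the series into even and odd parts is legitimate because $\mathcal{B}(\mathcal{H})$ is finite-dimensional, so the exponential series converges absolutely in operator norm.

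I expect the main obstacle to be purely sign and factor bookkeeping: the $\tfrac12$ in the encoder and the factor $2i\omega$ in the closure relations must cancel correctly, and the orientation must come out as $+\sin$ for $\mathcal{X}_\omega$ and $-\sin$ for $\mathcal{Y}_\omega$. I would double-check this with the example $G=Z$, $\mathcal{X}_\omega=X$, $\mathcal{Y}_\omega=-Y$, $\omega=1$.

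As an independent cross-check that avoids series manipulation, one can differentiate $F(x)=S(x)^\dagger\mathcal{X}_\omega S(x)$. This gives $F'(x)=\tfrac{i}{2}[G,F(x)]$. The proposed right-hand sides satisfy the same linear ODE system and agree at $x=0$, so uniqueness of solutions would also settle the claim.
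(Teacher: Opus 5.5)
Your proposal is correct and follows the paper's proof essentially step for step: Hadamard's lemma with $t=ix/2$, closure of the iterated commutators on $\mathrm{span}\{\mathcal{X}_\omega,\mathcal{Y}_\omega\}$, splitting the series into even and odd terms, and resumming into $\cos(\omega x)$ and $\sin(\omega x)$. Folding the scalar into $T=t\,\mathrm{ad}_G$ is only tidier bookkeeping than the paper's separate tracking of $(it/2)^k$ and $\mathrm{ad}_A^k$, and your ODE-uniqueness cross-check is also valid and could stand alone as a proof, though the paper does not use it.
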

}
Proposition~\ref{prop:2Dclosure} solves the evolution of a closed pair exactly as a rotation by the angle $\omega x$ within its plane,\footnote{Detailed proof of Proposition \ref{prop:2Dclosure} is provided in the Appendix \ref{appendix:closed_invariant}.} so $\omega$ is a \emph{frequency} of the model output in the strict trigonometric sense, fixed by the algebraic structure of the generator.
We note that Proposition~\ref{prop:2Dclosure} holds for any $G$. We now describe \emph{every} closed pair for single-qubit Pauli encodings, which single-qubit Clifford gates map to $G=\sum_{n=1}^{N}\hat{X}_{n}$, where $\hat{(\cdot)}_n\triangleq I^{\otimes (n-1)}\otimes(\cdot)\otimes I^{\otimes (N-n)}$ acts on qubit $n$.

\greybox{
\begin{proposition}[Complete Harmonic Decomposition]
\label{prop:planes}
Let $c=(S_{+},S_{-},T)$ be pairwise disjoint subsets of $\{1,\dots,N\}$ and set
\begin{equation}
E_{c}=\prod_{j\in S_{+}}E^{+}_{j}\prod_{j\in S_{-}}E^{-}_{j}
\prod_{j\in T}\hat{X}_{j},
\qquad
\omega(c)=|S_{+}|-|S_{-}|,
\qquad
m(c)=|S_{+}|+|S_{-}|.
\end{equation}
Then $\mathrm{ad}_{G}(E_{c})=-2\omega(c)E_{c}$, and the $4^{N}$ operators
$\{E_{c}\}$ form an orthogonal basis of $\mathcal{B}(\mathcal{H})$. For
$\omega(c)>0$ the Hermitian operators
$\mathcal{X}_{c}=E_{c}+E_{c}^{\dagger}$ and
$\mathcal{Y}_{c}=\tfrac{1}{i}(E_{c}-E_{c}^{\dagger})$ satisfy
$[G,\mathcal{X}_{c}]=-2i\omega\,\mathcal{Y}_{c}$ and
$[G,\mathcal{Y}_{c}]=2i\omega\,\mathcal{X}_{c}$, with
$\|\mathcal{X}_{c}\|^{2}=\|\mathcal{Y}_{c}\|^{2}=2^{m(c)+1}$.
\end{proposition}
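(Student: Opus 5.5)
The proof is a direct computation. It rests on three facts: $\mathrm{ad}_G$ is a derivation, $G=\sum_n \hat X_n$ is a sum of commuting single-site terms, and $\mathcal{B}(\mathcal{H})$ factorizes as a tensor product of single-qubit operator spaces. I take the ladder operators in the $X$-eigenbasis to be $E^{\pm}_j=\hat Z_j\pm i\hat Y_j$, up to the sign convention fixed by their definition. With this normalization the single-qubit computation gives the stated norm $2^{m(c)+1}$, which is why I choose it.

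\textbf{Step 1 (single-qubit eigenoperators).} From $[X,Z]=-2iY$ and $[X,Y]=2iZ$ one gets $[X,Z\pm iY]=\mp 2(Z\pm iY)$. Together with $[X,X]=[X,I]=0$, the four operators $\{I,X,E^+,E^-\}$ are eigenoperators of $\mathrm{ad}_X$ with eigenvalues $0,0,-2,+2$.

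\textbf{Step 2 (eigenvalue of $E_c$).} $E_c$ is a tensor product whose factors are $E^\pm_j$, $\hat X_j$, or $I$ on sites outside $S_+\cup S_-\cup T$. Since $\mathrm{ad}_G=\sum_n \mathrm{ad}_{\hat X_n}$ and each $\mathrm{ad}_{\hat X_n}$ acts only on the $n$-th tensor factor, the eigenvalues add. This gives $\mathrm{ad}_G(E_c)=-2(|S_+|-|S_-|)E_c=-2\omega(c)E_c$.

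\textbf{Step 3 (orthogonal basis).} With the normalized inner product $\tfrac12\mathrm{Tr}(A^\dagger B)$ on one qubit, I check that $\{I,X,E^+,E^-\}$ is pairwise orthogonal. For example, $\mathrm{Tr}(E^{+\dagger}E^-)=\mathrm{Tr}((Z-iY)^2)=\mathrm{Tr}(Z^2-Y^2-i\{Z,Y\})=0$, and the traces against $I$ and $X$ vanish because $Y$ and $Z$ are traceless and $\mathrm{Tr}(XY)=\mathrm{Tr}(XZ)=0$. The norms are $\|I\|^2=\|X\|^2=1$ and $\|E^\pm\|^2=\tfrac12\mathrm{Tr}(2I\pm 2X)=2$, using $(Z\mp iY)(Z\pm iY)=2I\pm 2X$.

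The Hilbert--Schmidt inner product factorizes over tensor products. Hence tensor products of single-site basis elements are pairwise orthogonal and have norm $\|E_c\|^2=2^{m(c)}$. The triples $c=(S_+,S_-,T)$ correspond bijectively to assignments of one of four labels to each site, so there are $4^N$ such operators. Being nonzero and orthogonal, they are linearly independent and therefore a basis of the $4^N$-dimensional space $\mathcal{B}(\mathcal{H})$.

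\textbf{Step 4 (the real pair).} Taking adjoints in $[G,E_c]=-2\omega E_c$ and using $G^\dagger=G$ gives $[G,E_c^\dagger]=2\omega E_c^\dagger$. This also shows $E_c\perp E_c^\dagger$ when $\omega>0$, since they are eigenvectors of the self-adjoint map $\mathrm{ad}_G$ (self-adjoint under the Hilbert--Schmidt product) with distinct eigenvalues. The commutators then follow:
\[
[G,\mathcal X_c]=-2\omega(E_c-E_c^\dagger)=-2i\omega\,\mathcal Y_c,
\qquad
[G,\mathcal Y_c]=\tfrac1i(-2\omega)(E_c+E_c^\dagger)=2i\omega\,\mathcal X_c .
\]
By orthogonality, $\|\mathcal X_c\|^2=\|E_c\|^2+\|E_c^\dagger\|^2=2^{m(c)+1}$, and the same holds for $\mathcal Y_c$.

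The main obstacle is bookkeeping rather than substance. The sign and normalization conventions for $E^\pm$ must match the claimed eigenvalue sign and norm, and the orthogonality of $E_c$ and $E_c^\dagger$ is where the hypothesis $\omega>0$ is used.
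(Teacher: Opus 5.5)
Your proposal is correct and follows the paper's proof essentially step for step: the single-qubit eigenoperators $Z\pm iY$ of $\mathrm{ad}_X$, additivity of the eigenvalues across tensor factors, factorized Hilbert--Schmidt orthogonality with the $4^N$ count, and the commutators of the Hermitian pair. The one small difference is that you obtain $E_c\perp E_c^\dagger$ from the self-adjointness of $\mathrm{ad}_G$ and the distinct eigenvalues $\mp2\omega$, whereas the paper observes that $E_c^\dagger=E_{\bar c}$ with $\bar c=(S_-,S_+,T)\neq c$ is another basis element; both arguments use $\omega>0$ in the same way.
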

}

Hence $\mathcal{B}(\mathcal{H})=\bigoplus_{\omega=-N}^{N}\mathcal{B}_{\omega}$, so every observable decomposes over these planes and no frequency component is left unaccounted for. Since $\left(E^{\pm}_{j}\right)^{2}=0$, each qubit contributes at
most one factor, giving $|\omega|\le m(c)\le N$ and recovering the qubit-count bound on accessible frequencies. We write $d_{\omega}$ for the number of planes at frequency $\omega$ and index them by $r$; the count is given in Appendix~\ref{appendix:defineclosedinvariant}.
The planes with $S_{-}=T=\emptyset$ saturate $|\omega|=m(c)$ and are the ones a design must reach to activate the highest frequencies.
Higher frequencies therefore require mixed multi-qubit Pauli strings rather than single Pauli operators. Visual intuition is given in Fig.~\ref{fig:invariantsubspaceprojection}\footnote{Detailed proof of Proposition \ref{prop:planes} is provided in Appendix~\ref{appendix:invariant_planes}.}.

\textbf{Fourier Series Under Closed Invariant Subspaces.} 
\begin{figure}
    \centering
    \includegraphics[width=1.0\linewidth]{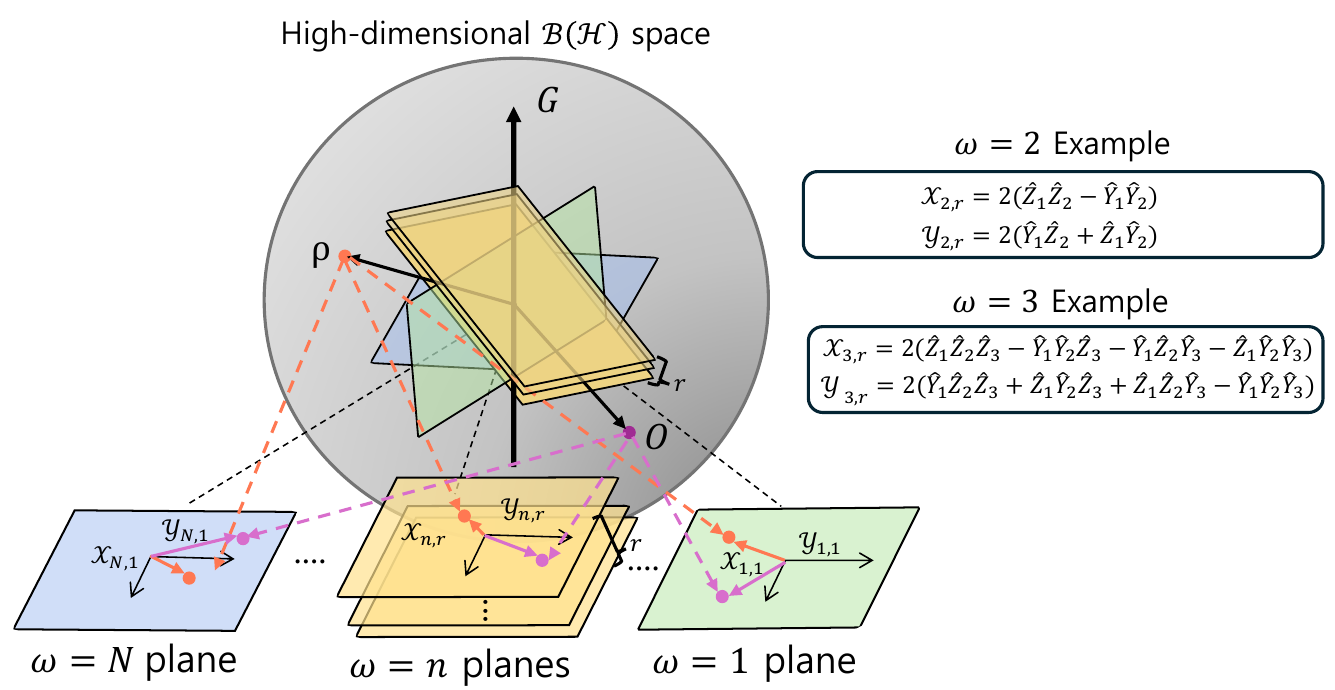}
    \caption{\textbf{Invariant subspace projection.} The operator space $\mathcal{B}(\mathcal{H})$ splits into invariant planes, each carrying one frequency, and the coefficient of that frequency is set by the Hilbert--Schmidt inner products of $\rho$ and $O$ with $\mathcal{X}_{\omega,r}$ and $\mathcal{Y}_{\omega,r}$. }
    \label{fig:invariantsubspaceprojection}
    \vspace{-6mm}
\end{figure}
We now show how the Fourier series arises from the closed pairs $\{\mathcal{X}_{\omega,r},\mathcal{Y}_{\omega,r}\}$.For
    $U(x)=W_LS(x)W_R$,
the identity $\langle\psi|B|\psi\rangle=\mathrm{Tr}(|\psi\rangle\langle\psi|B)$ gives
\begin{align}
    \label{eq:traceQuantumOutput}
    f(x)=\langle\psi|U(x)^\dagger MU(x)|\psi\rangle
    =\mathrm{Tr}(\underbrace{W_R\rho_0W_R^\dagger}_\rho S(x)^\dagger \underbrace{W_L^\dagger M W_L}_OS(x))=\mathrm{Tr}(\rho S(x)^\dagger O S(x)), 
\end{align}
where $\rho_0=|\psi\rangle\langle\psi|$ is the initial state, $M$ the measured observable and $O$ the effective observable. We decompose both $\rho$ and $O$ over all closed pairs $\{(\mathcal{X}_{\omega,r},\mathcal{Y}_{\omega,r})\}$.
By Proposition~\ref{prop:2Dclosure}, each pair $(\mathcal{X}_{\omega,r},\mathcal{Y}_{\omega,r})$ spans a two-dimensional subspace invariant under the adjoint action of $G$. Conjugation by $S(x)$ therefore rotates each plane by the angle $\omega x$, which yields the Fourier series of the following proposition.
\greybox{
\begin{proposition}[\textbf{Projection onto Invariant Subspaces}]
\label{proposition:multi-qubit_Projected}
Let $\{(\mathcal{X}_{\omega,r},\mathcal{Y}_{\omega,r})\}_{r=1}^{d_{\omega}}$ be the
invariant pairs, and let
$m_{\omega,r}=|S_{+}^{(r)}|+|S_{-}^{(r)}|$ denote the number of active qubits of
the $r$-th pair. These operators are mutually orthogonal with
$\langle\mathcal{X}_{\omega,r},\mathcal{X}_{\omega,r}\rangle
=\langle\mathcal{Y}_{\omega,r},\mathcal{Y}_{\omega,r}\rangle
=2^{\,m_{\omega,r}+1}$, and together with the commutant $\mathcal{B}_{0}$ they
span $\mathcal{B}(\mathcal{H})$. Consequently, for \emph{every} state $\rho$ and
\emph{every} observable $O$, the output decomposes into a constant term and contributions at frequencies $\omega\le N$:
\begin{equation}
f(x)=\mathrm{Tr}\bigl(\rho\,\Pi_{0}O\bigr)
+\sum_{\omega=1}^{N}\nolimits\sum_{r=1}^{d_{\omega}}\nolimits f_{\omega,r}(x),
\end{equation}
where $\Pi_{0}$ is the orthogonal projection onto $\mathcal{B}_{0}$ and
\begin{equation}
\label{eq:bilinear_form}
f_{\omega,r}(x)
=
2^{\,N-m_{\omega,r}-1}
\begin{bmatrix}
\langle \rho,\mathcal{X}_{\omega,r}\rangle\\
\langle \rho,\mathcal{Y}_{\omega,r}\rangle
\end{bmatrix}^{\top}
\begin{bmatrix}
\cos(\omega x) & -\sin(\omega x)\\
\sin(\omega x) & \cos(\omega x)
\end{bmatrix}
\begin{bmatrix}
\langle O,\mathcal{X}_{\omega,r}\rangle\\
\langle O,\mathcal{Y}_{\omega,r}\rangle
\end{bmatrix}.
\end{equation}
\end{proposition}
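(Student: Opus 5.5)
The plan is to expand the effective observable $O$ in the orthogonal basis supplied by Proposition~\ref{prop:planes}, conjugate each piece by $S(x)$ using Proposition~\ref{prop:2Dclosure}, and then contract with $\rho$ through the Hilbert--Schmidt inner product.

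\textbf{Step 1: orthogonality and spanning.} Proposition~\ref{prop:planes} gives an orthogonal basis $\{E_c\}$ of $\mathcal{B}(\mathcal{H})$, with $\mathrm{ad}_G E_c=-2\omega(c)E_c$. Taking adjoints gives $\mathrm{ad}_G E_c^\dagger = 2\omega(c)E_c^\dagger$, because $G$ is Hermitian. Moreover $E_c^\dagger=E_{\bar c}$ with $S_+$ and $S_-$ swapped, so $\bar c$ has frequency $-\omega(c)$. I would group the basis into two kinds of elements:
\begin{itemize}
\item the elements with $\omega=0$, which span the commutant $\mathcal{B}_0=\ker \mathrm{ad}_G$;
\item the unordered pairs $\{E_c,E_c^\dagger\}$ with $\omega(c)>0$.
\end{itemize}
Within each pair, the Hermitian combinations $\mathcal{X}_c$ and $\mathcal{Y}_c$ span the same complex space as $E_c$ and $E_c^\dagger$. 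We have $\langle \mathcal{X}_c,\mathcal{Y}_c\rangle=0$, because $\langle E_c,E_c^\dagger\rangle=0$ and $\|E_c\|=\|E_c^\dagger\|$ make the cross terms cancel. Different pairs are orthogonal because they are built from distinct basis elements $E_c$. The norms $2^{m+1}$ are already given by Proposition~\ref{prop:planes}. Counting dimensions, $\mathcal{B}_0$ together with the $\mathcal{X}$'s and $\mathcal{Y}$'s accounts for all $4^N$ dimensions, so these operators span $\mathcal{B}(\mathcal{H})$.

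\textbf{Step 2: expand $O$.} Write
\[
O=\Pi_0 O+\sum_{\omega,r}\frac{\langle \mathcal{X}_{\omega,r},O\rangle\,\mathcal{X}_{\omega,r}+\langle \mathcal{Y}_{\omega,r},O\rangle\,\mathcal{Y}_{\omega,r}}{2^{m_{\omega,r}+1}}.
\]
Since $O$ and the basis operators are Hermitian, all coefficients are real.

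\textbf{Step 3: conjugate by $S(x)$.} Every element of $\mathcal{B}_0$ commutes with $G$, so $S(x)^\dagger\Pi_0O\,S(x)=\Pi_0 O$. On each plane, Proposition~\ref{prop:2Dclosure} applies directly, because Proposition~\ref{prop:planes} verified its hypotheses. Writing $a=\langle O,\mathcal{X}\rangle$ and $b=\langle O,\mathcal{Y}\rangle$, conjugation sends $a\mathcal{X}+b\mathcal{Y}$ to
\[
(a\cos\omega x-b\sin\omega x)\,\mathcal{X}+(a\sin\omega x+b\cos\omega x)\,\mathcal{Y}.
\]
This is the rotation matrix $\begin{bmatrix}\cos & -\sin\\ \sin & \cos\end{bmatrix}$ acting on $(a,b)^\top$.

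\textbf{Step 4: contract with $\rho$.} Use $\mathrm{Tr}(\rho A)=2^N\langle\rho,A\rangle$, valid because $\rho$ is Hermitian. Each plane then contributes $2^N/2^{m_{\omega,r}+1}=2^{N-m_{\omega,r}-1}$ times the stated bilinear form, and the commutant contributes $\mathrm{Tr}(\rho\,\Pi_0O)$. Summing over $r$ and over $\omega=1,\dots,N$ gives the claim. The bound $\omega\le N$ comes from $|\omega|\le m(c)\le N$ in Proposition~\ref{prop:planes}.

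\textbf{Main obstacle.} The delicate point is the bookkeeping in Step 1. One must check that using only $\omega>0$ double-counts nothing and misses nothing. One must also check that $\mathcal{X}_c$ and $\mathcal{Y}_c$ are genuinely orthogonal, not merely linearly independent, since the normalization $2^{\,N-m-1}$ depends on this. I would settle both points by the explicit pairing $c\leftrightarrow\bar c$ and the computation $\langle E_c^\dagger,E_c\rangle=0$. Finally, I would recheck the sign convention of the rotation against Eq.~\eqref{eq:Conjugations}.
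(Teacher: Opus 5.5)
Your proposal is correct and follows the paper's proof essentially step for step: write the trace as $2^{N}\langle\rho,\cdot\rangle$, expand $O$ by orthogonal projection with the $2^{m_{\omega,r}+1}$ norm factor, note that $\Pi_0 O$ commutes with $S(x)$, rotate each plane via Proposition~\ref{prop:2Dclosure}, and contract with $\rho$. The only difference is that you re-derive the orthogonality and spanning bookkeeping through the pairing $c\leftrightarrow\bar c$, whereas the paper takes this from Steps 4--5 of the proof of Proposition~\ref{prop:planes}.
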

}
The constant term $\mathrm{Tr}(\rho\Pi_{0} O)$ is the $\omega=0$ component, the part of $O$ that commutes with $G$. With $\mathbf{a}_{\omega,r}$ and $\mathbf{b}_{\omega,r}$ the projection vectors of $\rho$ and $O$ in \eqref{eq:bilinear_form}, each component is a single sinusoid
\begin{equation}
\label{eq:amplitude}
f_{\omega,r}(x)=2^{\,N-m_{\omega,r}-1}\,\|\mathbf{a}_{\omega,r}\|\,\|\mathbf{b}_{\omega,r}\|\cos(\omega x+\varphi_{\omega,r}),
\end{equation}
where $\varphi_{\omega,r}$ is the angle between the two projections (Fig.~\ref{fig:invariantsubspaceprojection}). We call $\omega$ \emph{reachable} if its coefficient $c_{\omega}$ in $f(x)=\sum_{|\omega|\le N}c_{\omega}e^{i\omega x}$ is not identically zero over the trainable parameters. It is unreachable if $\mathbf{b}_{\omega,r}=0$ for every $r$ at every parameter setting. Planes at the same $\omega$ can cancel, so the converse needs proof, and Corollary~\ref{cor:activation} below shows that such cancellation is confined to a set of parameters of measure zero.

\noindent\textbf{Multi-Qubit Entanglement Design.} Proposition~\ref{proposition:multi-qubit_Projected} says that a frequency reaches the output only if both $\rho$ and $O$ project onto one of its planes, and Proposition~\ref{prop:planes} says that every plane at frequency $\omega$ involves at least $\omega$ active qubits, each carrying a $\hat{Y}$ or $\hat{Z}$ factor. A local readout $M=\hat{Z}_{n}$ has a single active qubit, and single-qubit rotations keep it on qubit $n$, so an effective observable built from local gates alone has zero projection onto every plane at $\omega>1$, however its parameters are trained. Only entangling gates can add active qubits, so the design question is how an entangler transforms the readout and where it should sit relative to the local rotations. We restrict the entangler to the couplings the hardware allows, described by an interaction graph $\mathcal{G}=(V,E)$ with $V=\{1,\dots,N\}$ and neighbour set $\mathcal{N}(n)=\{k:\{n,k\}\in E\}$, and use the Ising entangler \citep{briegel2001persistent}, which represents the entangler class above up to single-qubit Clifford gates, $CZ$ included. Its generator is diagonal and built from commuting $\hat{Z}_{a}\hat{Z}_{b}$ couplings, so only the couplings at qubit $n$ act on its Pauli operators, and their action has the closed form of Lemma~\ref{lem:ising}.

\begin{lemma}[Ising Entanglement]
\label{lem:ising}
For a qubit $n$, define
\begin{equation}
U_{\mathrm{ent}}=\exp\Bigl(-\tfrac{i}{2}\sum_{k\in\mathcal{N}(n)}\nolimits
\gamma_{nk}\hat{Z}_{n}\hat{Z}_{k}\Bigr)
=\exp\Bigl(-\tfrac{i}{2}\hat{Z}_{n}\Phi_{n}\Bigr),
\qquad
\Phi_{n}=\sum_{k\in\mathcal{N}(n)}\nolimits\gamma_{nk}\hat{Z}_{k}.
\end{equation}
Then $U_{\mathrm{ent}}^{\dagger}\hat{Z}_{n}U_{\mathrm{ent}}=\hat{Z}_{n}$,
$U_{\mathrm{ent}}^{\dagger}\hat{X}_{n}U_{\mathrm{ent}}
=\hat{X}_{n}\cos\Phi_{n}-\hat{Y}_{n}\sin\Phi_{n}$, and
$U_{\mathrm{ent}}^{\dagger}\hat{Y}_{n}U_{\mathrm{ent}}
=\hat{Y}_{n}\cos\Phi_{n}+\hat{X}_{n}\sin\Phi_{n}$.
\end{lemma}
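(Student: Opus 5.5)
The plan is to reduce everything to a single-qubit rotation about $\hat{Z}_n$ whose angle is the operator $\Phi_n$, which commutes with every Pauli acting on qubit $n$. First, the two expressions for $U_{\mathrm{ent}}$ agree: $\hat{Z}_n\hat{Z}_k$ factors as $\hat{Z}_n$ times $\hat{Z}_k$, and all couplings share the factor $\hat{Z}_n$, so the sum in the exponent equals $\hat{Z}_n\Phi_n$. Since $\Phi_n$ is diagonal and supported on qubits $k\neq n$ (the graph has no self-loops), it commutes with $\hat{Z}_n$, $\hat{X}_n$, and $\hat{Y}_n$. The exponent $A=\tfrac12\hat{Z}_n\Phi_n$ is diagonal, so $U_{\mathrm{ent}}$ commutes with $\hat{Z}_n$. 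This gives the first identity, $U_{\mathrm{ent}}^\dagger\hat{Z}_nU_{\mathrm{ent}}=\hat{Z}_n$.

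For $\hat{X}_n$, I will use Lemma~\ref{lem:Hadamard's Lemma1} with $U_{\mathrm{ent}}^\dagger\hat{X}_nU_{\mathrm{ent}}=e^{iA}\hat{X}_ne^{-iA}$, that is, with generator $iA$ and $t=1$. The key computation is the first commutator. Because $\Phi_n$ commutes with $\hat{X}_n$,
\[
[\hat{Z}_n\Phi_n,\hat{X}_n]=\Phi_n[\hat{Z}_n,\hat{X}_n]=2i\,\Phi_n\hat{Y}_n .
\]
Similarly, $[\hat{Z}_n\Phi_n,\hat{Y}_n]=-2i\,\Phi_n\hat{X}_n$. Hence $\mathrm{ad}_{iA}$ maps $\hat{X}_n\mapsto-\Phi_n\hat{Y}_n$ and $\hat{Y}_n\mapsto\Phi_n\hat{X}_n$. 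Since $\Phi_n$ commutes with everything involved, iterating gives
\[
\mathrm{ad}_{iA}^{2k}(\hat{X}_n)=(-1)^k\Phi_n^{2k}\hat{X}_n,
\qquad
\mathrm{ad}_{iA}^{2k+1}(\hat{X}_n)=-(-1)^k\Phi_n^{2k+1}\hat{Y}_n .
\]
Summing the series from Lemma~\ref{lem:Hadamard's Lemma1} and reading the even and odd parts as the power series of $\cos\Phi_n$ and $\sin\Phi_n$ gives $\hat{X}_n\cos\Phi_n-\hat{Y}_n\sin\Phi_n$. The $\hat{Y}_n$ identity follows in the same way with the opposite sign, giving $\hat{Y}_n\cos\Phi_n+\hat{X}_n\sin\Phi_n$. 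This is the structure of Proposition~\ref{prop:2Dclosure} with the scalar $\omega x$ replaced by the commuting operator $\Phi_n$.

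The main obstacle is sign and convention bookkeeping: keeping $U^\dagger(\cdot)U$ consistent with $e^{tG}(\cdot)e^{-tG}$, and checking $[Z,X]=2iY$ and $[Z,Y]=-2iX$. I will verify the signs by a sanity check with a single neighbour and $\gamma$ small, comparing against a first-order expansion. A second point is that $\cos\Phi_n$ and $\sin\Phi_n$ must be understood as operator functions, which is legitimate because $\Phi_n$ is Hermitian and diagonal. Alternatively, one can work on each computational-basis sector of the neighbours, where $\Phi_n$ acts as a scalar $\phi=\sum_k\pm\gamma_{nk}$. There the statement is just the single-qubit identity for $R_Z(\phi)$, and this gives an independent check of the signs.
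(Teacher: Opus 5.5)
Your proof is correct and follows the paper's own argument. Both factor the generator as $\hat{Z}_n\Phi_n$ with $\Phi_n$ commuting with every Pauli on qubit $n$, apply Hadamard's lemma with generator $\tfrac{i}{2}\hat{Z}_n\Phi_n$, compute the nested commutators by induction, and resum the even and odd terms into $\cos\Phi_n$ and $\sin\Phi_n$; absorbing the factor $\tfrac{i}{2}$ into $\mathrm{ad}_{iA}$ up front and the sector-by-sector sign check are only bookkeeping variants.
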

\begin{figure}
    \centering
    \includegraphics[width=.92\linewidth]{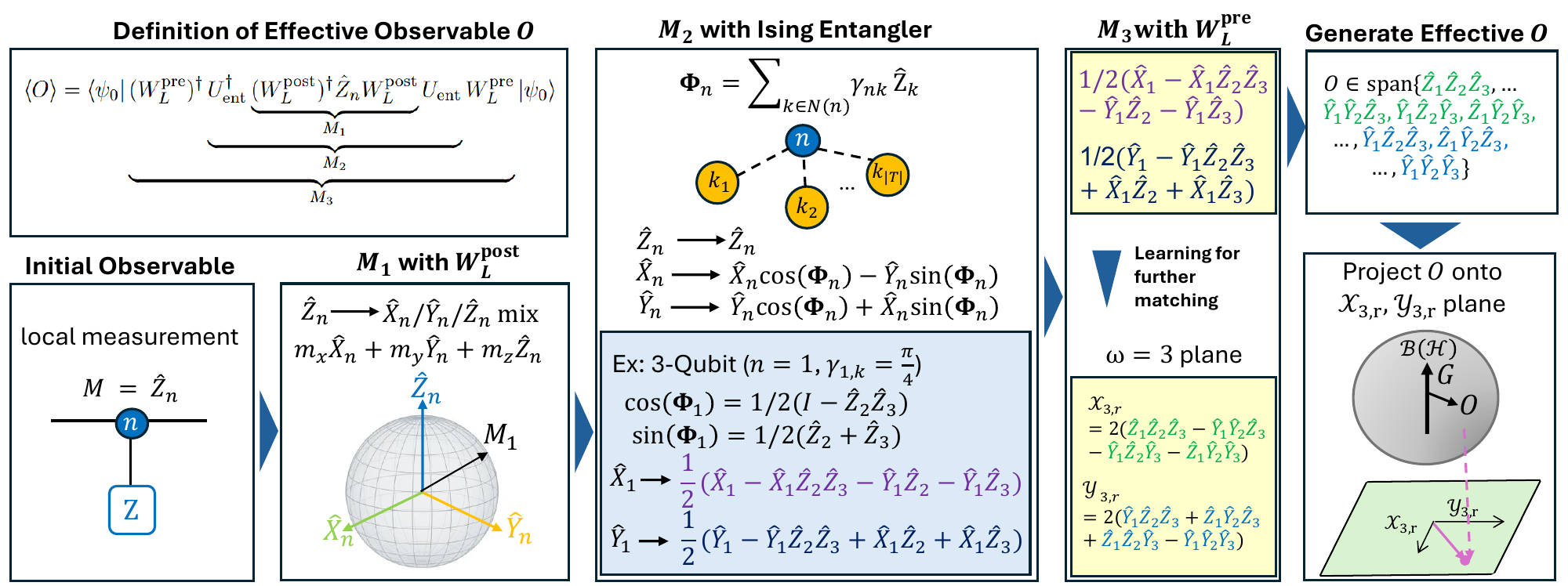}
    \caption{\textbf{Design flow for generating an effective observable $O$.} Local rotations and Ising entanglement transform the initial readout $Z_n$ into mixed Pauli strings, allowing $O$ to project onto the target $\omega=3$ invariant plane.}
    \label{fig:Design_flow}
    \vspace{-4mm}
\end{figure}
The three identities answer the design question. The first shows that the entangler leaves $\hat{Z}_{n}$ unchanged, so an entangler placed directly before the readout has no effect on $O$. The other two show that it rotates $\hat{X}_{n}$ and $\hat{Y}_{n}$ by the operator-valued angle $\Phi_{n}$, and since $\cos\Phi_{n}$ and $\sin\Phi_{n}$ expand into $\hat{Z}$-strings over $\mathcal{N}(n)$, every neighbour in these strings adds an active qubit. The entangler can therefore create support at higher frequencies, but only from the $\hat{X}_{n}$ and $\hat{Y}_{n}$ components, which the bare readout lacks. A local rotation must first create these components and hence sit between the readout and the entangler, while a second local rotation on the other side can reshape the strings the entangler produces. This gives the trainable block $W_{L}=W_{L}^{\mathrm{post}}U_{\mathrm{ent}}W_{L}^{\mathrm{pre}}$, and we use the same form $W_{R}=W_{R}^{\mathrm{post}}U_{\mathrm{ent}}W_{R}^{\mathrm{pre}}$ on the state side. Reading $O=W_{L}^{\dagger}\hat{Z}_{n}W_{L}$ from the inside out gives three stages, illustrated in Fig.~\ref{fig:Design_flow} for $\omega=3$ and derived for any $N$ and graph in Appendix~\ref{app:Nqubit}.
\begin{enumerate}[leftmargin=*,itemsep=2pt,topsep=3pt]
\item \textbf{Expose.} $W_{L}^{\mathrm{post}}$ tilts the readout into $M_{1}=m_{x}\hat{X}_{n}+m_{y}\hat{Y}_{n}+m_{z}\hat{Z}_{n}$, creating the $\hat{X}_{n}$ and $\hat{Y}_{n}$ components that the entangler acts on. Without this stage the next one is the identity.
\item \textbf{Spread.} $U_{\mathrm{ent}}$ multiplies these components by $\cos\Phi_{n}$ and $\sin\Phi_{n}$, attaching $\hat{Z}$-strings over $\mathcal{N}(n)$ and producing strings on $\{n\}\cup\mathcal{N}(n)$ with up to $\deg(n)+1$ active qubits. The graph $\mathcal{G}$ fixes which qubits can appear, so one round caps the reachable frequency at $\deg(n)+1$ for any $N$ and any graph, which gives $\omega=3$ for the star graph of Fig.~\ref{fig:Design_flow}. The string that reaches this ceiling has an overlap with its plane proportional to $\prod_{k}|\sin\gamma_{nk}|$ (Appendix~\ref{app:Nqubit}), so by \eqref{eq:amplitude} the strengths $\gamma_{nk}$ set the amplitude of the top frequency, which is largest for $CZ$ couplings, $|\gamma_{nk}|=\pi/2$.
\item \textbf{Align.} $W_{L}^{\mathrm{pre}}$ rotates the Pauli factors on this support without changing it, reweighting the strings to increase the projection of $O$ onto the target plane, as the matching stage of $W_{R}$ does for $\rho$.
\end{enumerate}
\begin{corollary}[Sufficient Condition for a Nonzero Coefficient]
\label{cor:activation}
Let the local blocks be general single-qubit rotations and $\rho_{0}$ a pure product state. If $\mathbf{b}_{\omega,r}\neq0$ for some plane at frequency $\omega\ge1$ at one parameter setting, then $c_{\omega}\neq0$ at almost every parameter setting. For $W_{L}=W_{L}^{\mathrm{post}}U_{\mathrm{ent}}W_{L}^{\mathrm{pre}}$ with $\sin\gamma_{nk}\neq0$, this holds for every $\omega\le\deg(n)+1$.
\end{corollary}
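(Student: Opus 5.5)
The plan is to turn the pointwise hypothesis $\mathbf{b}_{\omega,r}\neq0$ into the statement that $c_{\omega}$, viewed as a function of all trainable parameters, is a real-analytic function that is not identically zero. The standard fact that the zero set of a nonzero real-analytic function on $\mathbb{R}^{p}$ has Lebesgue measure zero then gives the result. Analyticity is immediate. Every gate is $e^{-i\theta P/2}$, so $\rho$ and $O$ have entries that are trigonometric polynomials in $\Theta$. By Proposition~\ref{proposition:multi-qubit_Projected}, $c_{\omega}$ is a finite bilinear combination of the projections $\langle\rho,\mathcal{X}_{\omega,r}\rangle,\langle\rho,\mathcal{Y}_{\omega,r}\rangle,\langle O,\mathcal{X}_{\omega,r}\rangle,\langle O,\mathcal{Y}_{\omega,r}\rangle$, and is therefore itself a trigonometric polynomial in $\Theta$.

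\textbf{Non-vanishing, which I expect to be the main obstacle.} The difficulty is that $\mathbf{b}_{\omega,r}\neq0$ does not by itself exclude two failure modes: the state side could have zero projection, or different planes $r$ at the same $\omega$ could cancel. I would handle both through the state side, as follows.
\begin{enumerate}[leftmargin=*,itemsep=2pt,topsep=3pt]
\item Fix $\boldsymbol{\theta}^{(L)}$ at the setting where $\mathbf{b}_{\omega,r}\neq0$.
\item Using $E_{c}$ from Proposition~\ref{prop:planes}, write $c_{\omega}=\mathrm{Tr}(\rho\,Q_{\omega})$. Here $Q_{\omega}$ is the component of $O$ in $\mathcal{B}_{\omega}$, i.e.\ the $E_{c}$-expansion of $O$ restricted to $\omega(c)=\omega$. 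This component is nonzero because the $E_{c}$ are orthogonal and some $\langle O,\mathcal{X}_{\omega,r}\rangle$ or $\langle O,\mathcal{Y}_{\omega,r}\rangle$ is nonzero.
\item Write $W_{R}=W_{R}^{\mathrm{post}}U_{\mathrm{ent}}W_{R}^{\mathrm{pre}}$. Since $\rho_{0}$ is a pure product state and the local blocks are general single-qubit rotations, $W_{R}^{\mathrm{pre}}\rho_{0}W_{R}^{\mathrm{pre}\dagger}$ ranges over all pure product states.
\item Pure single-qubit states span the $2\times2$ matrices, so pure product states span $\mathcal{B}(\mathcal{H})$. Conjugation by the fixed unitary $W_{R}^{\mathrm{post}}U_{\mathrm{ent}}$ is a linear bijection, so the reachable $\rho$ still span $\mathcal{B}(\mathcal{H})$.
\item If $\mathrm{Tr}(\rho Q_{\omega})$ vanished for every reachable $\rho$, then $Q_{\omega}=0$, contradicting step 2. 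Hence $c_{\omega}\not\equiv0$, and by analyticity $c_{\omega}\neq0$ almost everywhere.
\end{enumerate}
This argument never needs to control the cancellation among planes explicitly; it absorbs it into the single operator $Q_{\omega}$.

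\textbf{Second claim.} For $W_{L}=W_{L}^{\mathrm{post}}U_{\mathrm{ent}}W_{L}^{\mathrm{pre}}$ it suffices, by the first part, to exhibit one parameter setting with $\mathbf{b}_{\omega,r}\neq0$ for each $1\le\omega\le\deg(n)+1$.
\begin{enumerate}[leftmargin=*,itemsep=2pt,topsep=3pt]
\item Choose $W_{L}^{\mathrm{post}}$ so that $M_{1}=\hat{X}_{n}$ (the Expose stage).
\item Apply Lemma~\ref{lem:ising} and expand $\cos\Phi_{n},\sin\Phi_{n}$ using commuting $\hat{Z}_{k}$'s with $\hat{Z}_{k}^{2}=I$. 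The string carrying $\hat{Z}_{k}$ on every neighbour has coefficient $\pm\prod_{k}\sin\gamma_{nk}\neq0$. Distinct Pauli strings are orthogonal, so this string appears in $O$ with a nonzero weight that no other term can cancel.
\item Choose $W_{L}^{\mathrm{pre}}$ as single-qubit Cliffords acting on $\{n\}\cup\mathcal{N}(n)$. These map $\deg(n)+1-\omega$ of the non-$I$ factors of this string to $\hat{X}$ and leave the remaining $\omega$ factors in $\{\hat{Y},\hat{Z}\}$. The result is a single Pauli string with exactly $\omega$ factors in $\{\hat{Y},\hat{Z}\}$ and $\hat{X}$ or $I$ elsewhere, carrying the same nonzero coefficient.
\item Each $\hat{Y}_{j},\hat{Z}_{j}$ lies in $\mathrm{span}\{E^{+}_{j},E^{-}_{j}\}$. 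Expanding the string in the $E_{c}$ basis, the term with $S_{+}$ equal to its $\{\hat{Y},\hat{Z}\}$-support, $S_{-}=\emptyset$ and $T$ equal to its $\hat{X}$-support appears with nonzero coefficient. That term sits at frequency $\omega(c)=\omega$.
\item The Clifford conjugation of step 3 maps the other strings of the expansion in step 2 to other Pauli strings, which stay orthogonal to our string. However, they need not be orthogonal to this particular $E_{c}$. I would sidestep this by not insisting on the Clifford point. Instead I would argue that $\langle O,\mathcal{X}_{c}\rangle$, as an analytic function of the pre-rotation angles, has a nonzero Fourier component coming from our string, which is distinguishable by its dependence on those angles. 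This shows $\mathbf{b}_{\omega,r}\not\equiv0$, which is all the first part requires.
\end{enumerate}
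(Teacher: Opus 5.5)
\textbf{Part (i).} This is the paper's argument almost verbatim. You write the coefficient as $\mathrm{Tr}(\rho\,Q_{\omega}(O))$ and fix $\boldsymbol{\theta}^{(L)}$ so that $Q_{\omega}(O)\neq0$. You then use that the reachable states $V\sigma V^{\dagger}$, with $\sigma$ ranging over pure product states, span $\mathcal{B}(\mathcal{H})$. You finish with the measure-zero zero set of a nonzero trigonometric (analytic) function.

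\textbf{Part (ii).} You also follow the paper's construction: Expose onto $\hat{X}_{n},\hat{Y}_{n}$, take the string with $\hat{Z}$ on every neighbour and weight $\pm\prod_{k}\sin\gamma_{nk}$, and rotate $\deg(n)+1-\omega$ of its factors to $\hat{X}$. However, step 5 leaves the decisive claim unproved. You suggest that the images of the other strings of $M_{2}$ might overlap $E_{c}$. You then replace the computation by the assertion that the top string is ``distinguishable by its dependence on the angles,'' which you never show. As written, the proposal does not establish $\mathbf{b}_{\omega,r}\neq0$ at any setting. That assertion could be made rigorous through linear independence of the $SO(3)$ matrix coefficients on each factor, but this is unnecessary.

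\textbf{The missing support count.} The obstacle is not real, and what is missing is a support count, which is exactly what the paper uses.
\begin{enumerate}
\item Take $c=(S_{+},\emptyset,T)$ with $S_{+}\cup T=L=\{n\}\cup\mathcal{N}(n)$. Then $E_{c}$ carries $E^{+}_{j}$ or $\hat{X}_{j}$ on each $j\in L$ and $\hat{I}$ elsewhere.
\item Since $\langle \hat{I},E^{\pm}\rangle_{1}=\langle\hat{I},\hat{X}\rangle_{1}=0$ and every non-identity Pauli is orthogonal to $\hat{I}$, a Pauli string can overlap $E_{c}$ only if its support is exactly $L$.
\item Conjugation by a product of single-qubit unitaries preserves the support of every Pauli string.
\item Every string of $M_{2}$ carries $\hat{I}$ or $\hat{Z}$ on each neighbour. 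With $M_{1}=\hat{X}_{n}$, the only string with support $L$ is your top string, and all others have $\hat{I}$ on some neighbour.
\end{enumerate}
Hence at your Clifford point the image of the top string is the only term of $O$ that meets $E_{c}$. A matching Pauli string has overlap of modulus $1$ with $E_{c}$, so $\|\mathbf{b}_{\omega,r}\|=2\prod_{k}|\sin\gamma_{nk}|\neq0$. This is the paper's value $2|\mu|\prod_{k}|\sin\gamma_{nk}|$ with $|\mu|=1$. The paper phrases it as follows: $W_{L}^{\mathrm{pre}}$ maps $\hat{Z}_{k}\mapsto\hat{X}_{k}$ on a set $K$ of $\deg(n)+1-\omega$ neighbours and acts trivially elsewhere. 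Then only $\hat{Y}_{n}\hat{Z}_{\mathcal{N}(n)\setminus K}\hat{X}_{K}$ overlaps $E_{c}$ for $c=(L\setminus K,\emptyset,K)$.

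\textbf{A parity detail in step 3.} With $M_{1}=\hat{X}_{n}$, the top string is $\hat{Y}_{n}\hat{Z}_{\mathcal{N}(n)}$ only when $\deg(n)$ is odd. For even $\deg(n)$ it is $\hat{X}_{n}\hat{Z}_{\mathcal{N}(n)}$. So at $\omega=\deg(n)+1$ the factor on $n$ does not ``remain'' in $\{\hat{Y},\hat{Z}\}$. You must either rotate it there with the Clifford on $n$, or take $M_{1}=\hat{Y}_{n}$, which is the paper's choice $\mu=m_{y}$ for even degree.
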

The three stages are therefore also sufficient, and they set the magnitude as well. With $CZ$ couplings and the same entangler in $W_{R}$, the circuit can output $f(x)=\cos(\omega x)$ for any single $\omega\le\deg(n)+1$, whereas any product state $\rho$ limits the amplitude at $\omega=\deg(n)+1$ to $2^{-\deg(n)}$ (Appendix~\ref{app:activation}).

\textbf{Design procedure.} Given the largest frequency $\omega^{\star}$ a task needs, we measure a qubit with $\deg(n)\ge\omega^{\star}-1$ and place Expose, Spread and Align around its couplings on both sides of the encoder (Algorithm~\ref{alg:design}). The procedure reads only the degrees of $\mathcal{G}$, runs in $O(|E|)$ time and never enumerates the $4^{N}$-dimensional operator space. When the spectrum is unknown, $\omega^{\star}$ can be estimated from a discrete Fourier transform of the training targets \citep{boyd2001chebyshev,xu2024fits}, and our experiments use $\omega^{\star}=N$, for which the rule returns the star centered on the measured qubit.

\section{Experiments}
\textbf{Experiment Setup.} We evaluate the proposed design on sinusoid approximation, supervised regression of the PDEBench advection and Burgers' equations \citep{takamoto2022pdebench}, a placement ablation, a noise study and the physics-informed electromagnetic wave equation. We compare against the Basic Entangler and Strongly Entangling layers of PennyLane \citep{bergholm2018pennylane}, widely used in QNN studies \citep{Strong1,Strong2,Strong3,Strong4}, Circuits 15--19 of \citet{sim2019expressibility}, the circular hardware-efficient ansatz \citep{kandala2017hardware,strobl2025fourier} and the YZY ans\"atze of \citet{strobl2025fourier} (Appendix~\ref{appendix:baseline_circuits}). Unless stated otherwise, every circuit encodes its inputs once, which is the setting of our analysis, and a template reaches the parameter budget of the proposed circuit by stacking processing layers. Appendix~\ref{app:reupload} repeats the advection comparison and the noise study with data re-uploading, which lies beyond our analysis. Variants marked ($\gamma$) also train the entangling strengths $\gamma_{nk}$ of Lemma~\ref{lem:ising}, both proposed variants use the same two-qubit gates, and 2q in the tables counts them. Regression is scored by the MSE divided by the variance of the target, so a constant predictor scores $1$. The physics-informed tasks are scored by the relative $L^2$ error $\|\hat{u}_\theta-u\|_2/\|u\|_2$ between the prediction $\hat{u}_\theta$ and the analytic solution $u$ on an evaluation grid, which equals $1$ for the zero field, and a model solves a task when this error falls below $0.1$. We report mean $\pm$ s.d.\ over seeds and compare models with Welch's $t$-test under Holm correction.
\begin{table*}[t]
\centering
\small
\caption{Effect of $O$ and $\rho$ on PQC. }
\renewcommand{\arraystretch}{1.0}
\resizebox{0.95\linewidth}{!}{
\begin{tabular}{c|c||c|c}
\toprule[1pt]
Circuit (i) (Proposed) & $f_{\rm target,1}(x)$ & Circuit (ii) & $f_{\rm target,1}(x)$
\\ 
\midrule[1pt]
\includegraphics[width=0.26\linewidth]{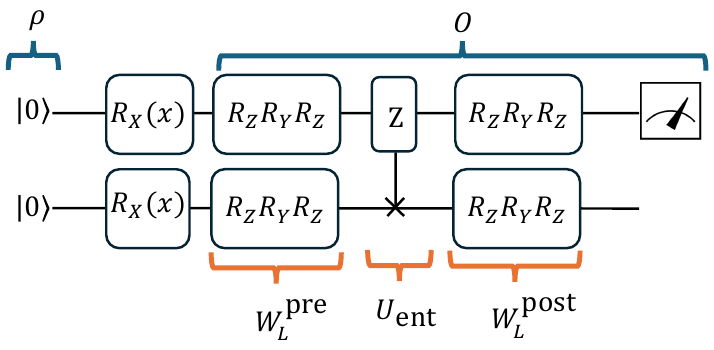}     & \includegraphics[width=0.22\linewidth]{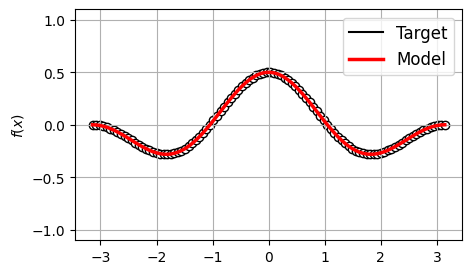}  &   
\includegraphics[width=0.26\linewidth]{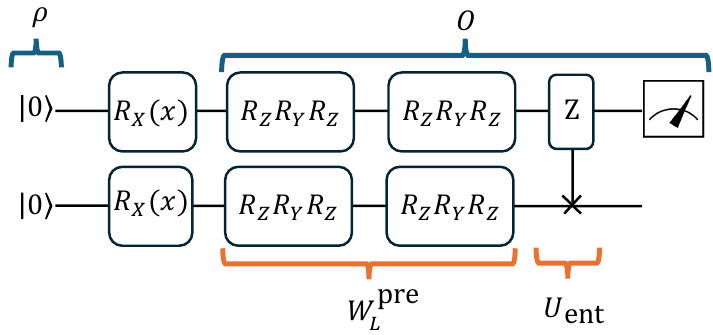}     & \includegraphics[width=0.22\linewidth]{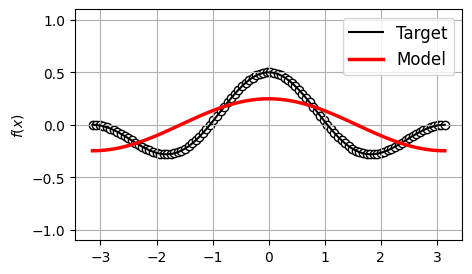}     \\   

\toprule[1pt]
Circuit (iii) (Proposed) & $f_{\rm target,2}(x)$ & Circuit (iv) & $f_{\rm target,2}(x)$
\\ 
\midrule[1pt]
\includegraphics[width=0.27\linewidth]{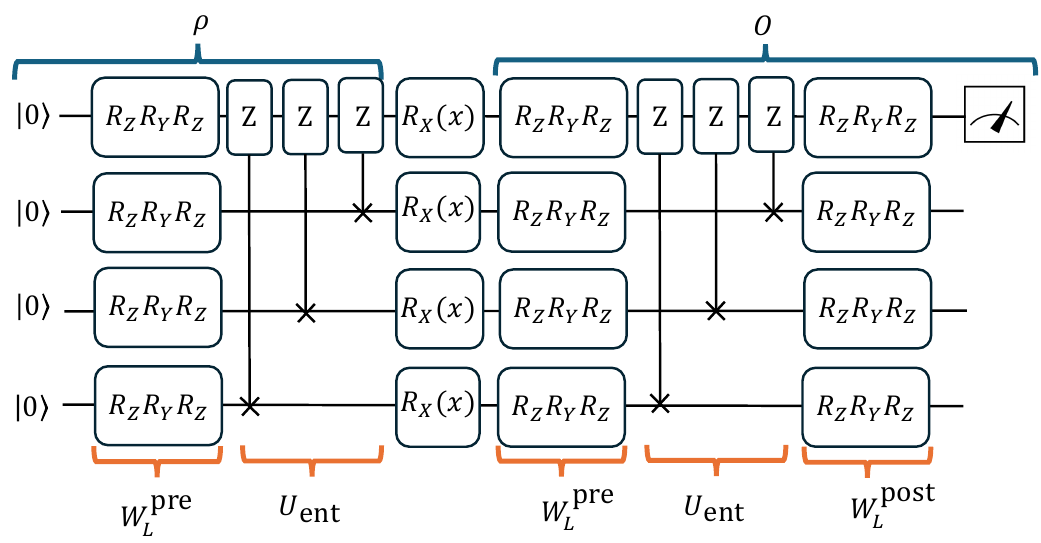}     & \includegraphics[width=0.22\linewidth]{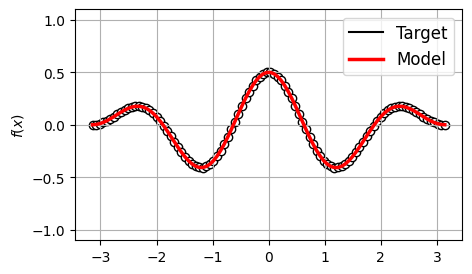} &

\includegraphics[width=0.26\linewidth]{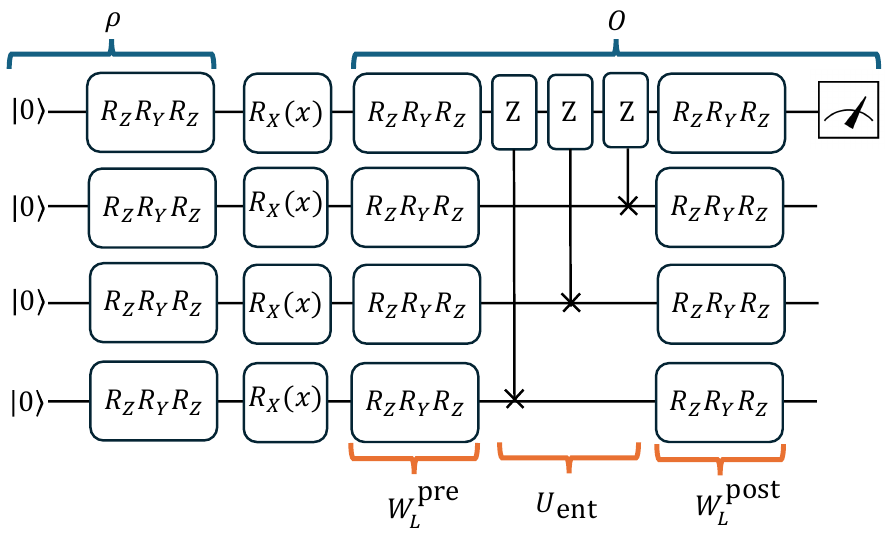}     & \includegraphics[width=0.22\linewidth]{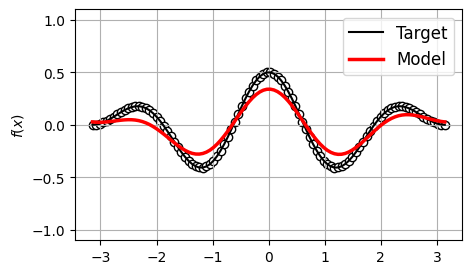} \\

\bottomrule[1pt]
\end{tabular}
}
\label{tab:table2}
\vspace{-3mm}
\end{table*}
\begin{figure}[t]
    \centering

    \begin{subfigure}[t]{0.36\linewidth}
        \centering
        \includegraphics[width=\linewidth]{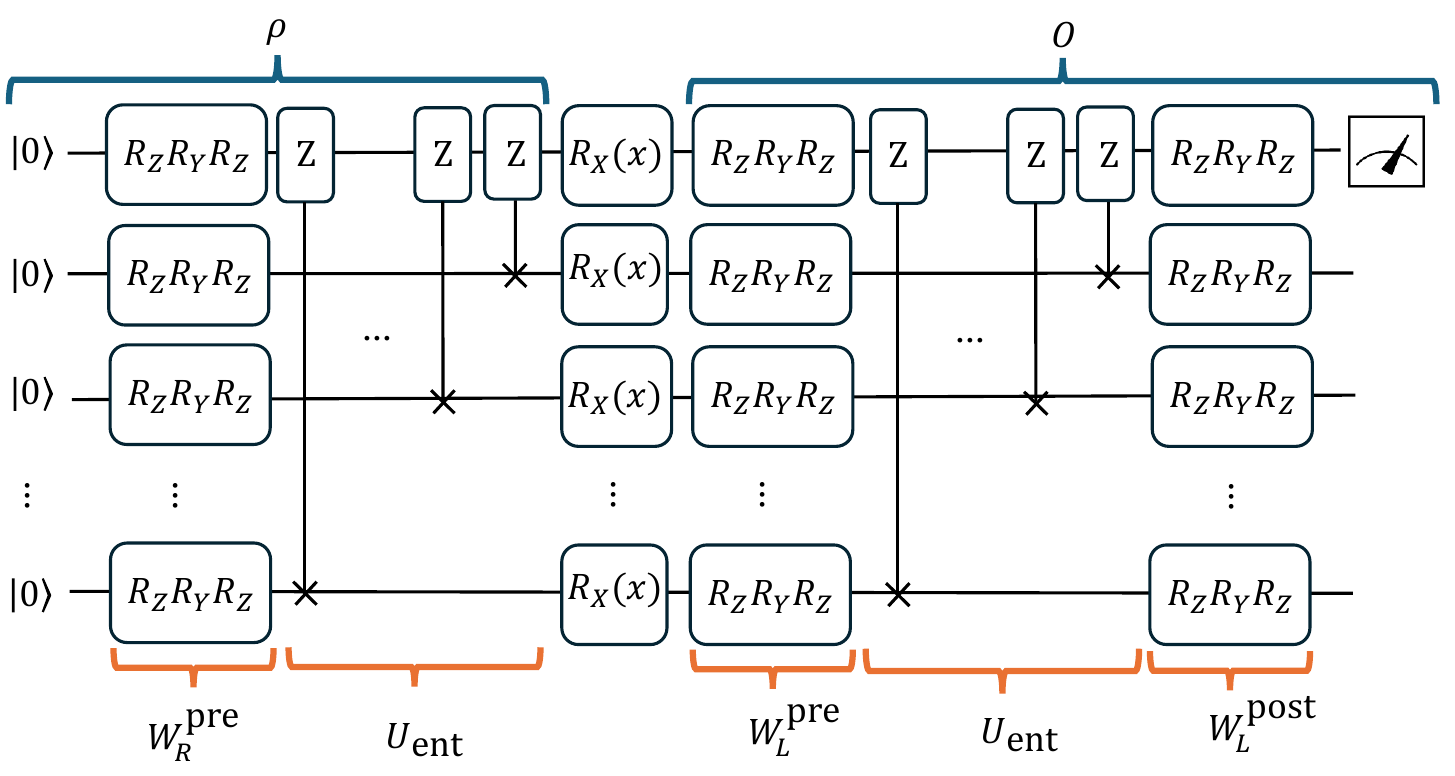}
        \caption{\textbf{Proposed circuit.}}
        \label{fig:proposed_circuit}
    \end{subfigure}
    \hspace{3mm}
    \begin{subfigure}[t]{0.22\linewidth}
        \centering
        \includegraphics[width=\linewidth]{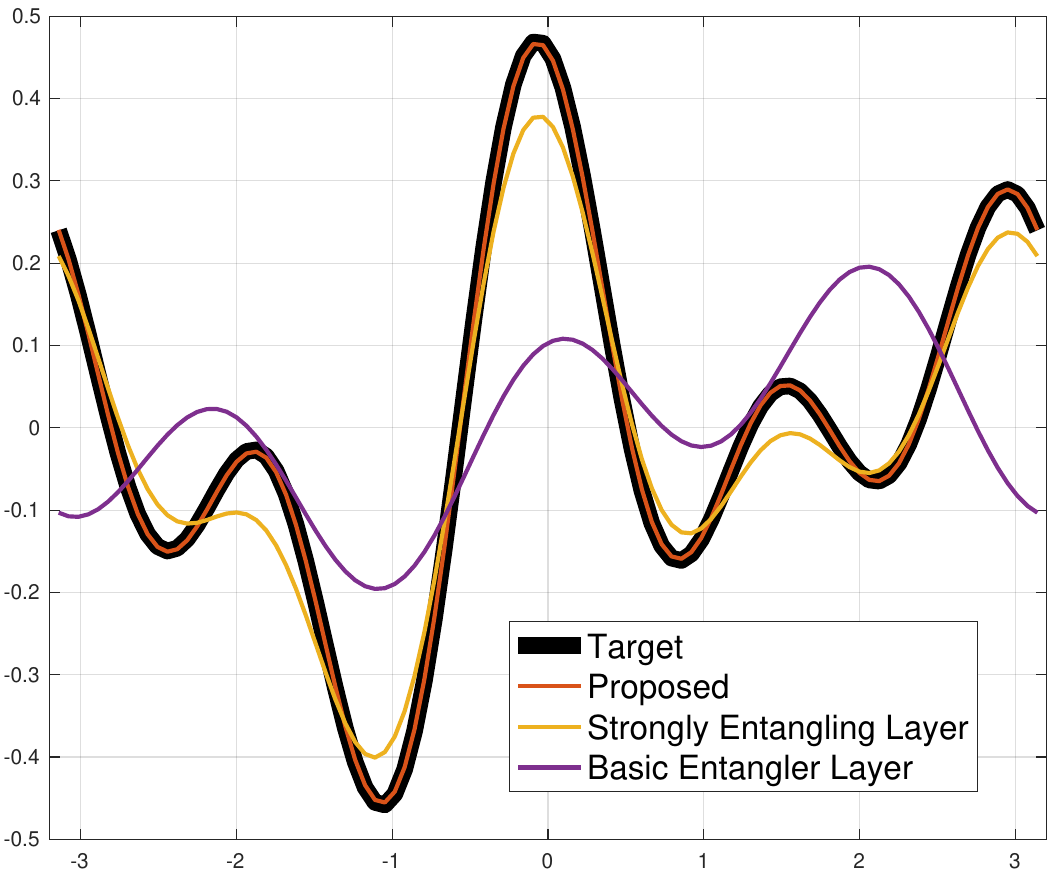}
        \caption{\textbf{Performance}}
        \label{fig:comparison_results}
    \end{subfigure}

    \caption{\textbf{Proposed circuit architecture and performance comparison.}
    (a) Structure of the proposed parameterized quantum circuit.
    (b) Comparison with the considered baseline architectures.}
    \label{fig:circuit_diagrams}
    \vspace{-3mm}
\end{figure}

\textbf{Sinusoid Function Approximation.}
We test the roles of the effective observable $O$ and effective state $\rho=W_R\rho_0W_R^\dagger$ with the circuit ablations in Table~\ref{tab:table2}. Circuits (i) and (iii) follow $W_L=W_L^{\mathrm{post}}U_{\mathrm{ent}}W_L^{\mathrm{pre}}$ with $R_ZR_YR_Z$ local blocks and a $CZ$ entangler, and Circuits (ii) and (iv) remove or misplace these local rotations and entangling layers. For $f_{\rm target,1}=0.25\cos(x)+0.25\cos(2x)$, two encoded qubits admit $\omega\leq2$, and Circuit (i) reconstructs the target while Circuit (ii) underfits the $\omega=2$ component. The encoder thus fixes which frequencies are admissible, while their amplitudes depend on how $O$ projects onto the invariant planes, and without local dressing $O$ lacks support on the mixed Pauli strings that span the higher-frequency planes. For $f_{\rm target,2}=0.25\cos(2x)+0.25\cos(3x)$, Circuit (iii), which dresses both $\rho$ and $O$, outperforms Circuit (iv), consistent with a coefficient that is bilinear in the two projections and is suppressed when either side lacks support. We then compare the Basic Entangler, the Strongly Entangling layer and the proposed circuit (Fig. \ref{fig:proposed_circuit}) on $f_{\rm target}=0.2\cos(2x)+0.1\sin(x)+0.11\cos(3x)+0.15\cos(4x)-0.1\sin(4x)$. The Basic Entangler fails to capture the target, the Strongly Entangling layer fits it with visible deviation, and the proposed circuit matches it across the domain, with a structure that follows from the invariant-plane analysis rather than a heuristic gate arrangement.

\textbf{Supervised PDE Regression.}
\begin{figure}[t]
    \centering
    \small
    \includegraphics[width=0.15\linewidth]{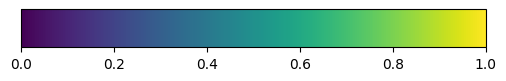}\\
    \begin{tabular}{ccccc}
     \includegraphics[width=0.163\linewidth]{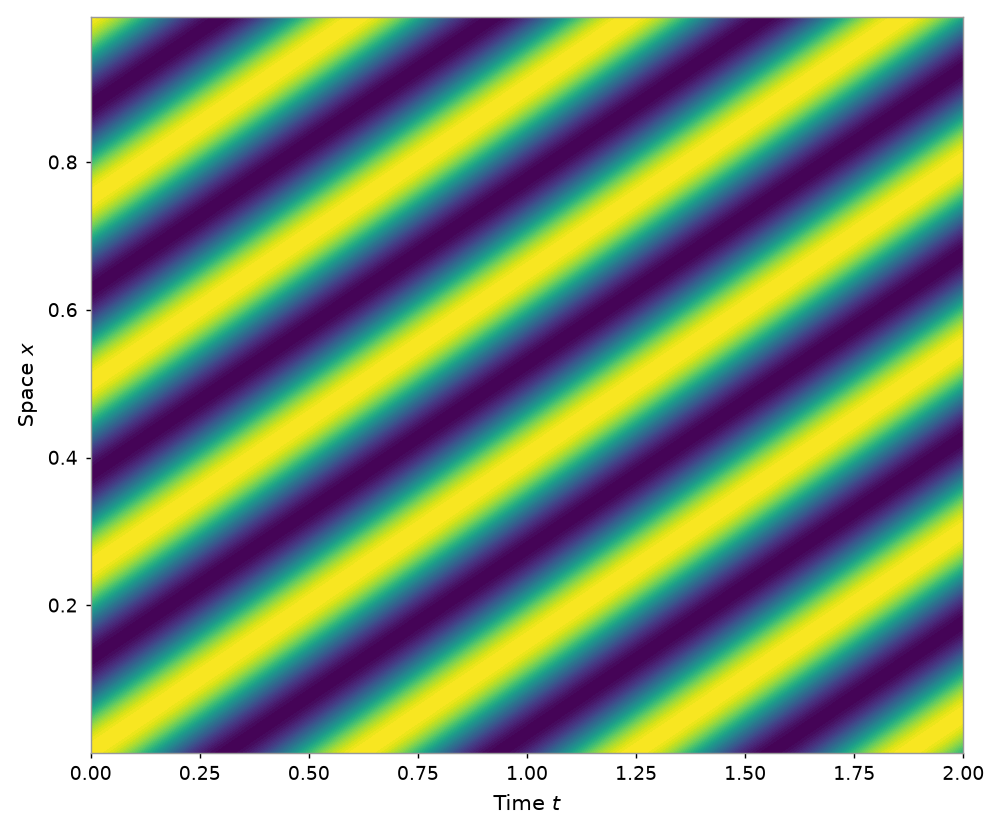} &
     \includegraphics[width=0.163\linewidth]{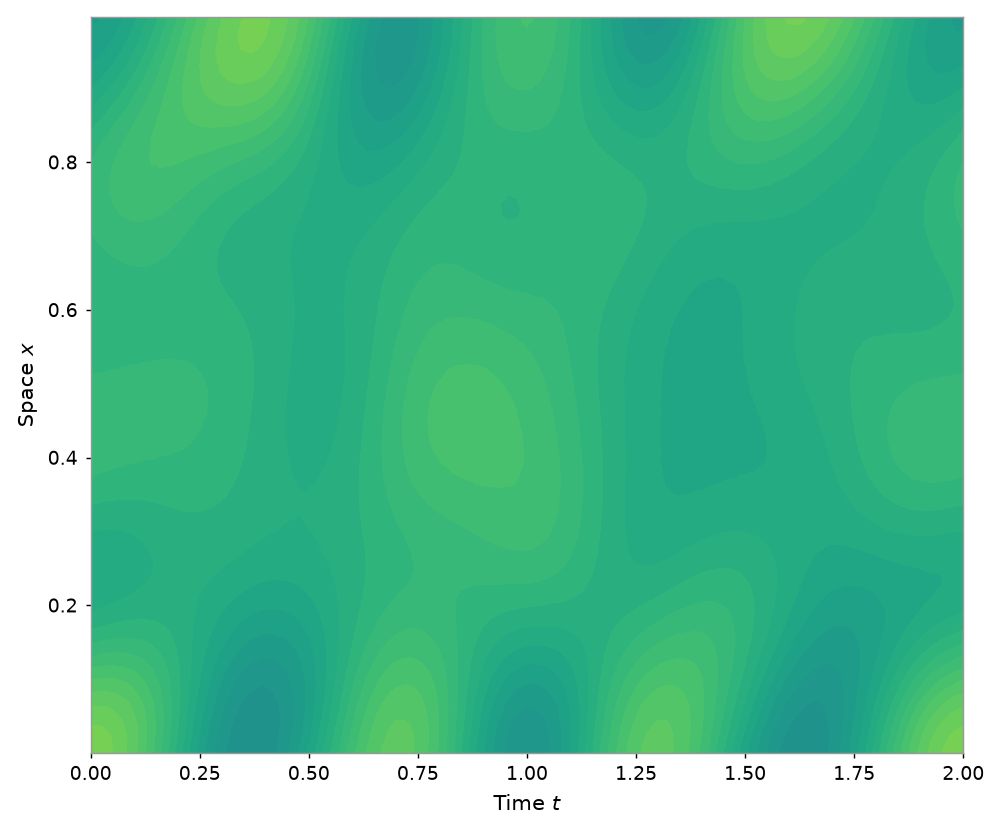} &
     \includegraphics[width=0.163\linewidth]{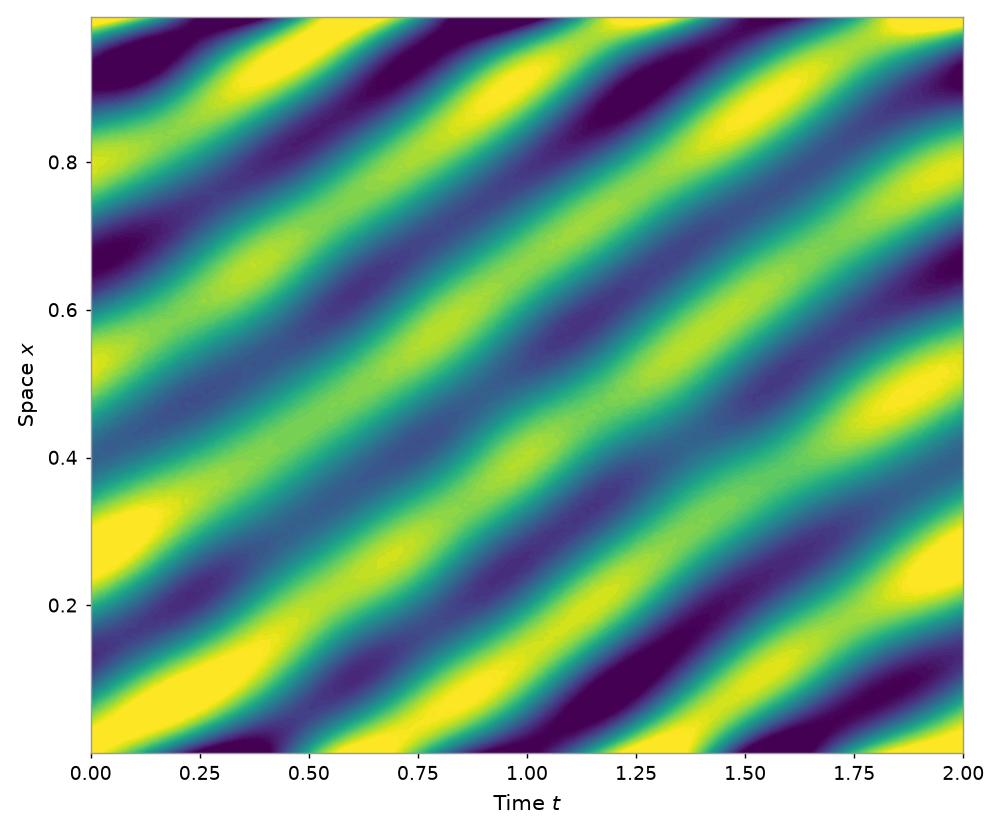} &
     \includegraphics[width=0.163\linewidth]{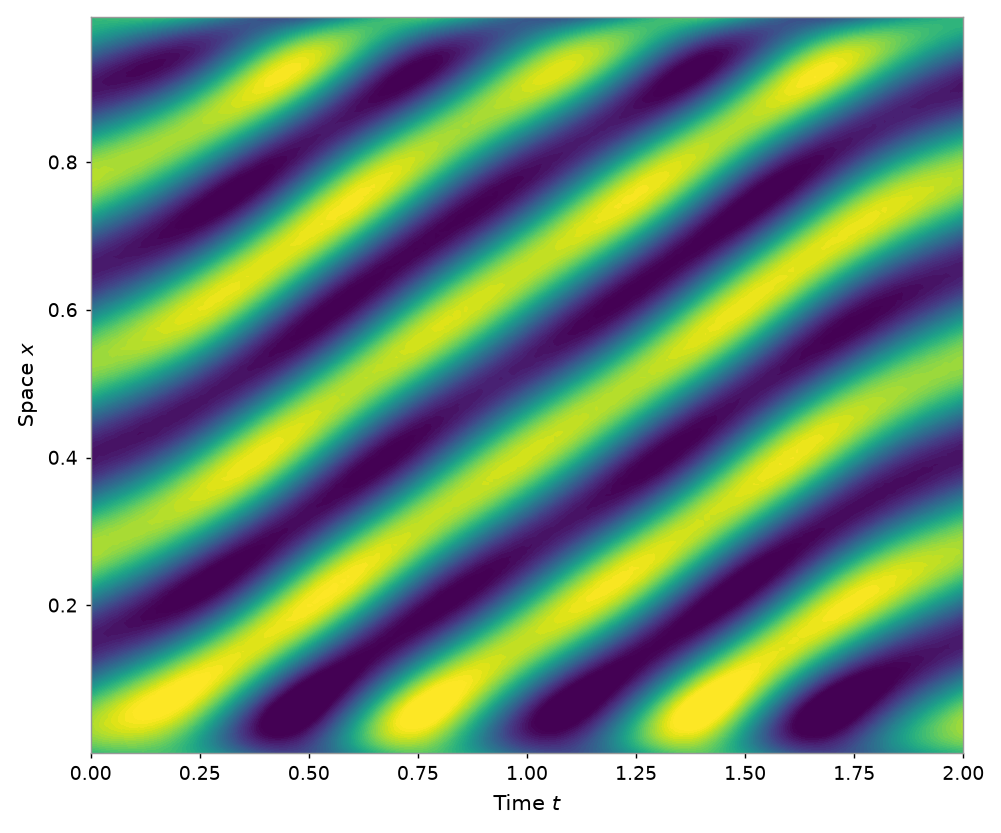} &
     \includegraphics[width=0.163\linewidth]{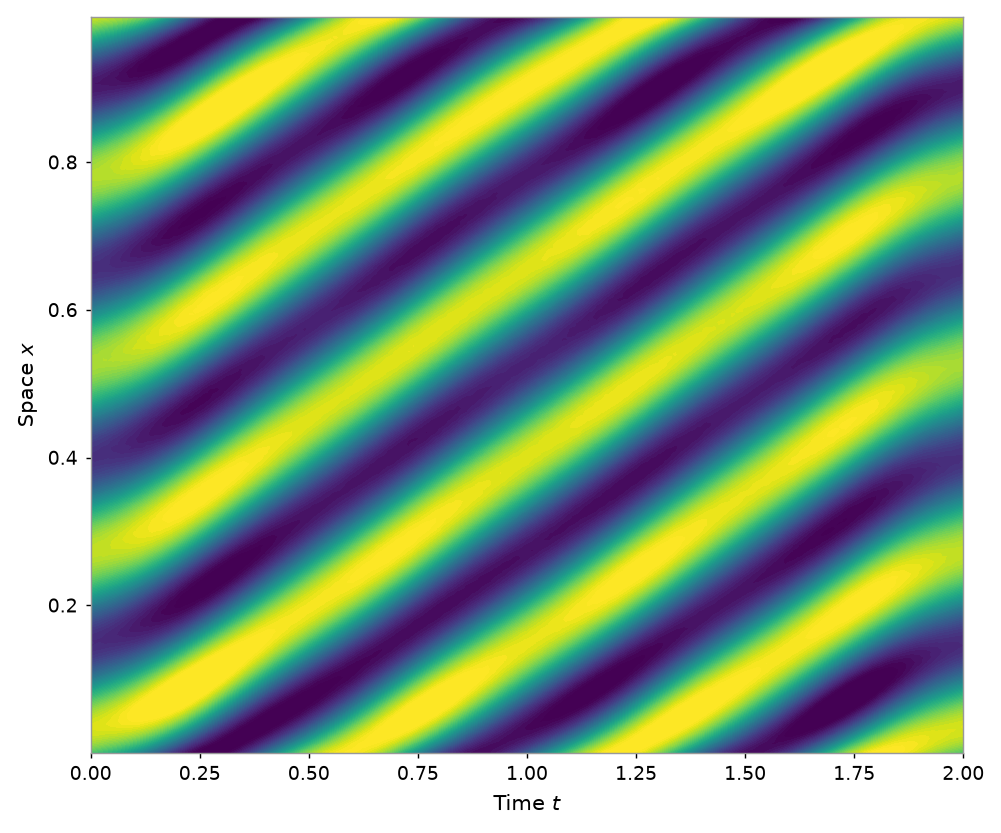}
     \\
     Ground Truth & Basic Ent. & Strongly Ent. ($L=4$) & \textbf{Proposed} & \textbf{Proposed} ($\gamma$)
    \end{tabular}
    \caption{\textbf{Single-upload reconstructions at 14 qubits}. Advection equation ($\beta=0.4$).}
    \label{fig:Advection_equation_04}
    \vspace{-4mm}
\end{figure}
We further evaluate the circuits on two PDEBench tasks \citep{takamoto2022pdebench}, the advection equation with $\beta=0.4$ and Burgers' equation with $\nu=0.001$ (Table~\ref{tab:pdebench_setup}, Appendix~\ref{appendix:ExpDetails}), whose solutions share their initial condition. We regress each field $u(x,t)$ with $x$ encoded on odd-indexed and $t$ on even-indexed qubits. We first use the setting of our analysis, 14 qubits and a single encoding layer, where a template can reach the budget only by stacking processing layers that add parameters but no frequencies (Table~\ref{tab:advection14}). On advection, six of the eight templates, including the Basic Entangler and Circuits 16--19, do not beat the constant predictor, and the best, the circular HEA, reaches $0.155$ with four layers and $168$ parameters, while the proposed circuit reaches $0.107$ with one layer and $126$ parameters, or $0.039$ when it also trains its entangling strengths, and reproduces the stripes most closely (Fig.~\ref{fig:Advection_equation_04}). Burgers, whose steep fronts make it harder for every circuit, keeps this order, with seven templates near the constant predictor, the circular HEA successful in one of three runs and both proposed variants below $0.21$ on every seed. With data re-uploading, a setting beyond our analysis, the proposed circuit also leads every matched template at 7 qubits while using $48$ two-qubit gates against $96$ to $329$ (Appendix~\ref{app:reupload}).

\begin{table}[htb]
\caption{Single-upload regression at 14 qubits over 3 seeds, MSE normalized by the variance of $u$
so that a constant predictor scores $1$. $L$ is the number of layers.
\textbf{Bold} and \textcolor{blue}{\textbf{blue bold}} mark the best and second-best result per task.}
\label{tab:advection14}
\centering

\resizebox{\textwidth}{!}{%
\begin{tabular}{l|ccc|cc||l|ccc|cc}
\Xhline{3\arrayrulewidth}
\textbf{Method} & $|\theta|$ & $L$ & 2q & Advection & Burgers &
\textbf{Method} & $|\theta|$ & $L$ & 2q & Advection & Burgers \\
\Xhline{1\arrayrulewidth}

Circuit 18 & 164 & 4 & 65 & $1.0604\pm0.0725$ & $1.0222\pm0.0297$ &
Circuit 19 & 168 & 4 & 65 & $1.0533\pm0.0660$ & $1.0220\pm0.0295$ \\

Circuit 17 & 164 & 4 & 28 & $1.0503\pm0.0620$ & $1.0222\pm0.0297$ &
Circuit 16 & 164 & 4 & 28 & $1.0552\pm0.0645$ & $1.0223\pm0.0299$ \\

HEA circular & 168 & 4 & 69 & $0.1549\pm0.0469$ & $0.7744\pm0.2781$ &
YZY (ent.) & 168 & 4 & 32 & $1.0989\pm0.0611$ & $1.0380\pm0.0501$ \\

StronglyEnt. & 168 & 4 & 56 & $0.2898\pm0.0761$ & $0.9267\pm0.0259$ &
BasicEnt. & 154 & 11 & 154 & $1.0335\pm0.0757$ & $1.0224\pm0.0302$ \\

\Xhline{1\arrayrulewidth}

\textbf{Proposed} & 126 & 1 & 26 &
\textcolor{blue}{$\mathbf{0.1065\pm0.0284}$} &
\textcolor{blue}{$\mathbf{0.1951\pm0.0021}$} &
\textbf{Proposed ($\gamma$)} & 152 & 1 & 26 &
$\mathbf{0.0393\pm0.0037}$ &
$\mathbf{0.1890\pm0.0143}$ \\

\Xhline{3\arrayrulewidth}
\end{tabular}%
}

\end{table}

\textbf{Placement Ablation.} We keep the $126$ angles and $26$ $CZ$ gates of the proposed 14-qubit circuit and change only their order, the graph or the measured qubit (Table~\ref{tab:ablation}). Since each Fourier coefficient $c(\omega_{x},\omega_{t})$ is identically zero or nonzero almost everywhere (Corollary~\ref{cor:activation}), we count the nonzero ones among the $225$ exactly at random parameters, without any training (Appendix~\ref{app:ablation}). Every change removes exactly the pairs our analysis predicts, and No Align shows that reaching the top frequency is not enough, since it keeps $(7,7)$ but loses $103$ of the $225$ pairs. To connect these counts to the PDE error, we fit the advection field of Table~\ref{tab:advection14} by least squares on the reachable terms of each variant, which gives a lower bound on the error of every parameter setting. Without Expose, with a leaf readout or with a path, this bound is at least $0.999$, near the constant predictor, and without Align it is $0.102$, against $0.013$ for the proposed circuit. Training every variant the same way, on one eighth of the grid with about one sixth of the optimizer steps of Table~\ref{tab:advection14}, gives the proposed circuit $0.63$ and every other variant between $0.96$ and $1.02$.

\begin{table}[htb]
\caption{Placement ablation, with gates in time order, $R$ a rotation layer, $S$ the encoder and $C$ or $P$ a star or path of $13$ $CZ$ gates. Pairs counts the nonzero $c(\omega_{x},\omega_{t})$ of $225$ and $(k_{x},k_{t})$ is their largest $(|\omega_{x}|,|\omega_{t}|)$, both as predicted. Best possible MSE is the least-squares error on the reachable terms, a lower bound for any parameter setting, and Trained MSE uses a reduced budget over 5 seeds, both normalized by the variance of $u$.}
\label{tab:ablation}
\centering
\scriptsize
\setlength{\tabcolsep}{4pt}
\begin{tabular}{l|c|cccc}
\Xhline{3\arrayrulewidth}
 & \textbf{Proposed} & No Expose & No Align & Leaf readout & Path \\
\Xhline{1\arrayrulewidth}
Gates & $R\,C\,S\,R\,C\,R$ & $R\,C\,S\,R\,R\,C$ & $R\,C\,S\,C\,R\,R$ & $R\,C\,S\,R\,C\,R$ & $R\,P\,S\,R\,P\,R$ \\
$(k_{x},k_{t})$ & $(7,7)$ & $(1,0)$ & $(7,7)$ & $(1,1)$ & $(2,1)$ \\
Pairs & $225$ & $3$ & $122$ & $9$ & $15$ \\
Best possible MSE $\downarrow$ & $\mathbf{0.013}$ & $1.000$ & $0.102$ & $1.000$ & $0.999$ \\
Trained MSE $\downarrow$ & $\mathbf{0.628\pm0.212}$ & $1.020\pm0.024$ & $0.957\pm0.016$ & $1.015\pm0.012$ & $1.022\pm0.012$ \\
\Xhline{3\arrayrulewidth}
\end{tabular}
\end{table}

\textbf{Noise Simulation.} We evaluate the trained 14-qubit advection models under depolarizing noise of strength $\epsilon$ after every gate, amplitude and phase damping of strength $\lambda$ after every circuit moment, and finite-shot readout with $S$ shots (Table~\ref{tab:noise14}, Appendix~\ref{app:noise14}). The proposed circuit stays accurate under moderate noise. With trained entangling strengths, its error moves from $0.026$ to $0.036$ and $0.105$ at $\epsilon=3\times10^{-3}$ and $10^{-2}$, against $0.250$ and $0.473$ for HEA, the strongest template, and to $0.067$ at $\lambda=10^{-3}$, while stronger damping at $\lambda=3\times10^{-3}$ raises it to about $0.5$. Finite-shot readout is the limiting factor at 14 qubits. Its error grows with the square of the trained readout scale, which is four times smaller for our circuit than for HEA, so our shot error is $16\times$ smaller, but our circuit still needs more than $10^{4}$ shots per point to beat the constant predictor. An exact density-matrix study at 7 qubits with every template, under data re-uploading, is in Appendix~\ref{app:reupload}.

\begin{table}[htb]
\caption{Noise at 14 qubits with a single encoding layer, MSE normalized by the variance of $u$. Gate noise and damping use seed 0 on 128 evaluation points with 512 debiased trajectories, and shots use 3 seeds on 2048 points with 10 draws each. $|w|$ is the trained readout scale.}
\label{tab:noise14}
\centering
\scriptsize
\setlength{\tabcolsep}{3pt}
\begin{tabular}{l|cc|c|cc|cc|cc}
\Xhline{3\arrayrulewidth}
 & & & & \multicolumn{2}{c|}{Gate noise $\epsilon$} & \multicolumn{2}{c|}{Damping $\lambda$} & \multicolumn{2}{c}{Shots $S$} \\
\textbf{Method} & 2q & $|w|$ & Clean & $3\times10^{-3}$ & $10^{-2}$ & $10^{-3}$ & $3\times10^{-3}$ & $128$ & $512$ \\
\Xhline{1\arrayrulewidth}
HEA circular & 69 & 18.48 & 0.193 & 0.250 & 0.473 & -- & -- & 1585 & 397 \\
\textbf{Proposed} & 26 & 5.79 & 0.064 & 0.079 & 0.154 & 0.082 & 0.494 & 156.1 & 39.3 \\
\textbf{Proposed ($\gamma$)} & 26 & 4.64 & 0.026 & 0.036 & 0.105 & 0.067 & 0.557 & 99.3 & 25.0 \\
\Xhline{3\arrayrulewidth}
\end{tabular}
\end{table}

\begin{figure}[t]
    \centering
    \small
    \begin{tabular}{cccc}
     \includegraphics[width=0.20\linewidth]{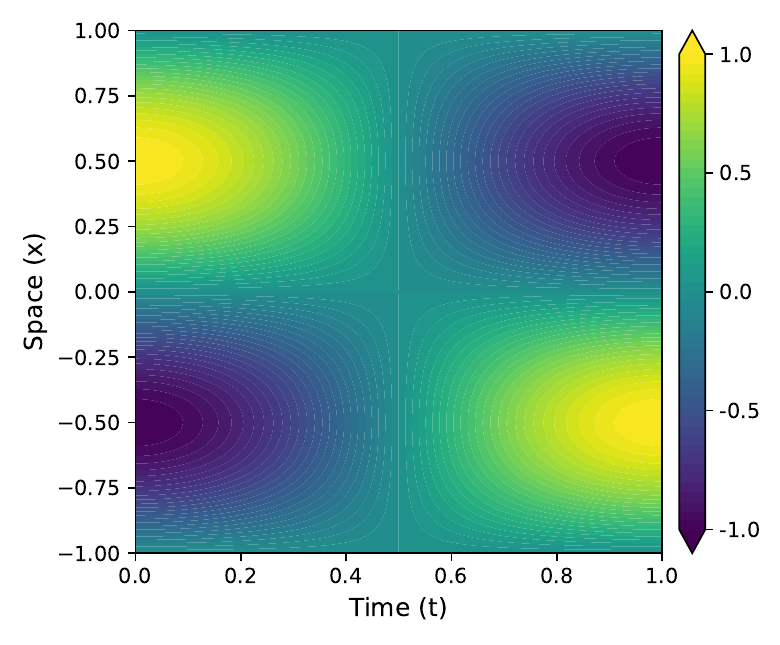} &
     \includegraphics[width=0.20\linewidth]{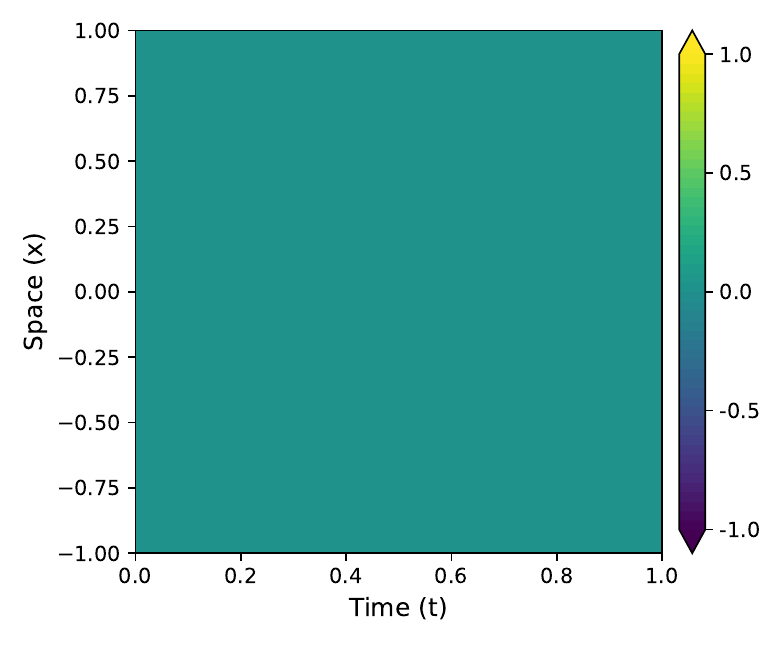} &
     \includegraphics[width=0.20\linewidth]{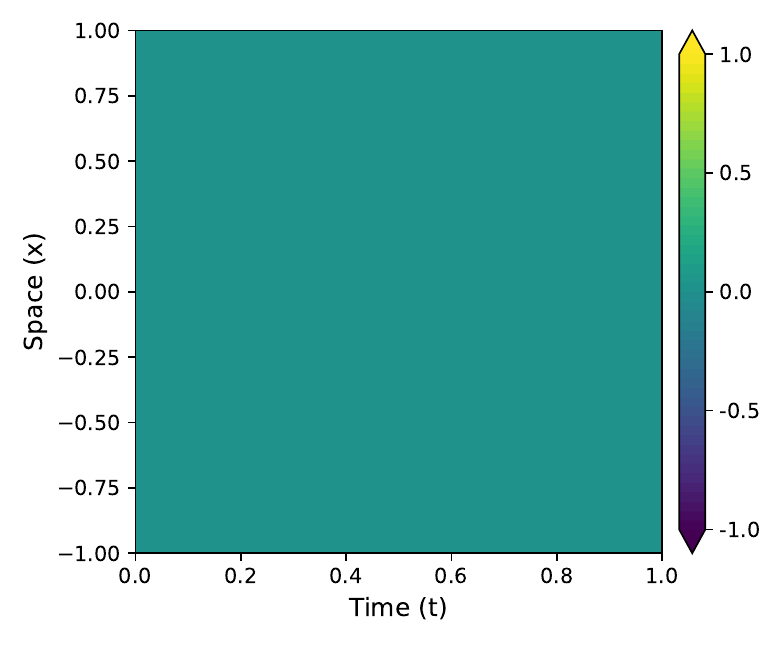} &
     \includegraphics[width=0.20\linewidth]{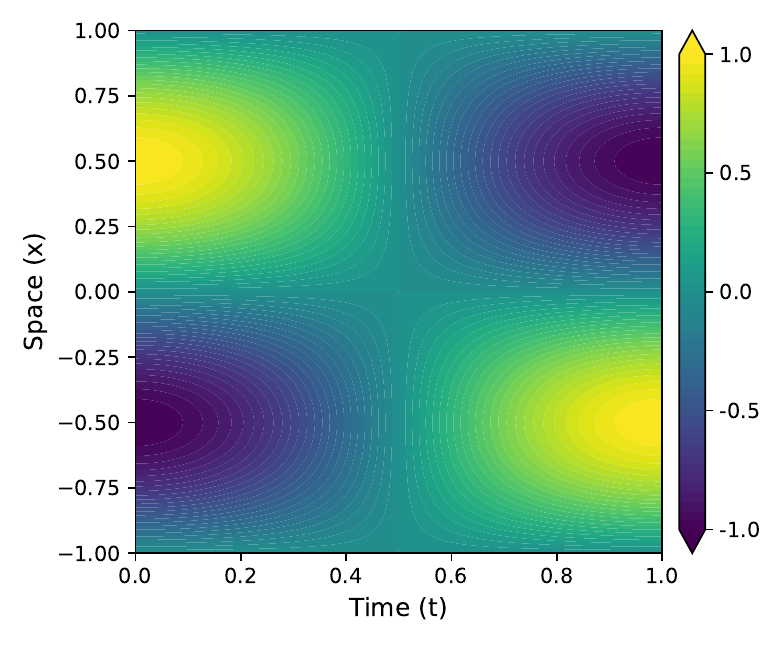}
     \\
     Ground Truth & Basic Ent. & Strongly Ent. & \textbf{Proposed}
    \end{tabular}
    \caption{\textbf{Physics-informed wave equation} with one layer per circuit. The Strongly Entangling layer on 6 qubits and the proposed circuit on 2 qubits both use 20 parameters.}
    \label{fig:wave}
\end{figure}

\begin{table}[t]
\caption{Physics-informed wave equation, mean relative $L^2$ error. Templates need more layers, since with one layer each outputs the zero field (Table~\ref{tab:wave_app}).}
\label{tab:wave}
\centering
\scriptsize
\setlength{\tabcolsep}{4pt}
\begin{tabular}{lcccc|lcccc}
\Xhline{3\arrayrulewidth}
Circuit & Qubits & $L$ & $|\theta|$ & Rel.\ $L^2$ & Circuit & Qubits & $L$ & $|\theta|$ & Rel.\ $L^2$ \\
\Xhline{1\arrayrulewidth}
Circuit 16 & 6 & 2 & 36 & $8.1\times10^{-5}$ & YZY & 6 & 2 & 38 & $0.336$ \\
Circuit 17 & 6 & 2 & 36 & $3.1\times10^{-5}$ & YZY (ent.) & 6 & 2 & 38 & $0.336$ \\
Circuit 18 & 6 & 2 & 38 & $3.6\times10^{-4}$ & HEA circular & 6 & 2 & 38 & $0.455$ \\
Circuit 19 & 6 & 2 & 38 & $3.0\times10^{-5}$ & Circuit 15 & 6 & 2 & 26 & $1.000$ \\
Strongly Ent. & 6 & 3 & 56 & $9.9\times10^{-3}$ & Basic Ent. & 6 & 2 & 14 & $1.000$ \\
\Xhline{1\arrayrulewidth}
\textbf{Proposed} & 2 & 1 & 20 & $2.7\times10^{-5}$ & & & & & \\
\Xhline{3\arrayrulewidth}
\end{tabular}
\end{table}

\textbf{Physics-Informed Wave Equation.} We also solve the electromagnetic wave equation from Maxwell's equations with the physics-informed objective of Table~\ref{tab:wave_setup} and no solution data. Its solution $\sin(\pi x)\cos(\pi t)$ has one frequency per input, so our rule needs one qubit per input, and we compare circuits at equal depth. With only 2 qubits and one layer of 20 parameters, the proposed circuit solves the task with a relative $L^2$ error of $2.7\times10^{-5}$, while no template learns anything with one layer even on three times as many qubits, each outputting the zero field (Fig.~\ref{fig:wave}, Table~\ref{tab:wave_app}). Templates need more layers, two and 36 to 38 parameters for Circuits 16--19 and three and 56 for the Strongly Entangling layer (Table~\ref{tab:wave}), so ours uses a third of their qubits and about half their parameters, although larger templates can exceed its accuracy.

\section{Conclusion}
In this paper, we presented a geometric framework for analyzing the Fourier expressivity of multi-qubit PQCs. Using the adjoint map, we showed that the data-encoding generator decomposes the operator space into invariant frequency subspaces, where Fourier coefficients are determined by geometric projections between the effective state and observable. This perspective reveals how entanglement activates higher-frequency components by transforming local operators into mixed multi-qubit Pauli strings. For the encoder, readout and entangler we analyze, it yields a placement rule, built from the interaction graph in linear time, that makes every frequency up to the degree of the measured qubit plus one nonzero at almost every parameter setting, and each of these frequencies can reach unit amplitude on its own when the state side is entangled as well. Moving the same gates removes exactly the frequencies that the rule predicts. With trained entangling strengths the proposed circuit leads every parameter-matched template on advection and Burgers while using fewer two-qubit gates, it stays accurate under moderate gate noise and damping, and a single encoding layer suffices to solve the wave equation.

\newpage 
\section*{GenAI Usage Disclosure}
We occasionally used ChatGPT to refine our wording and grammar. All manuscript and contents of this paper were checked and reviewed by authors. 

\section*{Ethics statement}
All authors have read and adhere to the ICLR Code of Ethics. This work develops theory and a design rule for parameterized quantum circuits and evaluates them in classical simulation. It involves no human subjects, no personal or sensitive data and no deployed system. The supervised experiments use the public PDEBench datasets \citep{takamoto2022pdebench} under their license, and every other target is synthetic and given in closed form. We see no direct path to harmful use, and the computational cost of the work is limited to classical simulation on a single workstation (Table~\ref{tab:computational_resources}).

\section*{Reproducibility statement}
The proofs of all lemmas and propositions are in Appendix~\ref{sec:appendix_proofsofLemmasandprop}, the proof of Corollary~\ref{cor:activation} is in Appendix~\ref{app:activation}, the scope of the results is discussed in Appendix~\ref{app:general_ent}, and the design procedure is stated as Algorithm~\ref{alg:design}. Appendix~\ref{appendix:baseline_circuits} defines every baseline circuit, and Appendix~\ref{sec:appendix_experiment_setup} gives the data, encoding, objective and training settings of each experiment, with the number of seeds and the hardware and software we used (Table~\ref{tab:computational_resources}). The comparison tables list the parameter and two-qubit gate counts of every circuit, and we report the mean and standard deviation over seeds with Holm-corrected Welch tests. The supervised data are public PDEBench samples, the other targets are given in closed form, and the reachable-pair counts of Table~\ref{tab:ablation} require no training and follow from the procedure of Appendix~\ref{app:ablation}. Our code, built on PennyLane \citep{bergholm2018pennylane}, JAX and Optax, is included in the supplementary material and will be made public upon publication.

\bibliography{iclr2027_conference}
\bibliographystyle{iclr2027_conference}

\newpage
\appendix

\section{Baseline Circuits}
\label{appendix:baseline_circuits}
\begin{figure*}[h]
    \centering
    \begin{subfigure}[t]{0.48\textwidth}
        \centering
        \includegraphics[width=\linewidth]{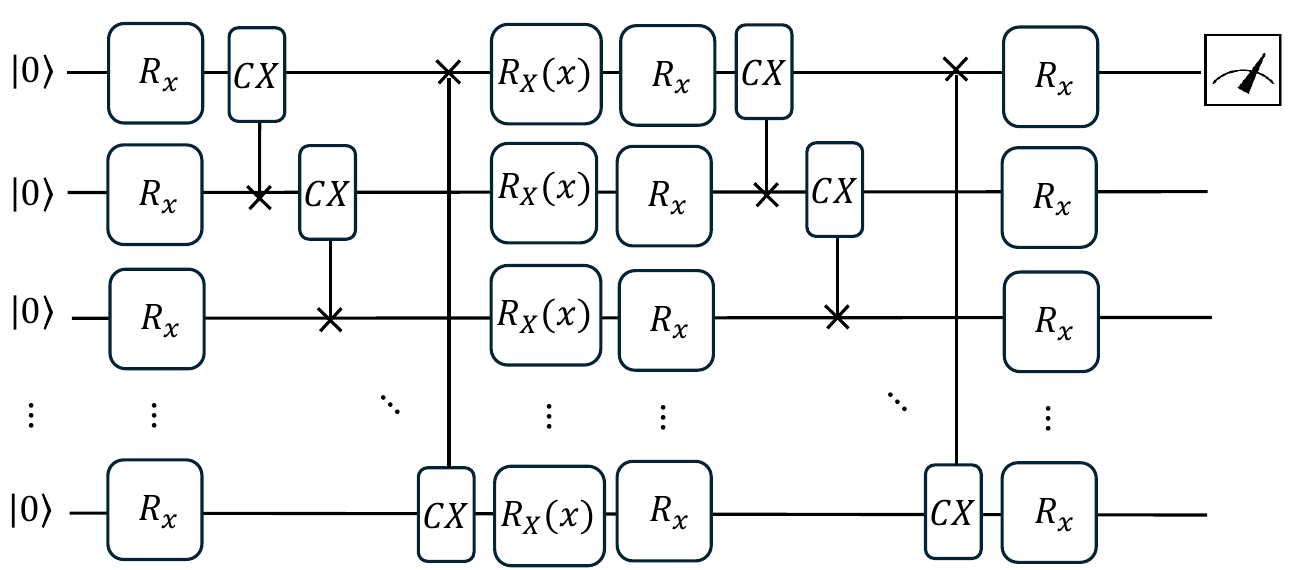}
        \caption{Basic Entangler}
        \label{fig:basic}
    \end{subfigure}
    \hfill
    \begin{subfigure}[t]{0.45\textwidth}
        \centering
        \includegraphics[width=\linewidth]{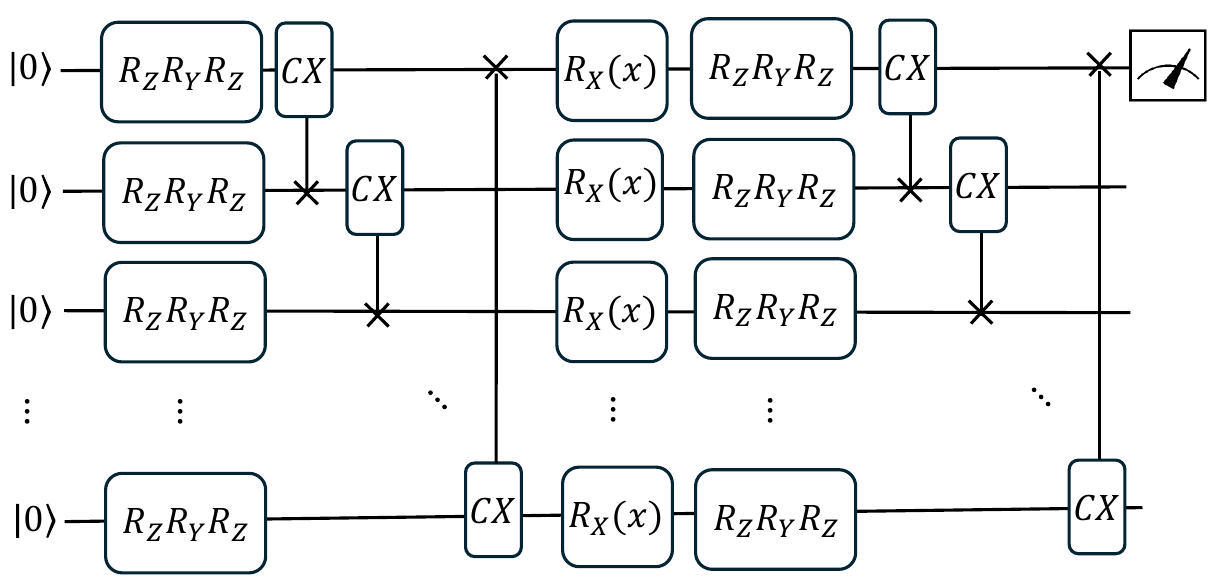}
        \caption{Strongly Entangling}
        \label{fig:strongly}
    \end{subfigure}
    
    \begin{subfigure}[t]{0.48\textwidth}
        \centering
        \includegraphics[width=\linewidth]{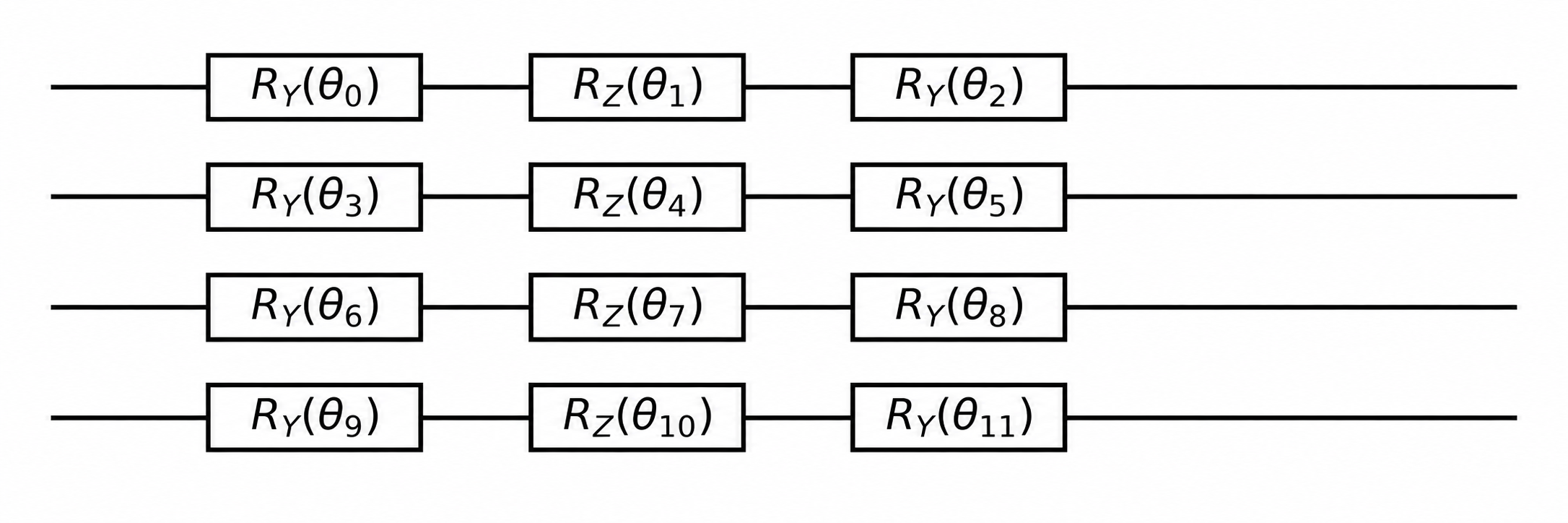}
        \caption{YZY}
        \label{fig:yzy}
    \end{subfigure}
    \hfill
    \begin{subfigure}[t]{0.48\textwidth}
        \centering
        \includegraphics[width=\linewidth]{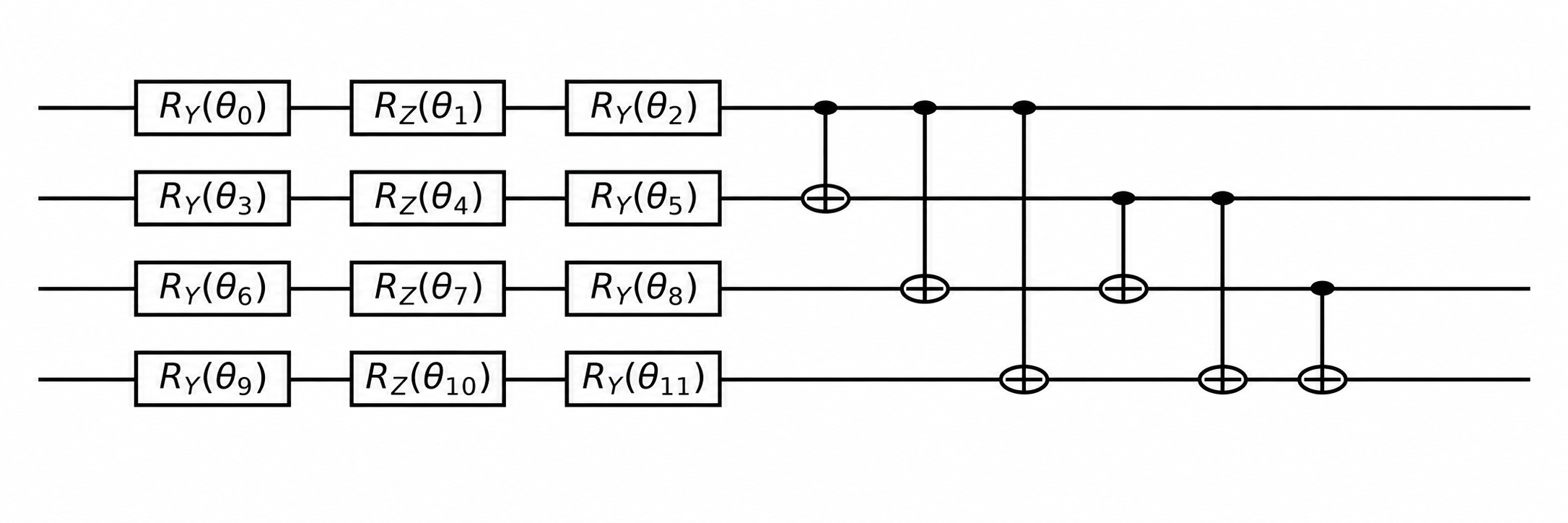}
        \caption{YZY Entangling}
        \label{fig:yzy_entang}
    \end{subfigure}


    \begin{subfigure}[t]{0.48\textwidth}
        \centering
        \includegraphics[width=\linewidth]{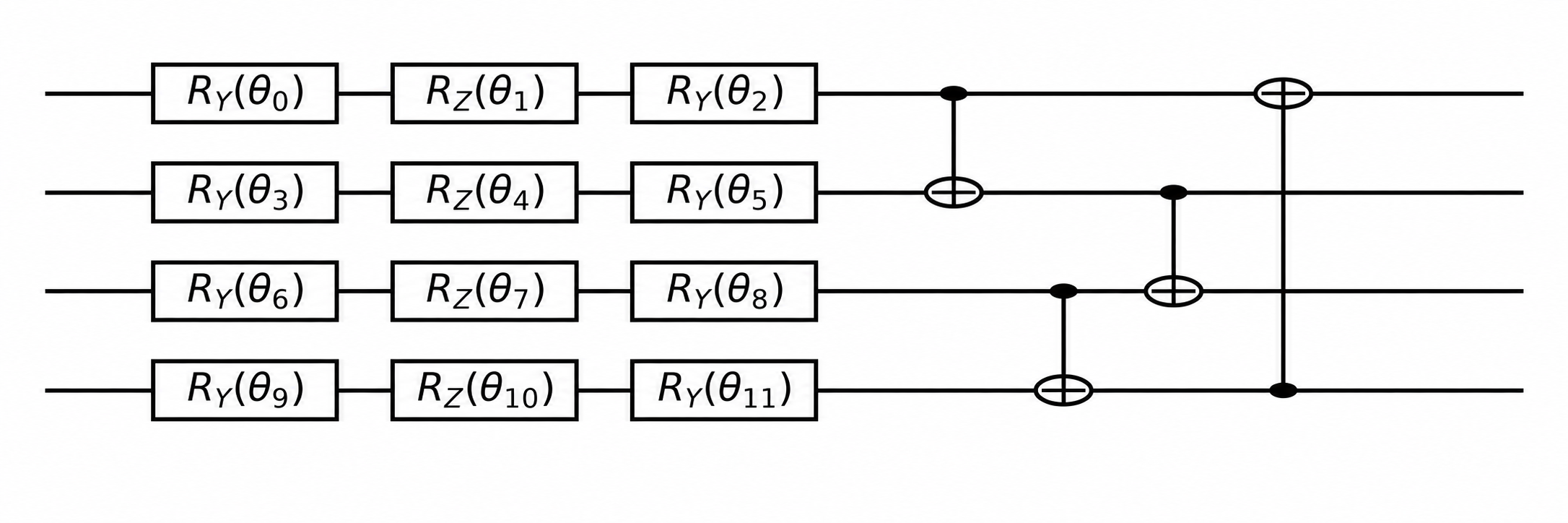}
        \caption{Hardware Efficient Ansatz (HEA)}
        \label{fig:HEA}
    \end{subfigure}
    \hfill
    \begin{subfigure}[t]{0.48\textwidth}
        \centering
        \includegraphics[width=\linewidth]{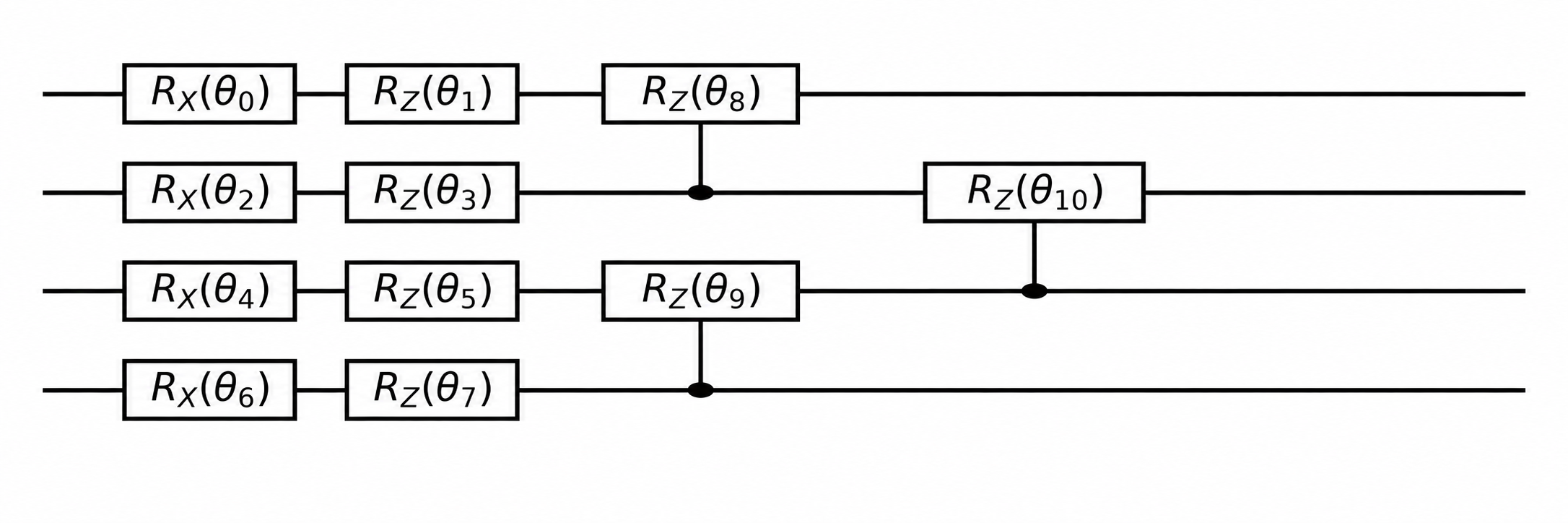}
        \caption{Circuit 16}
        \label{fig:Circuit16}
    \end{subfigure}


    \begin{subfigure}[t]{0.48\textwidth}
        \centering
        \includegraphics[width=\linewidth]{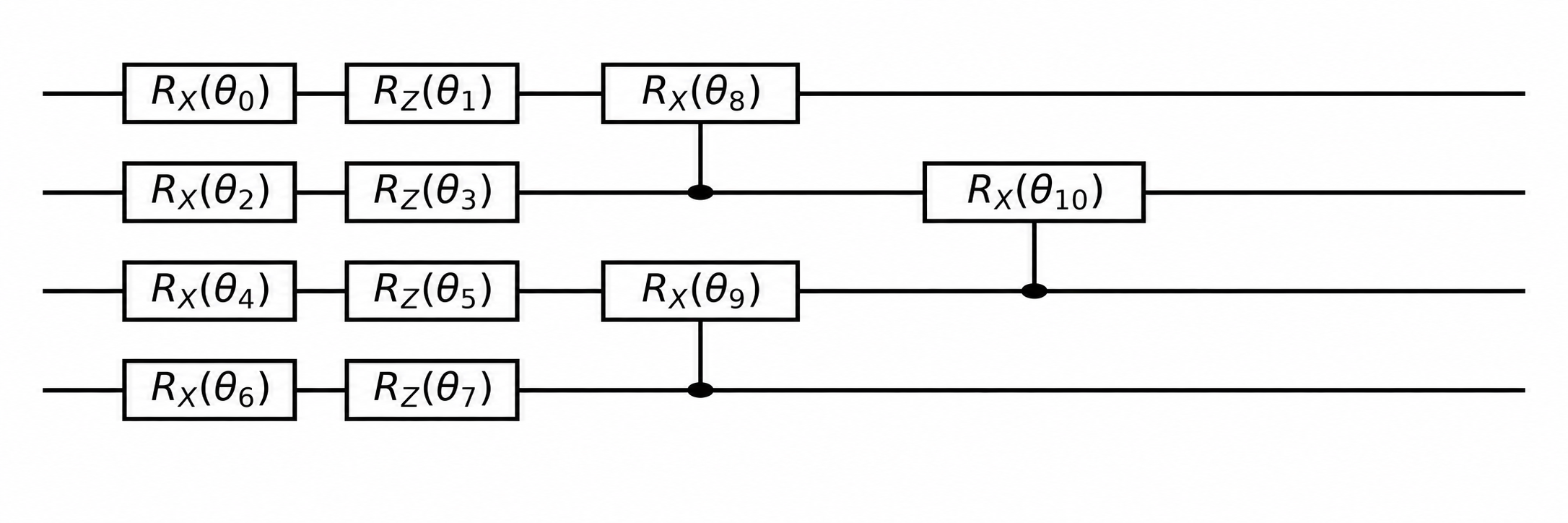}
        \caption{Circuit 17}
        \label{fig:Circuit17}
    \end{subfigure}
    \hfill
    \begin{subfigure}[t]{0.48\textwidth}
        \centering
        \includegraphics[width=\linewidth]{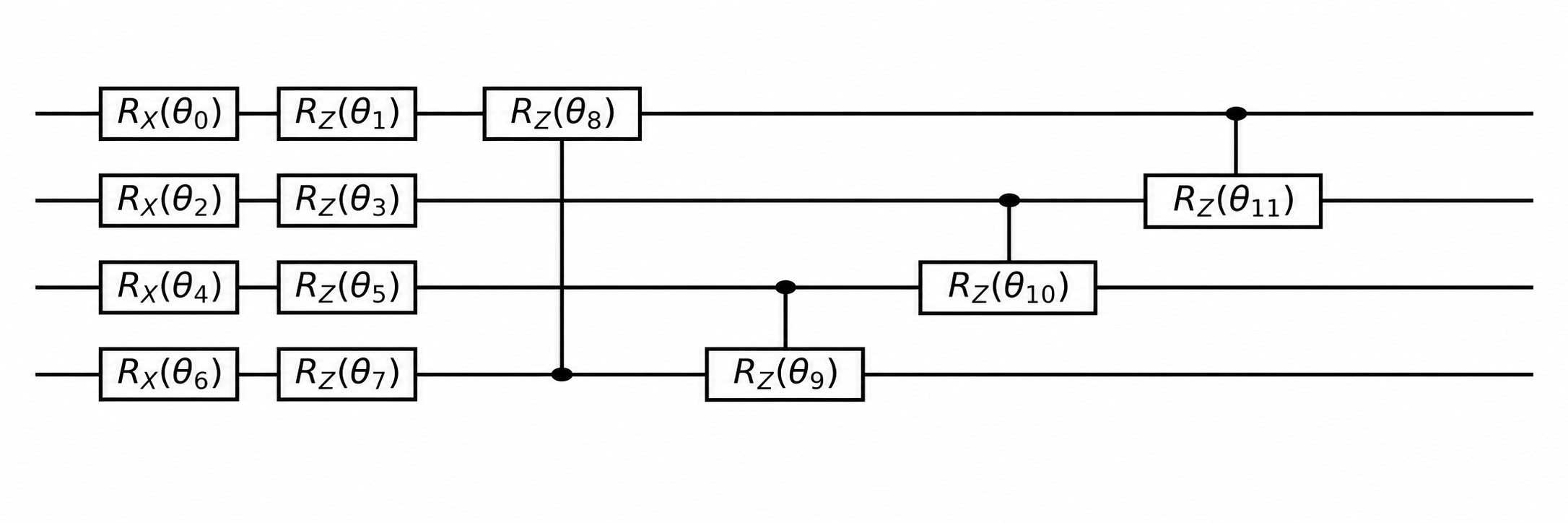}
        \caption{Circuit 18}
        \label{fig:Circuit18}
    \end{subfigure}


    \begin{subfigure}[t]{0.48\textwidth}
        \centering
        \includegraphics[width=\linewidth]{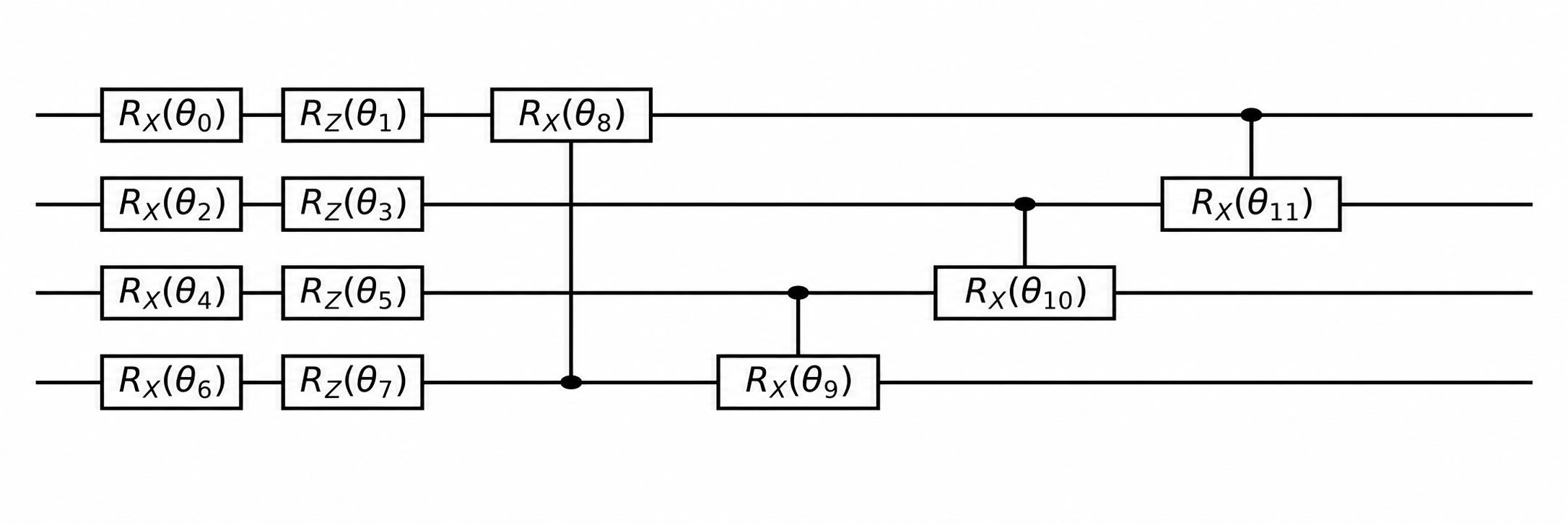}
        \caption{Circuit 19}
        \label{fig:Circuit19}
    \end{subfigure}

    \caption{PQC baseline architectures considered in the experiments.}
    \label{fig:circuit_architectures}
\end{figure*}

\section{Experimental Setup}
\label{sec:appendix_experiment_setup}
In this section, we describe the details of datasets that are utilized in the experiments. 

\subsection{Detailed Experiment Scenarios}
\label{appendix:ExpDetails}

\textbf{Sinusoidal Function Approximation.}
In this experiment, we use a synthetic dataset generated from a sinusoidal target function. Input points are uniformly sampled over a fixed interval, and target values are obtained directly from the analytic function. This dataset provides a controlled benchmark for evaluating the model's ability to learn oscillatory behavior and frequency-dependent patterns. The proposed model was trained using the Adam optimizer with the default PyTorch settings. The training was performed for 1000 epochs, where each epoch iterates over all mini-batches in the training dataloader. The loss function used for optimization was the smooth $L1$ loss.

\textbf{Supervised PDE Regression.} Both settings regress PDEBench solutions \citep{takamoto2022pdebench} and scale the coordinates by $\pi$ before encoding (Table~\ref{tab:pdebench_setup}). The single-upload setting uses the first sample of the advection data with $\beta=0.4$ and of the Burgers data with $\nu=0.001$, which share their initial condition, each on its full $201\times1024$ grid with the raw field $u$. Every circuit uses 14 qubits and encodes the coordinates only in its first layer, so additional layers add parameters but no frequencies, and every template is depth-scaled to at least the $152$ parameters of the larger proposed variant. Each circuit is trained with Adam at learning rate $0.015$, without clipping or decay, for 6 epochs at batch size 64 on a random 80\% of the grid ($15{,}432$ steps), and we report the MSE over the full grid divided by the variance of $u$, $1.70\times10^{-3}$ for advection and $1.01\times10^{-3}$ for Burgers, so a constant predictor scores $1$. The variant with trainable entangling strengths replaces each $CZ$ by a controlled-phase gate initialized at $\pi$, where it equals $CZ$, and all circuits share the same 3 seeds. The re-uploading setting uses advection only. There, every circuit uses 7 qubits and re-encodes the input between blocks, on a $48\times24$ space-time grid with an 80/20 train-test split and targets standardized to zero mean and unit variance, uses the readout $a\langle Z_0\rangle+b$, and is trained with Adam for 30 epochs at batch size 64 over 8 seeds. Each model uses its best learning rate from $\{0.005,0.02,0.05,0.1,0.2\}$. The noise study retrains 5 of the 8 seeds with the same procedure, requires each to reproduce its noiseless test error, and averages finite-shot readout over 10 draws per seed.

\begin{table}[h]
\centering
\scriptsize
\caption{PDEBench regression setup.}
\label{tab:pdebench_setup}
\renewcommand{\arraystretch}{1.15}
\resizebox{0.8\linewidth}{!}{
\begin{tabular}{c|c|l}
\Xhline{3\arrayrulewidth}
\textbf{Component} & \textbf{Notation} & \multicolumn{1}{c}{\textbf{Equation}} \\
\Xhline{1\arrayrulewidth}
Advection equation & $\beta$ & $\displaystyle \partial_t u(x,t) + \beta\,\partial_x u(x,t) = 0$ \\
Burgers equation & $\nu$ & $\displaystyle \partial_t u(x,t) + \partial_x\bigl(u^{2}(x,t)/2\bigr) = (\nu/\pi)\,\partial_{xx} u(x,t)$ \\
Input encoding & $(x,t)$ & $\displaystyle q_1\leftarrow x,\quad q_2\leftarrow t,\quad q_3\leftarrow x,\quad q_4\leftarrow t,\quad \cdots$ \\
Model mapping & $f_\theta$ & $\displaystyle f_\theta:(x,t)\mapsto \hat{u}_\theta(x,t)$ \\
\midrule
\textbf{Training objective} & $\mathcal{L}$ & $\displaystyle \min_{\theta} \frac{1}{N}\sum_{i=1}^{N} \left| \hat{u}_\theta(x_i,t_i) - u(x_i,t_i) \right|^2$ \\
\Xhline{3\arrayrulewidth}
\end{tabular}}
\end{table}

\textbf{Physics-Informed Wave Equation.} We solve $\partial_{tt}E=\partial_{xx}E$ on $[-1,1]\times[0,1]$ with $E(\pm1,t)=0$, $E(x,0)=\sin(\pi x)$ and $\partial_tE(x,0)=0$, whose exact solution is $E=\sin(\pi x)\cos(\pi t)$, with the objective of Table~\ref{tab:wave_setup}. Every circuit encodes $x$ and $t$ once with $R_X(\pi\,\cdot)$, one input per qubit, so the solution has frequency $1$ in each input, and the proposed circuit is the $R\,C\,S\,R\,C\,R$ block of Table~\ref{tab:ablation} on 2 qubits, with a single $CZ$ gate as $C$. Every circuit is trained for 300 steps with Adam at a constant learning rate of $0.015$ after gradient clipping at norm $1$, resampling $1000$ collocation, $200$ boundary and $200$ initial points at every step, over 3 shared seeds. We score the relative $L^2$ error against the exact solution on a $64\times32$ grid. Rather than matching one parameter budget, we report each template with one and two layers, and the Strongly Entangling layer with one and three, and list every parameter count in Table~\ref{tab:wave_app}. Every template with one layer outputs the zero field, which satisfies the equation and the boundary conditions but leaves the initial-condition loss at $0.51$. With more parameters, templates can exceed our accuracy, and in earlier runs with 56 parameters and 500 steps, Circuit 18 reached $1.5\times10^{-6}$.
\begin{table}[h]
\centering
\scriptsize
\caption{Wave-equation training objective.}
\label{tab:wave_setup}
\renewcommand{\arraystretch}{1.15}
\resizebox{0.8\linewidth}{!}{
\begin{tabular}{c|c|l}
\Xhline{3\arrayrulewidth}
\textbf{Component} & \textbf{Notation} & \multicolumn{1}{c}{\textbf{Equation}} \\
\Xhline{1\arrayrulewidth}
Governing equation & $E_\theta(x,t)$ &
$\displaystyle \partial_{tt}E_\theta-\partial_{xx}E_\theta=0,\quad x\in[-1,1],\ t\in[0,1]$ \\
PDE residual & $\mathcal{L}_{\mathrm{PDE}}$ &
$\displaystyle \frac{1}{N_c}\sum_{i=1}^{N_c}
\left|\partial_{tt}E_\theta(x_i,t_i)-\partial_{xx}E_\theta(x_i,t_i)\right|^2$ \\
Boundary condition & $\mathcal{L}_{\mathrm{BC}}$ &
$\displaystyle E_\theta(-1,t)=0,\quad E_\theta(1,t)=0$ \\
Initial condition & $\mathcal{L}_{\mathrm{IC}}$ &
$\displaystyle E_\theta(x,0)=\sin(\pi x),\quad \partial_t E_\theta(x,0)=0$ \\
\midrule
\textbf{Total objective} & $\mathcal{L}$ &
$\displaystyle \mathcal{L}_{\mathrm{PDE}}+10\,\mathcal{L}_{\mathrm{BC}}+10\,\mathcal{L}_{\mathrm{IC}}$ \\
\Xhline{3\arrayrulewidth}
\end{tabular}}
\end{table}
\begin{table}[h]
\centering
\scriptsize
\caption{Physics-informed wave equation with one encoding layer, mean relative $L^2$ error over 3 seeds. Every template uses 6 qubits.}
\label{tab:wave_app}
\setlength{\tabcolsep}{4pt}
\begin{tabular}{lccc|lccc}
\Xhline{3\arrayrulewidth}
\multicolumn{4}{c|}{One layer} & \multicolumn{4}{c}{More layers, 6 qubits} \\
Circuit & Qubits & $|\theta|$ & Rel.\ $L^2$ & Circuit & $L$ & $|\theta|$ & Rel.\ $L^2$ \\
\Xhline{1\arrayrulewidth}
HEA circular & 6 & 20 & $1.0004$ & Circuit 16 & 2 & 36 & $8.1\times10^{-5}$ \\
Strongly Ent. & 6 & 20 & $1.0004$ & Circuit 17 & 2 & 36 & $3.1\times10^{-5}$ \\
Circuit 15 & 6 & 14 & $1.0003$ & Circuit 18 & 2 & 38 & $3.6\times10^{-4}$ \\
Circuit 16 & 6 & 19 & $1.0006$ & Circuit 19 & 2 & 38 & $3.0\times10^{-5}$ \\
Circuit 17 & 6 & 19 & $1.0005$ & YZY & 2 & 38 & $0.336$ \\
Circuit 18 & 6 & 20 & $1.0006$ & YZY (ent.) & 2 & 38 & $0.336$ \\
Circuit 19 & 6 & 20 & $1.0005$ & HEA circular & 2 & 38 & $0.455$ \\
YZY, YZY (ent.) & 6 & 20 & $1.0006$ & Circuit 15 & 2 & 26 & $1.000$ \\
Basic Ent. & 6 & 8 & $1.0004$ & Basic Ent. & 2 & 14 & $1.000$ \\
\cline{1-4}
\textbf{Proposed} & 2 & 20 & $2.7\times10^{-5}$ & Strongly Ent. & 3 & 56 & $9.9\times10^{-3}$ \\
\Xhline{3\arrayrulewidth}
\end{tabular}
\end{table}

\subsection{Computational Resources}
Table~\ref{tab:computational_resources} lists the hardware and software we used.
\begin{table}[h]
    \centering
    \small
    \caption{Computational resources and software libraries used in the experiments.}
    \renewcommand{\arraystretch}{1.0}
    \begin{tabular}{c|c}
    \Xhline{3\arrayrulewidth}
    \textbf{Resource / Library} & \textbf{Specification} \\
    \hline
    \hline
    CPU & AMD Ryzen 7 9700X 8-Core Processor, 3.80 GHz \\
    RAM & 64 GB \\
    GPU & NVIDIA GeForce RTX 5080, 16 GB \\
    Operating System & 64-bit Windows, x64-based processor \\
    Storage & 1.82 TB SSD \\
    Quantum ML Library & PennyLane \\
    ML / Auto-differentiation Backend & JAX \\
    Optimizer Library & Optax \\
    Numerical Libraries & NumPy, Pandas \\
    Visualization Library & Matplotlib \\
    \Xhline{3\arrayrulewidth}
    \end{tabular}
    \label{tab:computational_resources}
\end{table}

\subsection{Placement Ablation}
\label{app:ablation}
\textbf{Circuits.} All circuits act on 14 qubits, encode $x$ on odd-indexed and $t$ on even-indexed qubits with $R_{X}(\pi\,\cdot)$, and measure $\hat{Z}$ on qubit $1$, which encodes $x$, except Leaf readout, which measures qubit $2$, which encodes $t$. Each $R$ applies $R_{Z}R_{Y}R_{Z}$ to every qubit ($42$ angles), $C$ applies $CZ$ between qubit $1$ and every other qubit, and $P$ applies $CZ$ along the path $2$--$1$--$3$--$4$--$\cdots$--$14$, on which qubit $1$ has one $x$-neighbour and one $t$-neighbour. The proposed circuit $R\,C\,S\,R\,C\,R$ is $W_{R}=CR$ and $W_{L}=RCR$ in operator order, so the rotation between the encoder and the star of $W_{L}$ is Align and the rotation after that star is Expose. No Expose, $R\,C\,S\,R\,R\,C$, therefore places the star directly before the readout, and No Align, $R\,C\,S\,C\,R\,R$, places it directly after the encoder.

\textbf{Counting reachable pairs.} With one encoding layer, $f(x,t)=\sum c(\omega_{x},\omega_{t})\,e^{i\pi(\omega_{x}x+\omega_{t}t)}$ with $|\omega_{x}|,|\omega_{t}|\le7$, which gives $225$ coefficients. By Corollary~\ref{cor:activation}, whose proof holds verbatim for two inputs (Appendix~\ref{app:activation}), each coefficient is, as a function of the angles, either identically zero or nonzero at almost every setting. We therefore draw all $126$ angles uniformly at random, evaluate $f$ on a $16\times16$ grid of encoding angles over one period and obtain every coefficient exactly by a two-dimensional discrete Fourier transform, which does not alias because $16>2\cdot7+1$. A pair is reachable if its coefficient is nonzero, and $(k_{x},k_{t})$ is the largest $|\omega_{x}|$ and $|\omega_{t}|$ over the reachable pairs. Over 20 draws, reachable coefficients exceed $10^{-10}$ in magnitude at every draw and the others stay below $10^{-15}$, so a single draw gives the exact count, and no data or training enters it.

\textbf{Predictions.} Without Expose, $U_{\mathrm{ent}}$ commutes with $\hat{Z}_{n}$, so $O$ acts on $n$ alone and reaches only $\omega_{t}=0$ and $|\omega_{x}|\le1$. The leaf readout and the path have $\deg(n)=1$ and $\deg(n)=2$, and the bound $|\omega_{x}|\le|L\cap V_{x}|$, $|\omega_{t}|\le|L\cap V_{t}|$ of Appendix~\ref{app:activation} gives $(1,1)$ and $(2,1)$, with every pair inside reachable at almost every parameter setting. Without Align, the merged rotations tilt the readout into $M_{1}=m_{x}\hat{X}_{n}+m_{y}\hat{Y}_{n}+m_{z}\hat{Z}_{n}$ and the star gives $O=m_{z}\hat{Z}_{n}+(m_{x}\hat{X}_{n}+m_{y}\hat{Y}_{n})\hat{Z}_{\mathcal{N}(n)}$. Each $\hat{Z}_{k}=(E_{k}^{+}+E_{k}^{-})/2$ adds $\pm1$ to the frequency of its input and all seven $t$-qubits are leaves, so the strings through the leaves reach every $\omega_{x}$ but only odd $\omega_{t}$, and $\hat{Z}_{n}$ adds $(\pm1,0)$, which leaves $15\times8+2=122$ pairs.

\textbf{Best possible MSE.} The output of a variant is $w\,f(x,t)+b$, where $f$ contains only its reachable pairs, so its error can never fall below the least-squares fit of $u$ on the constant and on $\cos(\pi(\omega_{x}x+\omega_{t}t))$ and $\sin(\pi(\omega_{x}x+\omega_{t}t))$ over those pairs. We compute this fit on the full grid of Table~\ref{tab:advection14} and divide the mean squared residual by the variance of $u$. It needs no training and bounds from below the error of every parameter setting, whatever the training budget.

\textbf{Training.} We train each variant with the settings of Table~\ref{tab:advection14} on the grid subsampled with stride $8$ ($25{,}728$ points), which gives $2{,}412$ optimizer steps against $15{,}432$, with seeds $0$ to $4$ shared across variants. At this budget the proposed circuit reaches $0.628\pm0.212$ and the four variants stay between $0.957$ and $1.022$, with Holm-corrected Welch $p=0.082$ for every variant, limited by the spread of the proposed circuit across seeds. The variant without Expose stays at the constant predictor on every seed, between $1.000$ and $1.058$, as its output has a single frequency in $x$ and none in $t$. Every trained error lies above the bound of its variant.

\newpage
\section{Additional Experiment Results}
\subsection{Case Study of $O$ and $\rho$}
We provide additional ablation results to further support the projection-based interpretation of Proposition~\ref{proposition:multi-qubit_Projected}. In the main text, we argued that the encoder determines the accessible frequency set, while the actual Fourier amplitudes depend on the projections of the effective observable $O$ and effective state $\rho$ onto the corresponding invariant planes. Tables~\ref{tab:tableeffectofO} and \ref{tab:tableeffectofRho} provide detailed circuit-level evidence for this mechanism.

Table~\ref{tab:tableeffectofO} examines the effect of entanglement placement on the effective observable $O$. Circuit (i) follows the proposed structure, where local rotations are placed before and after the entangling layer. This allows the initially local measurement operator to be first rotated into a noncommuting Pauli direction, then dressed by entanglement into mixed multi-qubit Pauli strings, and finally adjusted by the remaining local rotations. As a result, the effective observable can acquire nonzero overlap with higher-frequency invariant planes. This explains why Circuit (i) accurately reconstructs $f_{\mathrm{target},1}(x)$.

In contrast, Circuits (ii)--(iv) modify the placement of the local rotations and entangling blocks. Although these circuits may still contain entanglement, their ordering does not effectively promote the measured observable into the mixed Pauli-string components required for higher-frequency activation. Consequently, the projection of $O$ onto the relevant invariant subspaces is weakened, and the learned function underfits the higher-frequency oscillations. This confirms that entanglement alone is not sufficient; its placement relative to local rotations is essential for shaping the effective observable.

Table~\ref{tab:tableeffectofRho} further investigates the role of the effective state $\rho=W_R\rho_0W_R^\dagger$. Since each Fourier coefficient depends on the joint projection of both $\rho$ and $O$, a target frequency can be suppressed if either side has insufficient overlap with the corresponding invariant plane. Circuit (i) jointly dresses both the state and observable, enabling both operators to support the mixed Pauli strings associated with $f_{\mathrm{target},2}(x)$. This leads to the best reconstruction performance.

The remaining circuits break this mechanism in different ways. Circuit (ii) mainly shapes the observable side but leaves the state insufficiently prepared. Circuit (iii) introduces entanglement on the state side without first generating the necessary local noncommuting components, limiting its ability to populate the target invariant subspaces. Circuit (iv) partially restores observable-side dressing, but the state remains unentangled, which limits the size of its projection onto the target planes. These results show that higher-frequency learning requires coordinated design of both $O$ and $\rho$, rather than arbitrary entanglement placement.

Overall, the appendix results reinforce the main conclusion: the success of the proposed architecture comes from its ability to align both the effective observable and effective state with the invariant frequency subspaces induced by the data-encoding generator. This provides a circuit-level explanation for why the proposed entanglement placement activates higher Fourier components more reliably than heuristic alternatives.
\begin{table*}[h]
\centering
\small
\caption{Effect of entanglement placement and $O$ on PQC. }
\renewcommand{\arraystretch}{1.0}
\resizebox{0.99\linewidth}{!}{
\begin{tabular}{c|c||c|c}
\toprule[1pt]
\textbf{Circuit (i)} & $f_{\rm target,1}(x)$ & Circuit (ii) & $f_{\rm target,1}(x)$
\\ 
\midrule[1pt]
\includegraphics[width=0.26\linewidth]{img/sandwichcircuit.pdf}     & \includegraphics[width=0.22\linewidth]{img/sandwich_PQC_target.png}  &   
\includegraphics[width=0.26\linewidth]{img/New_left.pdf}     & \includegraphics[width=0.22\linewidth]{img/New_left.png}     \\ 

\toprule[1pt]
Circuit (iii) & $f_{\rm target,1}(x)$ & Circuit (iv) & $f_{\rm target,1}(x)$
\\
\midrule[1pt]

\includegraphics[width=0.27\linewidth]{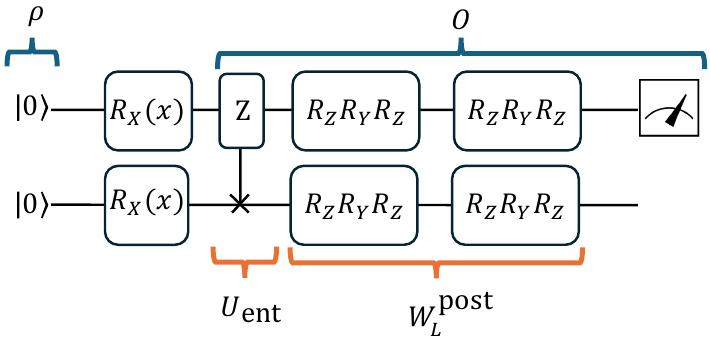}     & \includegraphics[width=0.22\linewidth]{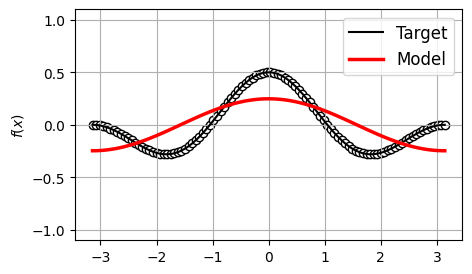}    
&
\includegraphics[width=0.27\linewidth]{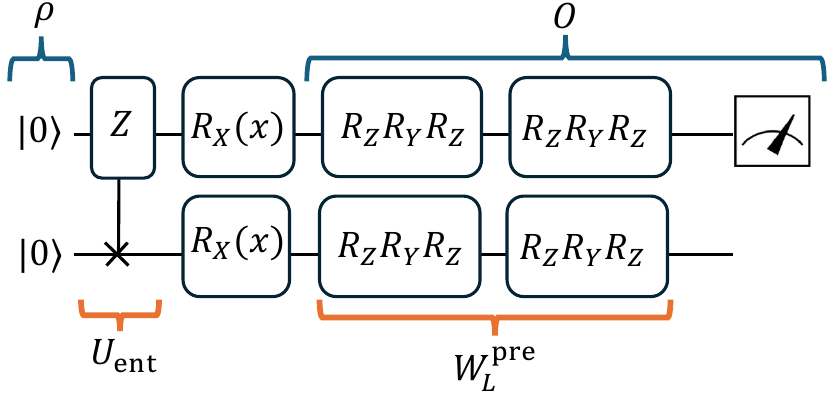}     & \includegraphics[width=0.22\linewidth]{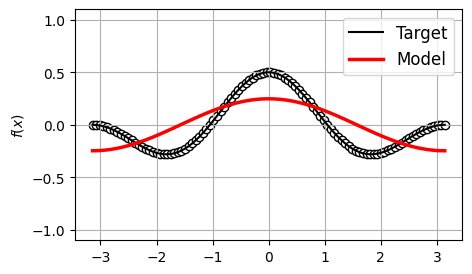}     \\

\bottomrule[1pt]
\end{tabular}
}
\label{tab:tableeffectofO}
\end{table*}

\begin{table*}[h]
\centering
\small
\caption{Effect of changing initial quantum state on PQC. }
\renewcommand{\arraystretch}{1.0}
\resizebox{0.99\linewidth}{!}{
\begin{tabular}{c|c||c|c}
\toprule[1pt]
Circuit (i) (Proposed) & $f_{\rm target,2}(x)$ & Circuit (ii) & $f_{\rm target,2}(x)$
\\ 
\midrule[1pt]
\includegraphics[width=0.27\linewidth]{img/PreRotEntRho.pdf}     & \includegraphics[width=0.22\linewidth]{img/FourQubitEntangledPreRotation.png}    &   
\includegraphics[width=0.22\linewidth]{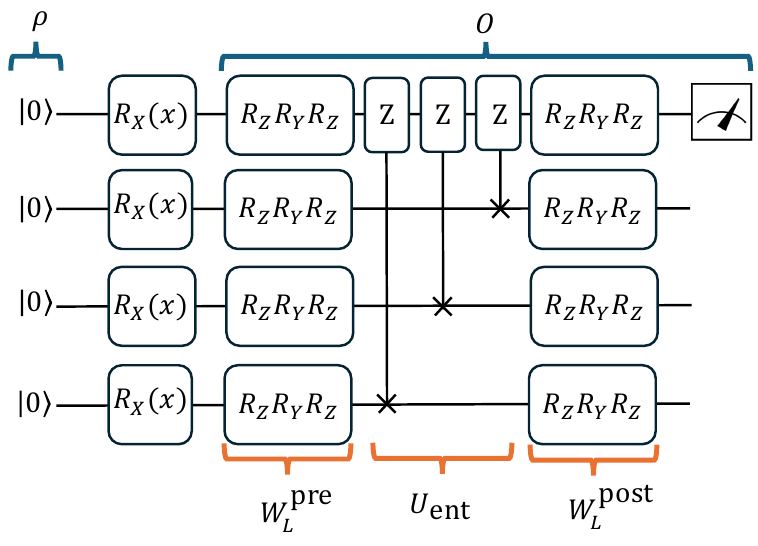}     & \includegraphics[width=0.22\linewidth]{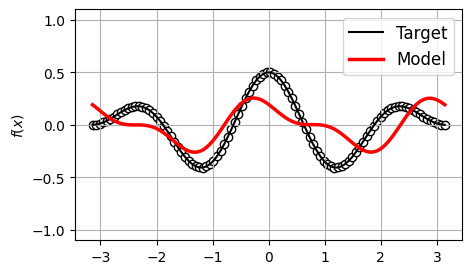}
    \\   

\toprule[1pt]
Circuit (iii) & $f_{\rm target,2}(x)$ & Circuit (iv) & $f_{\rm target,2}(x)$
\\
\midrule[1pt]

\includegraphics[width=0.28\linewidth]{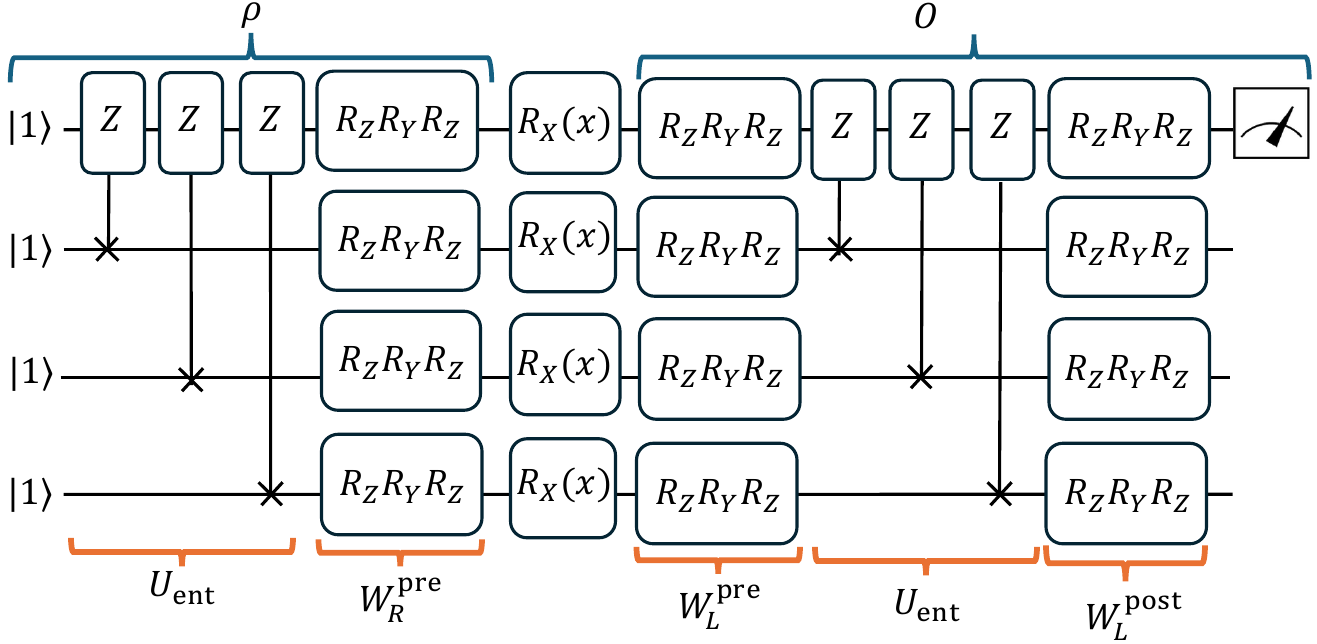}     & \includegraphics[width=0.22\linewidth]{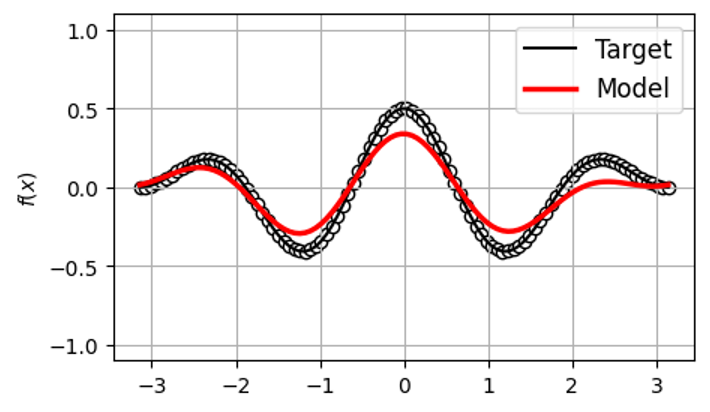} &

\includegraphics[width=0.26\linewidth]{img/PreRotRho.pdf}     & \includegraphics[width=0.22\linewidth]{img/FourQubitSimplePreRotation.png} \\

\bottomrule[1pt]
\end{tabular}
\label{tab:tableeffectofRho}
}
\end{table*}

\subsection{Results with Data Re-uploading}
\label{app:reupload}
Our analysis covers one encoding layer, so we treat data re-uploading as a test beyond it. The setting favors the templates, since every added layer re-encodes the input (Appendix~\ref{appendix:DataReupload}). We run it on advection at 7 qubits, which keeps an exact density-matrix noise study tractable, with every circuit depth-scaled to about 330 parameters (Appendix~\ref{appendix:ExpDetails}). Both proposed variants lead every matched template sharing our $R_X(\pi x)$ encoding (Holm $p<0.001$), the better reaching $0.104$ against $0.169$ for the nearest, Circuit 17, while using fewer parameters and $48$ two-qubit gates against the $96$--$329$ of the entangling templates (Table~\ref{tab:advection}).

\begin{table}[h]
\caption{Supervised advection regression at 7 qubits with re-uploading, test MSE over 8 seeds. \textbf{Bold} and \textcolor{blue}{\textbf{blue bold}} mark the best and second-best result.}
\label{tab:advection}
\centering
\scriptsize
\setlength{\tabcolsep}{4pt}
\begin{tabular}{l|cc|c||l|cc|c}
\Xhline{3\arrayrulewidth}
\textbf{Method} & $|\theta|$ & 2q & MSE $\downarrow$ & \textbf{Method} & $|\theta|$ & 2q & MSE $\downarrow$ \\
\Xhline{1\arrayrulewidth}
Circuit 18 & 338 & 112 & $0.1699 \pm 0.0181$ & Circuit 19 & 338 & 112 & $0.1762 \pm 0.0216$ \\
Circuit 17 & 322 & 96 & $0.1694 \pm 0.0263$ & Circuit 16 & 322 & 96 & $0.1833 \pm 0.0168$ \\  HEA circular & 338 & 112 & $0.4378 \pm 0.0525$ & 
YZY (ent.) & 338 & 96 & $0.3336 \pm 0.0292$   \\
StronglyEnt. & 338 & 112 & $0.1835 \pm 0.0106$ & BasicEnt. & 331 & 329 & $0.5170 \pm 0.0146$ \\
\Xhline{1\arrayrulewidth}
\textbf{Proposed} & 254 & 48 & \textcolor{blue}{$\mathbf{0.1158 \pm 0.0189}$} & \textbf{Proposed ($\gamma$)} & 302 & 48 & $\mathbf{0.1041 \pm 0.0073}$ \\
\Xhline{3\arrayrulewidth}
\end{tabular}
\end{table}

\textbf{Noise at 7 qubits.} We evaluate the noiselessly trained advection models of Table~\ref{tab:advection} under three noise models, after retraining each from its original seed to reproduce its noiseless error exactly. Amplitude and phase damping of strength $\lambda$ acts on every qubit after every circuit moment, so exposure grows with depth. Depolarizing noise of strength $\epsilon$ acts after every two-qubit gate and at $\epsilon/10$ after every single-qubit gate, so exposure grows with gate count. Finite-shot readout estimates each expectation from $S$ shots. In Table~\ref{tab:advnoise}, the proposed circuit, which has the fewest two-qubit gates, has the lowest mean error at $\lambda=10^{-3}$, at both gate-noise levels and at $S=512$, while the chain templates, Circuits 16 and 17, lead under stronger damping and at $S=128$.

\begin{table}[h]
\centering
\caption{Test MSE under noise of the advection models of Table~\ref{tab:advection}, mean over 5 seeds. \textbf{Bold} and \textcolor{blue}{\textbf{blue bold}} mark the best and second-best per column.}
\label{tab:advnoise}
\fontsize{6.5}{7.0}\selectfont
\setlength{\tabcolsep}{2.5pt}
\renewcommand{\arraystretch}{0.90}
\resizebox{0.85\textwidth}{!}{%
\begin{tabular}{l|ccc|c|cc|cc|cc}
\Xhline{3\arrayrulewidth}
& & & & & \multicolumn{2}{c|}{Damping $\lambda$} & \multicolumn{2}{c|}{Gate noise $\epsilon$} & \multicolumn{2}{c}{Shots $S$} \\
\textbf{Method} & $|\theta|$ & 2q & Dep. & Clean & $10^{-3}$ & $3{\times}10^{-3}$ & $3{\times}10^{-3}$ & $10^{-2}$ & 128 & 512 \\
\Xhline{1\arrayrulewidth}
Circuit 18 & 338 & 112 & 174 & $0.1757$ & $0.2551$ & $0.5216$ & $0.2122$ & $0.3829$ & $0.2523$ & $0.1966$ \\
Circuit 19 & 338 & 112 & 174 & $0.1857$ & $0.2809$ & $0.5880$ & $0.2246$ & $0.4187$ & $0.3020$ & $0.2130$ \\
Circuit 17 & 322 & 96 & 98 & $0.1743$ & $0.2028$ & $\mathbf{0.3280}$ & $0.1935$ & $0.3031$ & $\mathbf{0.2422}$ & $0.1919$ \\
Circuit 16 & 322 & 96 & 98 & $0.1837$ & $0.2134$ & \textcolor{blue}{$\mathbf{0.3468}$} & $0.2024$ & $0.3129$ & \textcolor{blue}{$\mathbf{0.2435}$} & $0.1990$ \\
StronglyEnt. & 338 & 112 & 92 & $0.1845$ & $0.3076$ & $0.6529$ & $0.2785$ & $0.6202$ & $0.4749$ & $0.2595$ \\
YZY (ent.) & 338 & 96 & 114 & $0.3349$ & $0.4391$ & $0.7393$ & $0.3995$ & $0.6468$ & $0.7325$ & $0.4291$ \\
HEA circular & 338 & 112 & 190 & $0.4154$ & $0.6470$ & $0.9626$ & $0.5014$ & $0.8092$ & $0.8090$ & $0.5181$ \\
BasicEnt. & 331 & 329 & 423 & $0.5179$ & $0.9198$ & $1.0191$ & $0.7691$ & $1.0022$ & $0.6810$ & $0.5537$ \\
\Xhline{1\arrayrulewidth}
Proposed & 254 & 48 & 64 & \textcolor{blue}{$\mathbf{0.1203}$} & \textcolor{blue}{$\mathbf{0.1811}$} & $0.4182$ & \textcolor{blue}{$\mathbf{0.1477}$} & \textcolor{blue}{$\mathbf{0.2986}$} & $0.3619$ & \textcolor{blue}{$\mathbf{0.1804}$} \\
\textbf{Proposed ($\gamma$)} & 302 & 48 & 64 & $\mathbf{0.1077}$ & $\mathbf{0.1705}$ & $0.4186$ & $\mathbf{0.1358}$ & $\mathbf{0.2918}$ & $0.3449$ & $\mathbf{0.1692}$ \\
\Xhline{3\arrayrulewidth}
\end{tabular}%
}
\end{table}

\subsection{Noise at 14 Qubits}
\label{app:noise14}
An exact density matrix at 14 qubits has $4^{14}$ entries per state, so we simulate gate noise and damping by quantum trajectories, with the noise models of the 7-qubit study in Appendix~\ref{app:reupload}. Gate noise is a Pauli channel, which the trajectories unravel exactly, and damping uses the Monte-Carlo wavefunction method. Against an exact simulator at 5 qubits with 800 trajectories, every tested setting agrees within $1.3$ standard errors. The MSE of a trajectory-averaged prediction carries a positive bias equal to the variance of that average, and we subtract its sample estimate. We compare against HEA, the strongest template in Table~\ref{tab:advection14}, because trajectory simulation of the Strongly Entangling layer exceeded our compute budget, and HEA's damping cells stay unresolved at 512 trajectories, since its readout scale of $18.5$ amplifies the trajectory variance. Circuits 16--19 and YZY (ent.) ignore the readout, with $|w|\le0.09$, so noise cannot change their error. Finite-shot readout adds a variance of about $w^{2}(1-\langle Z_{0}\rangle^{2})/S$ to each prediction, and the shot errors of different circuits scale with $w^{2}$ to within $1\%$.

\subsection{Advection Equation}
\label{appendix:advection_equation}

We further evaluate the proposed circuit on the one-dimensional advection equation, which describes transport-dominated dynamics where an initial profile propagates through space at a constant speed without changing its shape. The governing equation is
\begin{equation}
    \partial_t u(x,t) + \beta \partial_x u(x,t)=0,
\end{equation}
where $\beta$ denotes the advection speed. This task is useful for testing whether the model can preserve coherent spatiotemporal wave patterns, since the solution appears as diagonal stripe structures in the space-time domain. We form supervised samples $\{((x_i,t_i),u(x_i,t_i))\}_{i=1}^{N}$ from the PDEBench space-time grid and train the quantum model $f_\theta:(x,t)\mapsto \hat{u}_\theta(x,t)$ by minimizing the mean-squared error
\begin{equation}
    \min_{\theta}
    \frac{1}{N}\sum_{i=1}^{N}
    \left|
    \hat{u}_\theta(x_i,t_i)-u(x_i,t_i)
    \right|^2 .
\end{equation}

Figures~\ref{fig:Advection_equation}, \ref{fig:Advection_equation_02} and \ref{fig:Advection_equation_04} show the reconstruction results for $\beta=0.1$, $0.2$ and $0.4$. Across all advection speeds, the Basic Entangler and Strongly Entangling baselines fail to accurately reproduce the diagonal transport patterns, producing distorted or over-smoothed fields. In contrast, the proposed circuit closely matches the ground-truth stripe patterns for all tested values of $\beta$, indicating that the proposed entanglement design can effectively represent transport-dominated spatiotemporal fields.

\begin{figure}[h]
    \centering
    \small
    \includegraphics[width=0.20\linewidth]{img/Advection/advection_04_qubit14_layer1_colormap.png}
    \begin{tabular}{cccc}
     \includegraphics[width=0.17\linewidth]{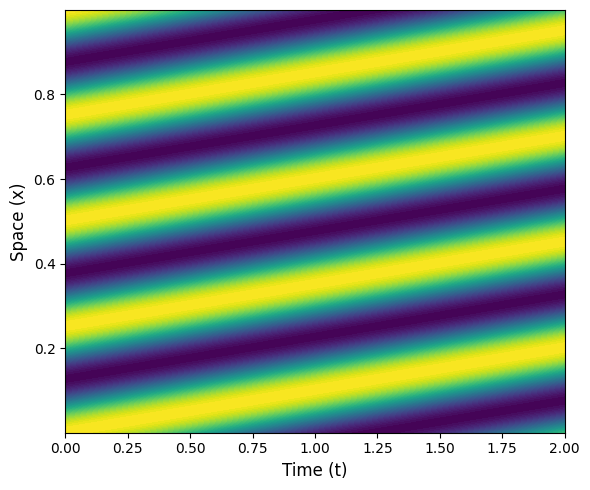} & 
     \includegraphics[width=0.17\linewidth]{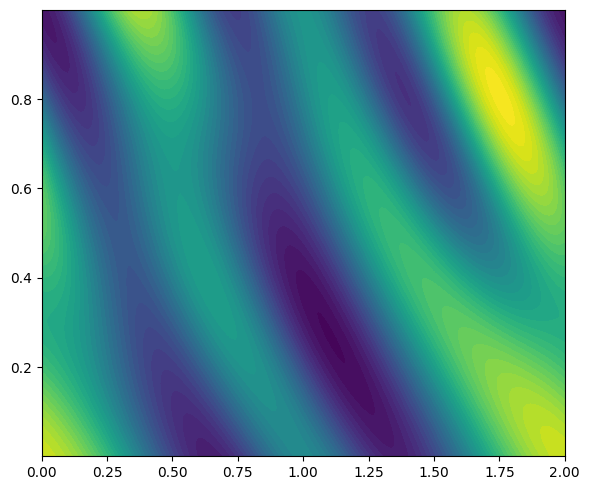} &
     \includegraphics[width=0.17\linewidth]{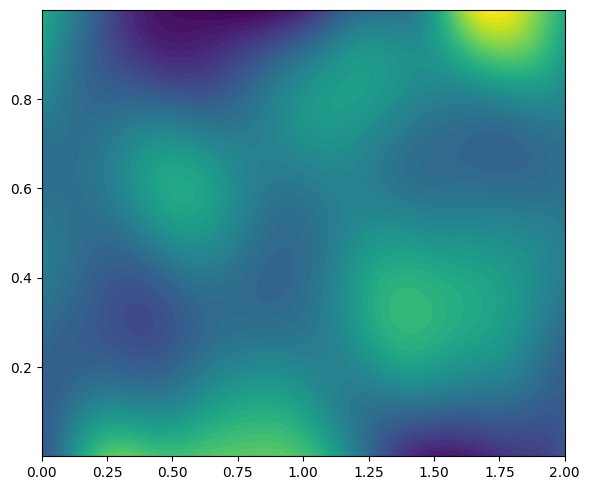} &
     \includegraphics[width=0.17\linewidth]{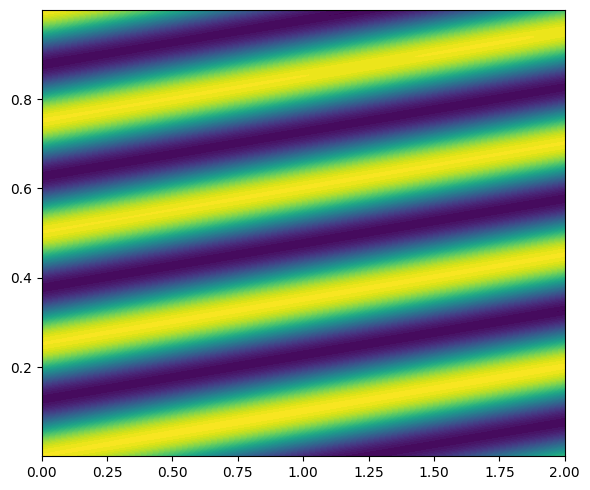}\\
     Ground Truth & Basic Entangler & Strongly Entangling & Proposed 
    \end{tabular}
    \caption{\textbf{Prediction results of the algorithms}. Advection equation ($\beta=0.1$).}
    \label{fig:Advection_equation}
\end{figure}

\begin{figure}[h]
    \centering
    \small
    \includegraphics[width=0.20\linewidth]{img/Advection/advection_04_qubit14_layer1_colormap.png}
    \begin{tabular}{cccc}
     \includegraphics[width=0.17\linewidth]{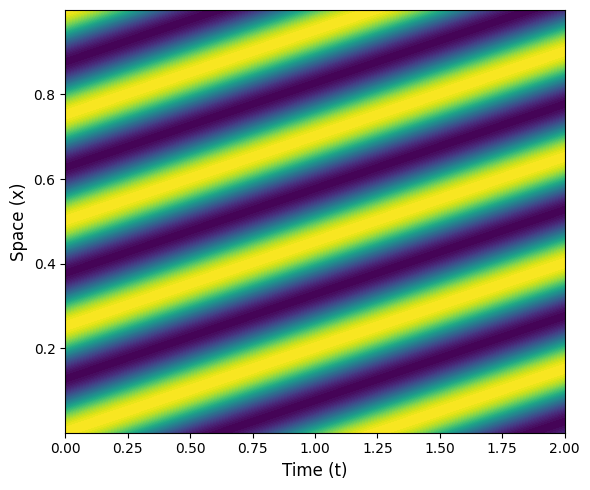} & 
     \includegraphics[width=0.17\linewidth]{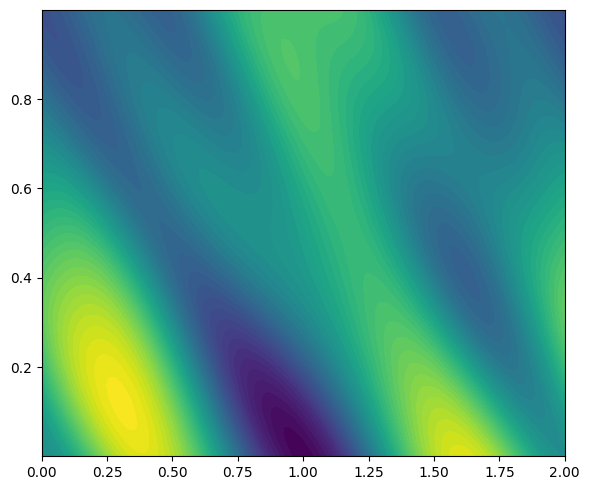} &
     \includegraphics[width=0.17\linewidth]{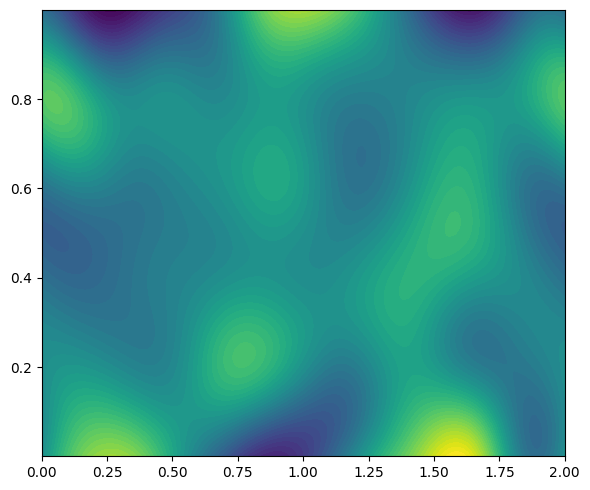} &
     \includegraphics[width=0.17\linewidth]{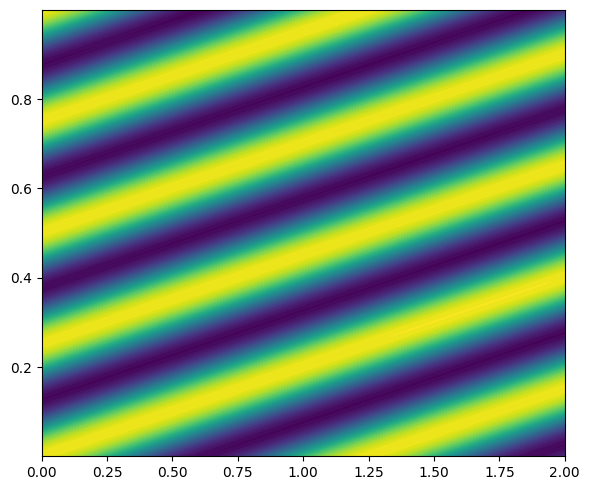}\\
     Ground Truth & Basic Entangler & Strongly Entangling &  Proposed 
    \end{tabular}
    \caption{\textbf{Prediction results of the algorithms}. Advection equation ($\beta=0.2$).}
    \label{fig:Advection_equation_02}
\end{figure}

\subsection{Adaptation to Environment Changes}
\label{app:antenna}
We further test whether a model can learn a family of fields across propagation environments. An antenna phase vector $\boldsymbol{\phi}\in[0,2\pi]^M$ with $M=4$ is concatenated with the space-time coordinates, so that each input is $[x,t,\phi_1,\ldots,\phi_4]$, and the model learns the parametric field $\hat{E}_\theta(x,t;\boldsymbol{\phi})$ over $x,t\in[0,1]$. The target is $E(x,t;\boldsymbol{\phi})=\cos(\pi(t-x))+A(\boldsymbol{\phi})\cos(\pi(t+x)+\psi(\boldsymbol{\phi}))$, where $Ae^{j\psi}=\frac{1}{4}\sum_{m}e^{j\phi_m}$ is the effective array response, and every model is trained with the wave-equation objective of Table~\ref{tab:antenna_adaptation_setup}, with boundary and initial values taken from $E$. Each epoch resamples 2000 collocation points over $(x,t,\boldsymbol{\phi})$, 400 boundary points at $x\in\{0,1\}$ and 400 initial points at $t=0$, and every model is trained for 500 epochs with Adam under an exponential learning-rate decay of $0.99$ every 50 steps after gradient clipping at global norm $1.0$.

\begin{table}[h]
\centering
\scriptsize
\caption{Antenna-configuration field-learning setup. $\mathcal{L}_{\mathrm{PDE}}$ is the residual of $\partial_{tt}\hat{E}_\theta-\partial_{xx}\hat{E}_\theta$, and $\mathcal{L}_{\mathrm{BC}}$, $\mathcal{L}_{\mathrm{IC}}$ match $E$ at $x\in\{0,1\}$ and $E,\partial_tE$ at $t=0$.}
\label{tab:antenna_adaptation_setup}
\renewcommand{\arraystretch}{1.15}
\resizebox{0.75\linewidth}{!}{
\begin{tabular}{c|c|l}
\Xhline{3\arrayrulewidth}
\textbf{Component} & \textbf{Notation} & \multicolumn{1}{c}{\textbf{Equation}} \\
\Xhline{1\arrayrulewidth}
Parametric field & $E_\theta$ &
$\displaystyle E_\theta:(x,t,\boldsymbol{\phi})\mapsto \hat{E}_\theta(x,t;\boldsymbol{\phi}),\quad x,t\in[0,1],\ \boldsymbol{\phi}\in[0,2\pi]^M$ \\
Effective array response & $\Gamma(\boldsymbol{\phi})$ &
$\displaystyle \Gamma(\boldsymbol{\phi})=\frac{1}{M}\sum_{m=1}^{M} e^{j\phi_m}=A(\boldsymbol{\phi})e^{j\psi(\boldsymbol{\phi})}$ \\
Amplitude / phase & $A,\psi$ &
$\displaystyle A(\boldsymbol{\phi})=|\Gamma(\boldsymbol{\phi})|,\quad \psi(\boldsymbol{\phi})=\arg(\Gamma(\boldsymbol{\phi}))$ \\
Analytical field & $E(x,t;\boldsymbol{\phi})$ &
$\displaystyle E(x,t;\boldsymbol{\phi})=\cos\!\big(\pi(t-x)\big)+A(\boldsymbol{\phi})\cos\!\big(\pi(t+x)+\psi(\boldsymbol{\phi})\big)$ \\
\midrule
\textbf{Training objective} & $\mathcal{L}$ &
$\displaystyle \mathcal{L}_{\mathrm{PDE}}+10\,\mathcal{L}_{\mathrm{BC}}+10\,\mathcal{L}_{\mathrm{IC}}$ \\
\Xhline{3\arrayrulewidth}
\end{tabular}}
\end{table}

The Q-PINN uses 6 qubits, encoding one input per qubit, and four re-uploading layers, each applying general single-qubit rotations, a star-graph $CZ$ entangler, the encoding $R_X(fz)$ of the qubit's input $z$ with a trainable frequency $f$, further rotations and a second star-graph entangler, followed by a scaled and shifted $\langle Z_0\rangle$ readout (170 parameters). It is trained at learning rate $0.015$ and retrained over three seeds, one of which reproduces the reported model within seed-level noise. The tanh PINN is a fully connected network with three hidden layers of width 128 (34,049 parameters) that receives $(\pi x,\pi t)$ and the phases in radians, as the Q-PINN does, and is trained at learning rate $10^{-3}$. With inputs normalized to $[-1,1]$ and learning rate $5\times10^{-3}$, the same network reaches $0.11$ and $0.16$ over three seeds. The tanh network of comparable size, with two hidden layers of 10 units (191 parameters), takes normalized inputs and the learning rate with the lowest training loss among $\{10^{-3},5\times10^{-3},1.5\times10^{-2}\}$ over five seeds.

Evaluation uses a $100\times100$ grid in $(x,t)$ for each of six configurations, Config.~1 $[0,\pi/2,\pi/3,\pi/4]$ and Config.~2 $[0,0,0,0]$ of Table~\ref{tab:final_results} together with $[\pi/2,\pi/2,\pi/5,\pi/5]$, $[0,\pi/2,\pi,3\pi/4]$, $[\pi/5,\pi/5,\pi/5,\pi/5]$ and $[0,\pi/7,0,\pi/7]$, and reports the relative $L^2$ error $\|\hat{E}_\theta-E\|_F/\|E\|_F$. On Config.~1 and Config.~2, the Q-PINN reconstructs the field to relative $L^2$ errors of $0.35\pm0.04$ and $0.25\pm0.01$ over three seeds, while the tanh network of comparable size reaches $0.50\pm0.08$ and $0.61\pm0.11$ over five seeds (Holm $p\le0.022$). Over the six configurations the reported Q-PINN averages $0.33$ and is more accurate than the tanh PINN on three of them (average $0.44$).

\begin{table*}[htb]
\centering
\small
\caption{Field reconstruction under two antenna configurations.}
\setlength{\tabcolsep}{1pt}
\resizebox{\linewidth}{!}{
\begin{tabular}{ccc|ccc}
\toprule[1pt]
\multicolumn{3}{c|}{Config.~1, $\boldsymbol{\phi}=[0,\frac{\pi}{2},\frac{\pi}{3},\frac{\pi}{4}]$} & \multicolumn{3}{c}{Config.~2, $\boldsymbol{\phi}=[0,0,0,0]$} \\
Ground Truth & Q-PINN & Classic PINN & Ground Truth & Q-PINN & Classic PINN \\
\midrule
\includegraphics[width=0.17\linewidth]{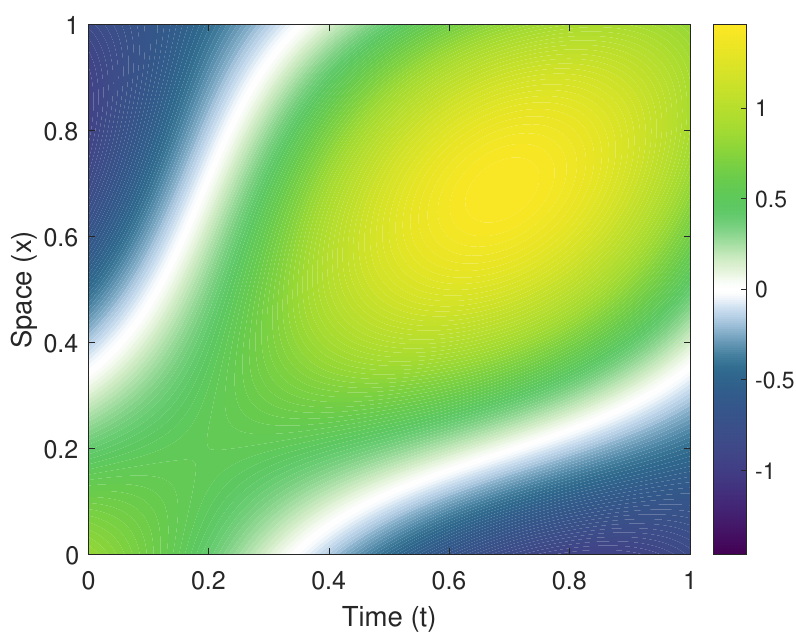} &
\includegraphics[width=0.17\linewidth]{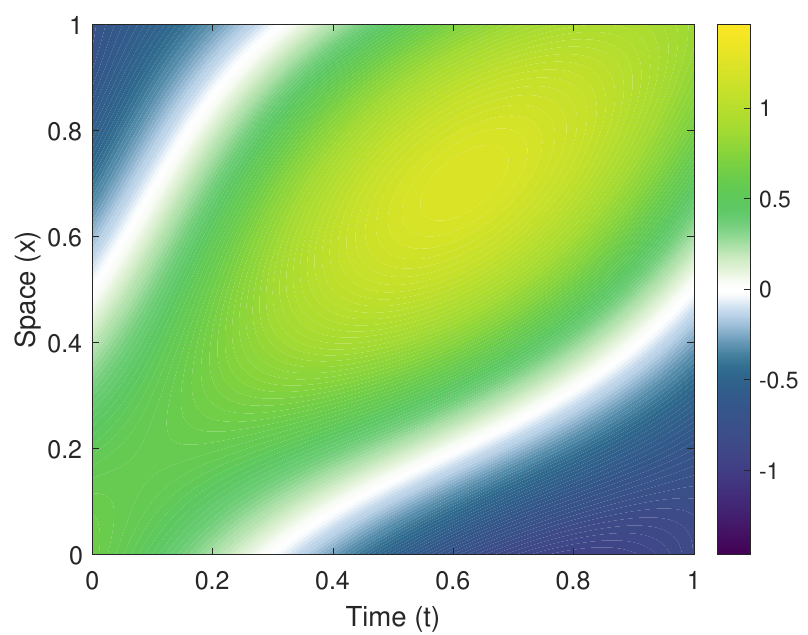} &
\includegraphics[width=0.17\linewidth]{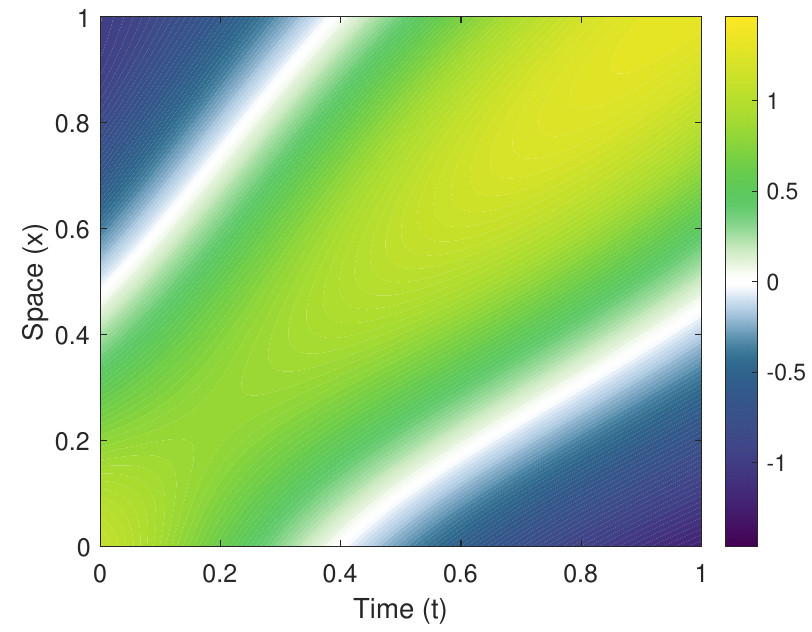} &
\includegraphics[width=0.17\linewidth]{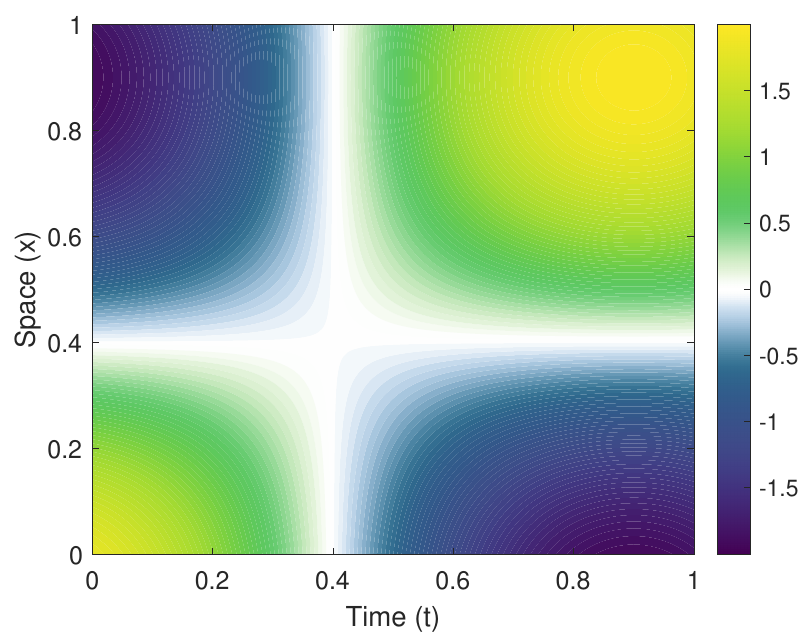} &
\includegraphics[width=0.17\linewidth]{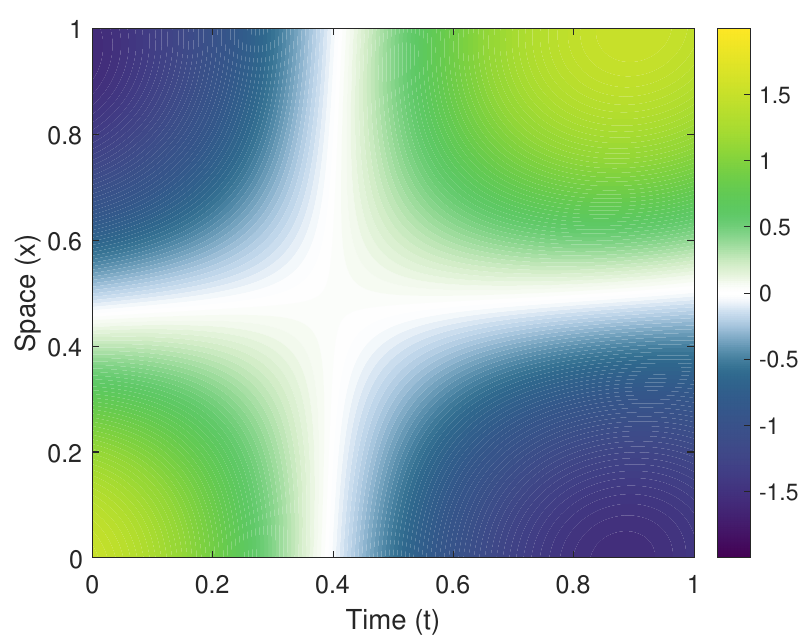} &
\includegraphics[width=0.17\linewidth]{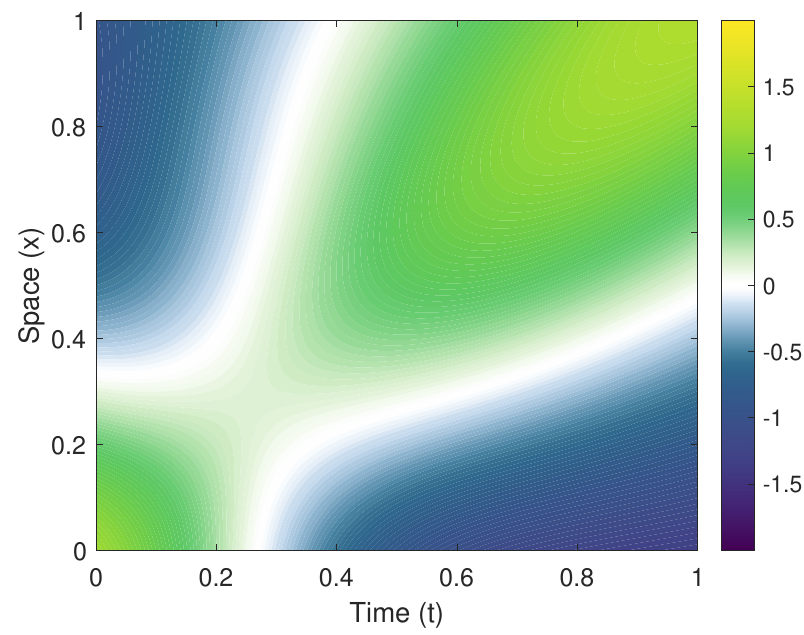} \\
\bottomrule[1pt]
\end{tabular}}
\label{tab:final_results}
\vspace{-4mm}
\end{table*}


\newpage
\section{Proofs of Lemmas and Propositions}
\label{sec:appendix_proofsofLemmasandprop}
\begin{definition}[\textbf{Baker-Campbell-Hausdorff}]
\label{def:BCH}
Let $A,B$ be elements of a Lie algebra for which the BCH series converges.
Then there exists $C$ such that
\begin{equation}
e^{A}e^{B}=e^{C},
\end{equation}
where $C$ is given by the BCH expansion
\begin{align}
C=& A+B+\frac{1}{2}[A,B] \notag \\ &+ \frac{1}{12}[A,[A,B]]+\frac{1}{12}[B,[B,A]] \notag \\ &-\frac{1}{24}[B,[A,[A,B]]]+\cdots .
\end{align}
\end{definition}

\subsection{Proof of Hadamard's Lemma (Lemma \ref{lem:Hadamard's Lemma1})}
\label{appendix:Hadamard_Lemma}
\begin{proof}
Define $F(t)=e^{tA}Be^{-tA}$. Then $F(0)=B$. Differentiate using the product rule:
\begin{equation}
F'(t)=Ae^{tA}Be^{-tA}-e^{tA}BAe^{-tA}=[A,F(t)].
\end{equation}
Evaluating at $t=0$ gives
\begin{equation}
F'(0)=[A,B].
\end{equation}
Differentiate again:
\begin{equation}
F''(t)=[A,F'(t)]=[A,[A,F(t)]]
\quad\Rightarrow\quad
F''(0)=[A,[A,B]].
\end{equation}
By induction,
\begin{equation}
F^{(k)}(t)=\operatorname{ad}_A^k(F(t))
\quad\Rightarrow\quad
F^{(k)}(0)=\operatorname{ad}_A^k(B),
\end{equation}
where $\operatorname{ad}_A^0(B):=B$ and $\operatorname{ad}_A^{k+1}(B):=[A,\operatorname{ad}_A^k(B)]$.

Apply the Taylor expansion of $F(t)$ about $t=0$:
\begin{equation}
F(t)=\sum_{k=0}^{\infty}\frac{t^k}{k!}F^{(k)}(0)
=\sum_{k=0}^{\infty}\frac{t^k}{k!}\,\operatorname{ad}_A^k(B).
\end{equation}
Setting $t=1$ yields $e^{A}Be^{-A}=\sum_{k\ge 0}\frac{1}{k!}\operatorname{ad}_A^k(B)$, and the general-$t$ form follows. This completes the proof. 
\end{proof}

\subsection{Proof of Proposition \ref{prop:2Dclosure} (Closed Invariant Subspace.)}
\label{appendix:closed_invariant}
\begin{proof}
Let $S(t)=e^{-\frac{it}{2}A}$, by the Hadamard's lemma, 
    \begin{equation}
        e^{it/2A}Be^{-it/2A}=\sum_{k\ge 0}\frac{1}{k!}\left(\frac{it}{2}\right)^k\mathrm{ad}_A^k(B),
    \end{equation}
    where 
            $[A,B]=\mathrm{ad}_A(B)=-2i\omega C, \qquad [A,C]=\mathrm{ad}_A(C)=2i\omega B. $
        
        Using this, compute the first few nested commutators:
        \begin{equation}
        \mathrm{ad}_A(B)=-2i\omega C,\qquad
        \mathrm{ad}_A^2(B)=\mathrm{ad}_A(-2i\omega C)=-2i\omega(2i\omega B)=4\omega^2 B, \notag
        \end{equation}
        \begin{equation}
        \mathrm{ad}_A^3(B)=\mathrm{ad}_A(4\omega^2 B)=4\omega^2(-2i\omega C)=-8i\omega^3 C,
        \qquad
        \mathrm{ad}_A^4(B)=16\omega^4 B, \notag
        \end{equation}
        Every $k\ge0$ is either even or odd, so setting $k=2m$ and $k=2m+1$ with $m\ge0$ splits the Hadamard series into two sums:
    \begin{equation}
        e^{\frac{it}{2}A}Be^{-\frac{it}{2}A}
        =\underbrace{\sum_{m\ge 0}\frac{1}{(2m)!}\left(\frac{it}{2}\right)^{2m}\mathrm{ad}_A^{2m}(B)}_{\rm Even}
        +\underbrace{\sum_{m\ge 0}\frac{1}{(2m+1)!}\left(\frac{it}{2}\right)^{2m+1}\mathrm{ad}_A^{2m+1}(B)}_{\rm Odd}.\notag
    \end{equation}

    \textbf{Even terms.} $i^{2m}=(-1)^m$ and $\mathrm{ad}_A^{2m}(B)=2^{2m}\omega^{2m}B$:
    \begin{equation}
        \frac{1}{(2m)!}\cdot\frac{(-1)^m t^{2m}}{2^{2m}}\cdot 2^{2m}\omega^{2m}B
        =\frac{(-1)^m(\omega t)^{2m}}{(2m)!}\,B.\notag
    \end{equation}

    \textbf{Odd terms.} $\left(\frac{it}{2}\right)^{2m+1}=\dfrac{i\,(-1)^m t^{2m+1}}{2^{2m+1}}$
    and $\mathrm{ad}_A^{2m+1}(B)=-i\,2^{2m+1}\omega^{2m+1}C$:
    \begin{equation}
        \frac{1}{(2m+1)!}\cdot\frac{i\,(-1)^m t^{2m+1}}{2^{2m+1}}
        \cdot\left(-i\,2^{2m+1}\omega^{2m+1}\right)C
        =\frac{(-1)^m(\omega t)^{2m+1}}{(2m+1)!}\,C.\notag
    \end{equation}
    Resumming the two series gives the first identity:
    \begin{align}
        e^{\frac{it}{2}A}Be^{-\frac{it}{2}A}
        &=B\sum_{m\ge 0}\frac{(-1)^m(\omega t)^{2m}}{(2m)!}
        +C\sum_{m\ge 0}\frac{(-1)^m(\omega t)^{2m+1}}{(2m+1)!} \\
        &=B\cos(\omega t)+C\sin(\omega t).
    \end{align}
    We can make similar case for 
    \begin{align}
        e^{\frac{it}{2}A}
        Ce^{-\frac{it}{2}A}=C\cos(\omega t)-B\sin(\omega t).
    \end{align}
    This completes the proof.
\end{proof}

\subsection{Proof of Proposition \ref{prop:planes}}
\label{appendix:invariant_planes}
\begin{proof}
Each qubit $j$ is assigned one of $\hat I_j$, $\hat X_j$, $E_j^{+}=\hat Z_j+i\hat Y_j$,
$E_j^{-}=\hat Z_j-i\hat Y_j$, and $\langle A,B\rangle = 2^{-N}\operatorname{Tr}(A^{\dagger}B)$.
 
\textbf{Step 1 (single-qubit eigen-operators).}
Only the $j$-th term of $G=\sum_k \hat X_k$ fails to commute with an operator supported on qubit $j$.
From $[X,Z]=-2iY$ and $[X,Y]=2iZ$,
\begin{equation}
  \operatorname{ad}_G\!\left(E_j^{\pm}\right)
  = \big[\hat X_j,\ \hat Z_j \pm i\hat Y_j\big]
  = -2i\hat Y_j \pm i\big(2i\hat Z_j\big)
  = \mp 2\big(\hat Z_j \pm i\hat Y_j\big)
  = \mp 2\,E_j^{\pm},
\end{equation}
while $\operatorname{ad}_G(\hat I_j)=\operatorname{ad}_G(\hat X_j)=0$.
 
\textbf{Step 2 (eigenvalue of a product).}
By the Jacobi identity $\operatorname{ad}_G$ is a derivation,
$\operatorname{ad}_G(AB)=\operatorname{ad}_G(A)B+A\operatorname{ad}_G(B)$.
The factors of $E_c$ act on pairwise disjoint qubits, so applying the derivation rule across them
and using Step 1,
\begin{equation}
  \operatorname{ad}_G(E_c)
  = \Big(\sum_{j\in S_+}(-2)\;+\;\sum_{j\in S_-}(+2)\;+\;\sum_{j\in T}0\Big)E_c
  = -2\big(|S_+|-|S_-|\big)E_c
  = -2\,\omega(c)\,E_c .
\end{equation}
 
\textbf{Step 3 (orthogonality, norms, and completeness).}
On a single qubit, with $\langle A,B\rangle_1=\tfrac12\operatorname{Tr}(A^{\dagger}B)$,
\begin{equation}
  \langle \hat I,\hat I\rangle_1=\langle \hat X,\hat X\rangle_1=1,
  \qquad
  \langle E^{+},E^{+}\rangle_1=\langle E^{-},E^{-}\rangle_1=2,
\end{equation}
where the last equality uses $(E^{+})^{\dagger}E^{+}=(Z-iY)(Z+iY)=2I+2X$.
All cross terms vanish: $\operatorname{Tr}(X)=\operatorname{Tr}(Y)=\operatorname{Tr}(Z)=0$ kills
$\langle \hat I,\cdot\rangle_1$; $XZ=-iY$ and $XY=iZ$ are traceless, so
$\langle \hat X,E^{\pm}\rangle_1=0$; and $\{Z,Y\}=0$ gives
$(E^{+})^{\dagger}E^{-}=(Z-iY)^{2}=-i\{Z,Y\}=0$, so $\langle E^{+},E^{-}\rangle_1=0$.
Hence $\{\hat I,\hat X,E^{+},E^{-}\}$ is an orthogonal basis of $\mathcal B(\mathcal H_1)$.
 
The inner product factorises over tensor products,
$\big\langle \bigotimes_j A_j,\ \bigotimes_j B_j\big\rangle=\prod_j\langle A_j,B_j\rangle_1$,
so the $4^{N}$ operators $\{E_c\}$ are mutually orthogonal with
\begin{equation}
  \|E_c\|^{2}=2^{|S_+|+|S_-|}=2^{m(c)}\neq 0 .
\end{equation}
Being $4^{N}=\dim\mathcal B(\mathcal H)$ in number, they form an orthogonal basis.
 
\textbf{Step 4 (Hermitian pairs and closure).}
The factors commute and $(E_j^{\pm})^{\dagger}=E_j^{\mp}$, so
$E_c^{\dagger}=E_{\bar c}$ with $\bar c=(S_-,S_+,T)$; thus $\omega(\bar c)=-\omega(c)$,
$m(\bar c)=m(c)$, $E_c^{\dagger}$ is again a basis element, and
$\operatorname{ad}_G(E_c^{\dagger})=+2\omega(c)E_c^{\dagger}$.
Therefore $\mathcal X_c^{\dagger}=\mathcal X_c$, $\mathcal Y_c^{\dagger}=\mathcal Y_c$, and
\begin{equation}
  [G,\mathcal X_c]=-2\omega\big(E_c-E_c^{\dagger}\big)=-2i\omega\,\mathcal Y_c,
  \qquad
  [G,\mathcal Y_c]=\tfrac{1}{i}(-2\omega)\big(E_c+E_c^{\dagger}\big)=2i\omega\,\mathcal X_c ,
\end{equation}
which is the hypothesis of Proposition 1.
For $\omega(c)>0$ we have $S_+\neq S_-$, so $E_c$ and $E_c^{\dagger}$ are \emph{distinct} basis
elements and hence orthogonal, giving
\begin{equation}
  \|\mathcal X_c\|^{2}=\|E_c\|^{2}+\|E_c^{\dagger}\|^{2}=2^{m(c)+1}=\|\mathcal Y_c\|^{2},
  \qquad \langle \mathcal X_c,\mathcal Y_c\rangle=0 .
\end{equation}
The same argument gives $\mathcal X_c,\mathcal Y_c\perp\mathcal X_{c'},\mathcal Y_{c'}$ whenever
$c\neq c'$ with $\omega(c)=\omega(c')>0$.
 
\textbf{Step 5 (decomposition).}
Since $\operatorname{span}_{\mathbb C}\{E_c,E_c^{\dagger}\}
=\operatorname{span}_{\mathbb C}\{\mathcal X_c,\mathcal Y_c\}$, grouping the basis $\{E_c\}$ by
$\omega(c)$ yields the orthogonal decomposition
\begin{equation}
  \mathcal B(\mathcal H)=\mathcal B_0\oplus\bigoplus_{\omega=1}^{N}\mathcal B_\omega,
  \qquad
  \mathcal B_0=\operatorname{span}\{E_c:\omega(c)=0\}=\ker\operatorname{ad}_G,
\end{equation}
with $\mathcal B_\omega=\operatorname{span}\{\mathcal X_c,\mathcal Y_c:\omega(c)=\omega\}$.
Finally $(E_j^{\pm})^{2}=\hat Z_j^{2}-\hat Y_j^{2}\pm i\{\hat Z_j,\hat Y_j\}=0$, so each qubit
carries at most one $E^{\pm}$ factor and $|\omega(c)|\le m(c)\le N$. This completes the proof. 
\end{proof}

\subsection{Proof of Proposition \ref{proposition:multi-qubit_Projected}}
\label{appendix:proofofMulti_qubitProjected}
\begin{proof}
Let $U(x)=W_L S(x) W_R$ with $S(x)=e^{-ixG/2}$ and $G=\sum_j \hat X_j$.
Using $\langle\psi|B|\psi\rangle=\operatorname{Tr}(|\psi\rangle\langle\psi|B)$ and cyclicity,
\begin{equation}
  f(x)=\langle\psi|U(x)^{\dagger}MU(x)|\psi\rangle
  =\operatorname{Tr}\Big(\underbrace{W_R\rho_0W_R^{\dagger}}_{\rho}\,S(x)^{\dagger}
   \underbrace{W_L^{\dagger}MW_L}_{O}\,S(x)\Big)
  =\operatorname{Tr}\big(\rho\,S(x)^{\dagger}OS(x)\big).
\end{equation}
 
\textbf{Step 1 (trace to inner product).}
With $\langle A,B\rangle=2^{-N}\operatorname{Tr}(A^{\dagger}B)$ and $\rho=\rho^{\dagger}$,
\begin{equation}
  \operatorname{Tr}(\rho B)=\operatorname{Tr}(\rho^{\dagger}B)=2^{N}\langle\rho,B\rangle, \quad
  f(x)=2^{N}\big\langle \rho,\ S(x)^{\dagger}OS(x)\big\rangle .
\end{equation}
Because $\rho$, $O$, $\mathcal X_{\omega,r}$ and $\mathcal Y_{\omega,r}$ are all Hermitian, every
inner product below is real.
 
\textbf{Step 2 (orthogonal projection onto one plane).}
By Proposition 2 the pairs $\{(\mathcal X_{\omega,r},\mathcal Y_{\omega,r})\}$ together with
$\mathcal B_0$ are mutually orthogonal and span $\mathcal B(\mathcal H)$, with
$\|\mathcal X_{\omega,r}\|^{2}=\|\mathcal Y_{\omega,r}\|^{2}=2^{m_{\omega,r}+1}$.
The basis is orthogonal but \emph{not} normalised, so the projection carries the norm factor:
\begin{equation}
  \Pi_{\omega,r}(O)=
  \frac{\langle O,\mathcal X_{\omega,r}\rangle\,\mathcal X_{\omega,r}
       +\langle O,\mathcal Y_{\omega,r}\rangle\,\mathcal Y_{\omega,r}}{2^{\,m_{\omega,r}+1}},
  \qquad
  O=\Pi_0O+\sum_{\omega=1}^{N}\sum_{r=1}^{d_\omega}\Pi_{\omega,r}(O),
\end{equation}
where $\Pi_0$ is the orthogonal projection onto $\mathcal B_0$.
 
\textbf{Step 3 (the $\omega=0$ term).}
$\Pi_0O\in\mathcal B_0=\ker\operatorname{ad}_G$ commutes with $G$ and therefore with $S(x)$, so
$S(x)^{\dagger}(\Pi_0O)S(x)=\Pi_0O$ and this term contributes
$2^{N}\langle\rho,\Pi_0O\rangle=\operatorname{Tr}(\rho\,\Pi_0O)$, independent of $x$.
 
\textbf{Step 4 (rotation within a plane).}
Proposition 1 gives
$S^{\dagger}\mathcal XS=\cos(\omega x)\mathcal X+\sin(\omega x)\mathcal Y$ and
$S^{\dagger}\mathcal YS=\cos(\omega x)\mathcal Y-\sin(\omega x)\mathcal X$.
Substituting into $\Pi_{\omega,r}(O)$ and collecting the $\mathcal X$ and $\mathcal Y$ components,
\begin{equation}
\begin{split}
 S^{\dagger}\Pi_{\omega,r}(O)S = 2^{-(m_{\omega,r}+1)}\Big[
 &\big(\langle O,\mathcal X\rangle\cos\omega x-\langle O,\mathcal Y\rangle\sin\omega x\big)\mathcal X\\
 +\;&\big(\langle O,\mathcal X\rangle\sin\omega x+\langle O,\mathcal Y\rangle\cos\omega x\big)\mathcal Y\Big].
\end{split}
\end{equation}
 
\textbf{Step 5 (contraction with $\rho$).}
Applying $2^{N}\langle\rho,\cdot\rangle$ to the previous line,
\begin{equation}
\begin{split}
 f_{\omega,r}(x)=2^{\,N-m_{\omega,r}-1}\Big[
 &\langle\rho,\mathcal X\rangle\big(\langle O,\mathcal X\rangle\cos\omega x-\langle O,\mathcal Y\rangle\sin\omega x\big)\\
 +\;&\langle\rho,\mathcal Y\rangle\big(\langle O,\mathcal X\rangle\sin\omega x+\langle O,\mathcal Y\rangle\cos\omega x\big)\Big],
\end{split}
\end{equation}
which is precisely the bilinear form
\begin{equation}
 f_{\omega,r}(x)=2^{\,N-m_{\omega,r}-1}
 \begin{bmatrix}\langle\rho,\mathcal X_{\omega,r}\rangle\\[2pt]\langle\rho,\mathcal Y_{\omega,r}\rangle\end{bmatrix}^{\!\top}
 \begin{bmatrix}\cos(\omega x) & -\sin(\omega x)\\[2pt] \sin(\omega x) & \cos(\omega x)\end{bmatrix}
 \begin{bmatrix}\langle O,\mathcal X_{\omega,r}\rangle\\[2pt]\langle O,\mathcal Y_{\omega,r}\rangle\end{bmatrix}.
\end{equation}
Summing over all $(\omega,r)$ and adding the constant term of Step 3 gives \eqref{eq:bilinear_form}.
\end{proof}

\subsection{Proof of Lemma \ref{lem:ising}}
\label{appendix:ising_entanglement}
\begin{proof}
Define the local Ising generator on qubit $n$ as
\begin{equation}
H_n:=\sum_{k\in\mathcal{N}(n)}\gamma_{nk} Z_n Z_k .
\end{equation}
Since $Z_n$ acts only on qubit $n$ and each $Z_k$ acts on $k\neq n$, all factors commute and
\begin{equation}
H_n
=\sum_{k\in\mathcal{N}(n)}\gamma_{nk}\, Z_n Z_k
= Z_n \sum_{k\in\mathcal{N}(n)}\gamma_{nk} Z_k
=: Z_n \Phi_n .
\end{equation}
Therefore,
\begin{equation}
U^{\mathrm{ent}}=\exp\!\left(-\frac{i}{2}H_n\right)=\exp\!\left(-\frac{i}{2}Z_n\Phi_n\right).
\end{equation}
Throughout, $\Phi_n$ commutes with $X_n,Y_n,Z_n$ because it is supported only on $\mathcal{N}(n)$.

\paragraph{Conjugation of $Z_n$.}
Because $[Z_n\Phi_n,Z_n]=0$, conjugation leaves $Z_n$ invariant:
\begin{equation}
(U^{\mathrm{ent}})^\dagger Z_n U^{\mathrm{ent}}=Z_n.
\end{equation}

\paragraph{Conjugation of $X_n$.}
Let $A:=\frac{i}{2}Z_n\Phi_n$, so that $(U^{\mathrm{ent}})^\dagger X_n U^{\mathrm{ent}}=e^{A}X_n e^{-A}$.
Using the Baker--Campbell--Hausdorff (BCH) expansion in adjoint form,
\begin{equation}
e^{A}X_n e^{-A}=\sum_{m=0}^\infty \frac{1}{m!}\,\mathrm{ad}_A^{m}(X_n),
\qquad \mathrm{ad}_A(B):=[A,B],
\end{equation}
and the Pauli commutators $[Z_n,X_n]=2iY_n$ and $[Z_n,Y_n]=-2iX_n$, together with $[\Phi_n,X_n]=[\Phi_n,Y_n]=0$, yield
\begin{align}
\mathrm{ad}_{Z_n\Phi_n}(X_n)
&=[Z_n\Phi_n,X_n]=[Z_n,X_n]\Phi_n=2iY_n\Phi_n,\\
\mathrm{ad}_{Z_n\Phi_n}^2(X_n)
&=[Z_n\Phi_n,2iY_n\Phi_n]=2i[Z_n,Y_n]\Phi_n^2=4X_n\Phi_n^2.
\end{align}
By induction, for all $m\ge 0$,
\begin{equation}
\mathrm{ad}_{Z_n\Phi_n}^{2m}(X_n)=2^{2m}X_n\Phi_n^{2m},
\qquad
\mathrm{ad}_{Z_n\Phi_n}^{2m+1}(X_n)=2^{2m+1}iY_n\Phi_n^{2m+1}.
\end{equation}
Since $A=\frac{i}{2}Z_n\Phi_n$, it follows that
\begin{align}
(U^{\mathrm{ent}})^\dagger X_n U^{\mathrm{ent}}
&=\sum_{m=0}^\infty \frac{1}{(2m)!}\left(\frac{i}{2}\right)^{2m}2^{2m}X_n\Phi_n^{2m}
+\sum_{m=0}^\infty \frac{1}{(2m+1)!}\left(\frac{i}{2}\right)^{2m+1}2^{2m+1}iY_n\Phi_n^{2m+1}\notag\\
&=X_n\sum_{m=0}^\infty \frac{(-1)^m\Phi_n^{2m}}{(2m)!}
-Y_n\sum_{m=0}^\infty \frac{(-1)^m\Phi_n^{2m+1}}{(2m+1)!}\notag\\
&=X_n\cos\Phi_n-Y_n\sin\Phi_n,
\end{align}
where $\cos\Phi_n$ and $\sin\Phi_n$ are defined by their convergent power series.

\paragraph{Conjugation of $Y_n$.}
The same argument gives
\begin{equation}
(U^{\mathrm{ent}})^\dagger Y_n U^{\mathrm{ent}}=Y_n\cos\Phi_n+X_n\sin\Phi_n.
\end{equation}
This completes the proof. 
\end{proof}

\subsection{Proof of Corollary~\ref{cor:activation}}
\label{app:activation}
We prove a sharper form, where \emph{almost every} means outside a closed set of measure zero. Let the local blocks be general single-qubit rotations, $\rho_{0}$ a pure product state and $\omega\ge1$. \textbf{(i)} If $\mathbf{b}_{\omega,r}\neq0$ for some $r$ at some $\boldsymbol{\theta}^{(L)}$, then $c_{\omega}(\boldsymbol{\Theta})\neq0$ for almost every $\boldsymbol{\Theta}$, and otherwise $c_{\omega}\equiv0$. \textbf{(ii)} For $W_{L}=W_{L}^{\mathrm{post}}U_{\mathrm{ent}}W_{L}^{\mathrm{pre}}$ with readout $\hat{Z}_{n}$ and every $\sin\gamma_{nk}\neq0$, (i) applies to every $\omega\le\deg(n)+1$, and $c_{\omega}\equiv0$ for $\omega>\deg(n)+1$. \textbf{(iii)} At $\omega=\deg(n)+1$ a single plane carries $c_{\omega}$, and any product state $\rho$ limits the amplitude there to $2^{-\deg(n)}$. With $|\gamma_{nk}|=\pi/2$ and $W_{R}=W_{R}^{\mathrm{post}}U_{\mathrm{ent}}W_{R}^{\mathrm{pre}}$, the circuit can output $f(x)=\cos(\omega x)$ for any single $\omega\le\deg(n)+1$. Throughout, $L=\{n\}\cup\mathcal{N}(n)$ and $Q_{\omega}$ is the orthogonal projection onto $\mathrm{span}\{E_{c}:\omega(c)=\omega\}$.

\textbf{Proof of (i).} By Proposition~\ref{prop:planes}, $S(x)^{\dagger}E_{c}S(x)=e^{-i\omega(c)x}E_{c}$, so expanding $O$ in $\{E_{c}\}$ and using $\mathcal{X}_{c}=E_{c}+E_{c}^{\dagger}$, $\mathcal{Y}_{c}=-i(E_{c}-E_{c}^{\dagger})$ gives
\begin{equation}
\label{eq:coefTrace}
f(x)=\sum_{|\omega|\le N}\nolimits e^{-i\omega x}\,\mathrm{Tr}\bigl(\rho\,Q_{\omega}(O)\bigr),
\qquad
\|\mathbf{b}_{\omega,r}\|=2\,|\langle E_{c},O\rangle|,
\end{equation}
so $|c_{\omega}|=|\mathrm{Tr}(\rho\,Q_{\omega}(O))|$ because $f$ is real, likewise $\|\mathbf{a}_{\omega,r}\|=2|\langle E_{c},\rho\rangle|$, and $\mathbf{b}_{\omega,r}\neq0$ for some $r$ exactly when $Q_{\omega}(O)\neq0$. If this never happens, $c_{\omega}\equiv0$. Otherwise let $Q_{\star}=Q_{\omega}(O)\neq0$ at some $\boldsymbol{\theta}^{(L)}_{\star}$ and write $W_{R}=VW_{R}^{\mathrm{pre}}$. The reachable states include $V\sigma V^{\dagger}$ for every pure product state $\sigma$, and these span $\mathcal{B}(\mathcal{H})$, while $\rho\mapsto\mathrm{Tr}(\rho\,Q_{\star})$ is a nonzero functional, equal to $2^{N}\|Q_{\star}\|^{2}$ at $\rho=Q_{\star}^{\dagger}$, so some $\boldsymbol{\theta}^{(R)}$ gives $c_{\omega}\neq0$. Each gate $e^{-i\theta P/2}=\cos(\theta/2)I-i\sin(\theta/2)P$, including every factor of $U_{\mathrm{ent}}$, is a trigonometric polynomial in its half-angle, so $|c_{\omega}(\boldsymbol{\Theta})|^{2}$ is a real trigonometric polynomial that is not identically zero. Such a polynomial has finitely many zeros per period in one variable, and by induction on the number of variables with Fubini's theorem it vanishes only on a closed set of measure zero.

\textbf{Proof of (ii).} By Appendix~\ref{app:Nqubit}, $O$ is supported on $L$, which gives \eqref{eq:reachbound} and $c_{\omega}\equiv0$ for $\omega>\deg(n)+1$. For $1\le\omega\le\deg(n)+1$ and $\deg(n)\ge1$, the case $\deg(n)=0$ being immediate, choose $K\subseteq\mathcal{N}(n)$ with $|K|=\deg(n)+1-\omega$, let $W_{L}^{\mathrm{post}}$ give $\mu\neq0$, and let $W_{L}^{\mathrm{pre}}$ map $\hat{Z}_{k}$ to $\hat{X}_{k}$ on $K$ and act trivially elsewhere. Since every string of \eqref{eq:M2general} carries $\hat{I}$ or $\hat{Z}$ on each neighbour, the only string of $O$ that overlaps $E_{c}$ for $c=(L\setminus K,\emptyset,K)$, of frequency $\omega$, is $\hat{Y}_{n}\hat{Z}_{\mathcal{N}(n)\setminus K}\hat{X}_{K}$, the image of the top string. Hence $\|\mathbf{b}_{\omega,r}\|=2|\mu|\prod_{k}|\sin\gamma_{nk}|\neq0$ and (i) applies.

\textbf{Proof of (iii).} Every configuration with $\langle E_{c},O\rangle\neq0$ lies inside $L$, so $\omega(c)=|L|$ forces $S_{+}=L$, only this plane contributes, and $|c_{\deg(n)+1}|=2^{-|L|}|\langle E_{c},O\rangle|\,|\mathrm{Tr}(\rho E_{c})|$. Since $\hat{Z}_{j}+i\hat{Y}_{j}=2\ket{-}\bra{+}$, $E_{c}$ has trace norm $2^{N}$, so $|\langle E_{c},O\rangle|\le\|O\|=1$, and a product state gives $|\mathrm{Tr}(\rho E_{c})|=\prod_{j\in L}|\mathrm{Tr}(\rho_{j}(\hat{Z}_{j}+i\hat{Y}_{j}))|\le1$, which bounds the amplitude $2|c_{\deg(n)+1}|$ by $2^{1-|L|}=2^{-\deg(n)}$. For attainment, let $|\gamma_{nk}|=\pi/2$, so $\cos\Phi_{n}$ and $\sin\Phi_{n}$ reduce to $0$ or $\pm\hat{Z}_{\mathcal{N}(n)}$, and fix $L'\subseteq L$ with $n\in L'$ and $|L'|=\omega$. Choosing $M_{1}=\hat{X}_{n}$ and a $W_{L}^{\mathrm{pre}}$ that rotates the factor on $n$ to $\hat{Z}_{n}$ and $\hat{Z}_{k}$ to $\hat{X}_{k}$ for $k\notin L'$ gives $O=\pm\hat{Z}_{L'}\hat{X}_{\mathcal{N}(n)\setminus L'}$. On the state side, $W_{R}^{\mathrm{pre}}$ prepares $\ket{+}$ on $L'$ and $\ket{0}$ elsewhere, $U_{\mathrm{ent}}$ turns $L'$ into a star graph state up to $\hat{Z}$ rotations, and $W_{R}^{\mathrm{post}}$ undoes these rotations, applies a Hadamard gate to qubit $n$ and rotates the qubits outside $L'$ to $\ket{+}$, which yields $(\ket{+}^{\otimes\omega}+\ket{-}^{\otimes\omega})/\sqrt{2}$ on $L'$. The encoder gives $\ket{\pm}^{\otimes\omega}$ the phases $e^{\mp i\omega x/2}$ and $\hat{Z}_{L'}$ exchanges them, so $f(x)=\pm\cos(\omega x)$, with the sign set by a $\pi$ rotation in $W_{L}^{\mathrm{post}}$. At $\omega=\deg(n)+1$ no output of this block does better, since averaging $f$ over the shifts $x\mapsto x+2\pi j/|L|$ keeps only the frequencies $0$ and $|L|$ and leaves an amplitude of at most $1$.

\textbf{Remarks.} $CZ$ differs from the Ising form at $|\gamma_{nk}|=\pi/2$ only by single-qubit $\hat{Z}$ rotations, which the adjacent local blocks absorb. When $x$ and $t$ are encoded on disjoint qubit sets $V_{x}$ and $V_{t}$, each $E_{c}$ carries a frequency vector and the proof of (i) holds verbatim. The construction of (ii), which also uses the string $\hat{X}_{n}\hat{Z}_{\mathcal{N}(n)}$ when $n\in T$, then reaches every vector with $|\omega_{x}|\le|L\cap V_{x}|$ and $|\omega_{t}|\le|L\cap V_{t}|$, which for the 14-qubit star of our advection experiment is $|\omega_{x}|,|\omega_{t}|\le7$.

\newpage
\section{Detailed Explanation}
\subsection{Defining Closed-Invariant Subspaces}
\label{appendix:defineclosedinvariant}
Equipped with Proposition \ref{prop:2Dclosure}, our objective is to represent the quantum model's output as Fourier series representation in terms of the operators. The question we can ask is how we can define such set of operator pairs. For this, we employ the method of \emph{eigen-operators of the adjoint map} \citep{gardiner2004quantum}. This technique builds operator pairs from eigen-operators that close under Lie commutation with the generator. To construct the target operators $\mathcal{X}$ and $\mathcal{Y}$ from such eigen-operators, we begin by establishing the fundamental building blocks on the single-qubit. Additionally, the key observation is that the encoding unitary $S(x)$ is often constructed from a single Pauli generator type; in our case, it is generated by Pauli generator $X$.
In this light, we first define on each qubit $n$ the local eigen-operators as
$E_n^+ = Z_n + iY_n$, 
$E_n^- = Z_n - iY_n$,
which satisfy
$[X_n, E_n^+] = -2 E_n^+$, 
$[X_n, E_n^-] = 2 E_n^-$.

For the global multi-qubit generator $G = \sum_{n=1}^N \hat{X}_n$, we apply tensor product to these local eigen-operators over pairwise disjoint index sets $S_+, S_-, T \subseteq \{1, \dots, N\}$. Because $\hat{X}_k$ commutes with $G$ just as $I_k$ does, a site outside $S_+\cup S_-$ may carry either factor, and both must be retained for the resulting family to span $\mathcal{B}(\mathcal{H})$. Specifically, define
\begin{equation}
E_{S_+,S_-,T}\triangleq\bigotimes_{n\in S_+} E_n^+ \otimes \bigotimes_{m\in S_-} E_m^- \otimes \bigotimes_{t\in T} X_t \otimes \bigotimes_{k\notin S_+\cup S_-\cup T} I_k.
\end{equation}

Then, if $n\in S_+$, the commutation is described as
\begin{align}
\hspace{-3mm}
    [\hat{X}_n,E_{S_+,S_-,T}]&=I^{\otimes (n-1)}\otimes [X_n,E_n^+]\otimes I^{\otimes (N-n)}=(-2)E_{S_+,S_-,T},
\end{align}
and likewise $[\hat{X}_m,E_{S_+,S_-,T}]=(+2)E_{S_+,S_-,T}$ for $m\in S_-$, while a site $j\in T$ or a site carrying the identity contributes nothing, since $[\hat{X}_j,\hat{X}_j]=0$ and $[\hat{X}_j,\hat{I}_j]=0$.
Thus, Lie commutation of $G$ with $E_{S_+,S_-,T}$ can be as
\begin{align}
    [G,E_{S_+,S_-,T}]&=\sum_{j=1}^N\nolimits [\hat{X}_j,E_{S_+,S_-,T}] =\sum_{j\in S_+}\limits-2E_{S_+,S_-,T}+\sum_{j\in S_-}\limits2E_{S_+,S_-,T} \\
    &=-2(|S_+| - |S_-|)E_{S_+,S_-,T} =-2\omega E_{S_+,S_-,T}.
\end{align}
The result shows that the difference of the cardinality of the disjoint sets determines $\omega=|S_+| - |S_-|$.

Furthermore, note that there could be many combinations of disjoint sets $(S_+,S_-,T)$ that yield the same $\omega$. We express such set as
\begin{multline}
\mathcal{S}_\omega \triangleq \Big\{(S_+,S_-,T): S_+,S_-,T \text{ pairwise disjoint},\; S_+,S_-,T\subseteq\{1,\dots,N\},\; |S_+|-|S_-|=\omega\Big\} \label{eq:degenerateSet}.
\end{multline}
Denote $d_\omega = |\mathcal{S}_\omega|$, the cardinality of \eqref{eq:degenerateSet}. This cardinality has a closed form. Each qubit independently carries one of $I_k$, $X_k$, $E_k^+$, $E_k^-$, contributing $z^{0}$, $z^{0}$, $z^{+1}$, $z^{-1}$ respectively when the frequency is tracked by a formal variable $z$, so
\begin{equation}
\sum_{(S_+,S_-,T)} z^{\,\omega(S_+,S_-,T)} = \big(2+z+z^{-1}\big)^{N}=z^{-N}(1+z)^{2N},
\end{equation}
and reading off the coefficient of $z^{\omega}$ gives
\begin{equation}
\label{eq:planeCount}
d_\omega=\binom{2N}{\,N+\omega\,},\qquad \omega=-N,\dots,N .
\end{equation}
In particular $d_0=\binom{2N}{N}=\dim\mathcal{B}_0$, and $d_0+2\sum_{\omega=1}^{N}d_\omega=4^{N}$, confirming that the family exhausts $\mathcal{B}(\mathcal{H})$. The extreme planes $|\omega|=N$ are attained only by $S_-=T=\varnothing$, so $d_{\pm N}=1$; for $N=2$ one has $d_1=\binom{4}{3}=4$ and $d_2=\binom{4}{4}=1$.
We choose an enumeration $\{(S_+^{(r)},S_-^{(r)},T^{(r)})\}_{r=1}^{d_\omega}$ of $\mathcal{S}_\omega$ and define
$E_{\omega,r} \triangleq E_{S_+^{(r)},S_-^{(r)},T^{(r)}}$.
Accordingly, the exact operators can be formulated as
\begin{align}
\label{eq:ClosedInvariantOperators}
\mathcal{X}_{\omega, r} = E_{\omega,r}+ E_{\omega,r}^\dagger, \quad
\mathcal{Y}_{\omega, r} = \frac{1}{i}(E_{\omega,r}- E_{\omega,r}^\dagger).
\end{align}


Appendix \ref{appendix:invariant_planes} proves that the operators in \eqref{eq:ClosedInvariantOperators} satisfy
\eqref{eq:Conjugations}, that pairs belonging to distinct $(\omega,r)$ are mutually orthogonal, and that $\|\mathcal{X}_{\omega,r}\|^{2}=\|\mathcal{Y}_{\omega,r}\|^{2}
=2^{\,m_{\omega,r}+1}$, as used in Proposition \ref{proposition:multi-qubit_Projected}.

Furthermore, we note that high-frequency invariant-plane operators are not made purely of single-qubit Pauli string. For instance, for a configuration with $S_+=S\subseteq\{1,\dots,N\}$ and $S_-=T=\varnothing$,
the operators $\mathcal{X}_{\omega,r}$ and $\mathcal{Y}_{\omega,r}$ contain mixed multi-qubit Pauli strings supported on $S$ for $j$-th qubit, including terms of the form
$\hat{Y}_j \otimes \hat{Z}_{S\setminus\{j\}},~ j\in S,$
as well as higher-order mixtures with multiple $Y$ factors.
This reflects that higher $\omega$ components are associated with multi-qubit operator support, rather than purely single-qubit Pauli terms.

\subsection{Complete 2-Qubit System}
\label{appendix:proofofentanglementplacement}
Consider an encoder with generator $G=\hat X_1+\hat X_2$ and a $\hat Z_1$ measurement on qubit $1$.
Throughout, for any single-qubit operator $A\in\mathbb{C}^{2\times 2}$ we denote its $N$-qubit embedding by
\begin{equation}
\hat A_n \;\triangleq\; I^{\otimes (n-1)}\otimes A \otimes I^{\otimes (N-n)}.
\end{equation}
In particular, for $N=2$ we have $\hat A_1=A\otimes I$ and $\hat A_2=I\otimes A$.

We adopt the proposed decomposition
\begin{equation}
W_L = W_L^{\rm post}\,U_{\rm ent}\,W_L^{\rm pre},
\end{equation}
where 
$W_L^{\rm post}=W_{L,1}^{\rm post}\otimes W_{L,2}^{\rm post}$ and 
$W_L^{\rm pre}=W_{L,1}^{\rm pre}\otimes W_{L,2}^{\rm pre}$.    
The Ising entangler is
\begin{equation}
U_{\rm ent}=\exp\!\left(-\frac{i}{2}\gamma\,\hat Z_1\hat Z_2\right).
\end{equation}

For $N=2$, the frequency index is determined by $\omega=|S_+|-|S_-|$ for pairwise disjoint sets $S_+,S_-,T\subseteq\{1,2\}$. The highest frequency is $\omega=2$, which occurs only for $(S_+,S_-,T)=(\{1,2\},\varnothing,\varnothing)$,
so $d_2=1$ and $r=1$ in this case.

Using $\mathrm{ad}_G$, an explicit eigen-operator is
\begin{equation}
E_{2,1} \;=\; ( \hat Z_1+i\hat Y_1)( \hat Z_2+i\hat Y_2),
\end{equation}
which yields the invariant operators
\begin{align}
\mathcal{X}_{2,1} &= E_{2,1}+E_{2,1}^\dagger = 2\,(\hat Z_1\hat Z_2-\hat Y_1\hat Y_2),\\
\mathcal{Y}_{2,1} &= \frac{1}{i}(E_{2,1}-E_{2,1}^\dagger) = 2\,(\hat Z_1\hat Y_2+\hat Y_1\hat Z_2).
\end{align}
In particular, the $\omega=2$ plane contains the mixed Pauli strings $\hat Y_1\hat Z_2$ and $\hat Z_1\hat Y_2$.

Let the full circuit be
\begin{equation}
U(x)=(W_{L,1}^{\rm post}\otimes W_{L,2}^{\rm post})\,
U_{\rm ent}\,
(W_{L,1}^{\rm pre}\otimes W_{L,2}^{\rm pre})\,
(R_{X_1}(x)\otimes R_{X_2}(x)),
\end{equation}
and the model output is
\begin{equation}
f(x)=\langle00|U(x)^\dagger \hat Z_1\, U(x)|00\rangle.
\end{equation}

Write the data-encoding as
\begin{equation}
S(x)=e^{-ixG/2},
\end{equation}
and write the $x$-independent trainable part as
\begin{equation}
W_L=(W_{L,1}^{\rm post}\otimes W_{L,2}^{\rm post})\,
U_{\rm ent}\,
(W_{L,1}^{\rm pre}\otimes W_{L,2}^{\rm pre}),
\end{equation}
so that $U(x)=W_LS(x)$.

Consider the conjugated observable
\begin{equation}
M_1:=(W_L^{\rm post})^\dagger \hat Z_1 W_L^{\rm post}.
\end{equation}
Since $W_L^{\rm post}=W_{L,1}^{\rm post}\otimes W_{L,2}^{\rm post}$ and $\hat Z_1=Z\otimes I$,
only $W_{L,1}^{\rm post}$ affects the conjugation.
Parameterize
\begin{equation}
W_{L,1}^{\rm post}=\exp\!\left(-\frac{i}{2}\beta\,\mathbf{n}\cdot\boldsymbol{\sigma}\right),
\qquad
\boldsymbol{\sigma}=(X,Y,Z),\quad \|\mathbf{n}\|=1,
\end{equation}
and define the single-qubit generator $A:= i\frac{\beta}{2}\mathbf{n}\cdot\boldsymbol{\sigma}$.
Embed it to the two-qubit space as $\hat A_1:=A\otimes I$ so that
\begin{equation}
M_1 = e^{\hat A_1}\,\hat Z_1\,e^{-\hat A_1}.
\end{equation}
Applying Hadamard's lemma and evaluating commutators on qubit $1$ yields
\begin{equation}
M_1 = m_x\hat X_1 + m_y\hat Y_1 + m_z\hat Z_1,
\qquad m_x^2+m_y^2+m_z^2=1.
\end{equation}

Now conjugate through the entangler:
\begin{equation}
M_2:=U_{\rm ent}^\dagger M_1 U_{\rm ent}.
\end{equation}
\begin{align}
M_1
&=(W_L^{\rm post})^\dagger \hat Z_1 W_L^{\rm post} \notag\\
&=\big((W_{L,1}^{\rm post})^\dagger\otimes (W_{L,2}^{\rm post})^\dagger\big)(Z\otimes I)\big(W_{L,1}^{\rm post}\otimes W_{L,2}^{\rm post}\big)\notag\\
&=\big((W_{L,1}^{\rm post})^\dagger Z\, W_{L,1}^{\rm post}\big)\otimes \big((W_{L,2}^{\rm post})^\dagger I\, W_{L,2}^{\rm post}\big)\notag\\
&=\big((W_{L,1}^{\rm post})^\dagger Z\, W_{L,1}^{\rm post}\big)\otimes I.
\label{eq:M1_reduce}
\end{align}

Applying Hadamard's lemma gives
\begin{equation}
e^{A}Ze^{-A}=\sum_{k=0}^{\infty}\frac{1}{k!}\,\mathrm{ad}_A^k(Z),
\qquad \mathrm{ad}_A(B)=[A,B].
\end{equation}
Using the Pauli commutator identity
\begin{equation}
[\mathbf u\cdot\boldsymbol{\sigma},\,\mathbf v\cdot\boldsymbol{\sigma}]
=2i(\mathbf u\times \mathbf v)\cdot\boldsymbol{\sigma},
\end{equation}
with $\mathbf e_z=(0,0,1)$ (so $Z=\mathbf e_z\cdot\boldsymbol{\sigma}$), we obtain
\begin{equation}
[A,Z]
= i\frac{\beta}{2}[\mathbf n\cdot\boldsymbol{\sigma},\,\mathbf e_z\cdot\boldsymbol{\sigma}]
= -\beta(\mathbf n\times \mathbf e_z)\cdot\boldsymbol{\sigma}
=\beta(n_x Y-n_y X).
\end{equation}
Here, for any real vectors $\mathbf u=(u_x,u_y,u_z)^\top$ and $\mathbf v=(v_x,v_y,v_z)^\top$ in $\mathbb{R}^3$, we use the shorthand
\[
\mathbf u\cdot\boldsymbol{\sigma} := u_x X + u_y Y + u_z Z,
\qquad
\mathbf v\cdot\boldsymbol{\sigma} := v_x X + v_y Y + v_z Z.
\]
We also write the rotation axis as a unit vector $\mathbf n=(n_x,n_y,n_z)^\top\in\mathbb{R}^3$ with $\|\mathbf n\|=1$, and define $\mathbf e_z:=(0,0,1)^\top$ so that $Z=\mathbf e_z\cdot\boldsymbol{\sigma}$.
The series sums to the closed-form Rodrigues rotation, yielding
\begin{equation}
(W_{L,1}^{\rm post})^\dagger Z\, W_{L,1}^{\rm post}
= m_x X + m_y Y + m_z Z,
\qquad m_x^2+m_y^2+m_z^2=1,
\label{eq:singlequbit_Z_rotation}
\end{equation}
for real coefficients $(m_x,m_y,m_z)$ determined by $(\beta,\mathbf n)$.
Combining \eqref{eq:M1_reduce} and \eqref{eq:singlequbit_Z_rotation} gives
\begin{equation}
M_1
=(m_x X+m_y Y+m_z Z)\otimes I
= m_x\hat X_1+m_y\hat Y_1+m_z\hat Z_1.
\end{equation}
Conjugating through $U_{\mathrm{ent}}=\exp(-\tfrac{i}{2}\gamma\hat Z_1\hat Z_2)$ gives
\begin{align}
U_{\rm ent}^\dagger \hat Z_1 U_{\rm ent} &= \hat Z_1,\\
U_{\rm ent}^\dagger \hat X_1 U_{\rm ent} &= \hat X_1\cos\gamma - \hat Y_1\hat Z_2\sin\gamma,\\
U_{\rm ent}^\dagger \hat Y_1 U_{\rm ent} &= \hat Y_1\cos\gamma + \hat X_1\hat Z_2\sin\gamma,
\end{align}
and therefore
\begin{equation}
M_2
= m_z\hat Z_1
+ m_x(\hat X_1\cos\gamma-\hat Y_1\hat Z_2\sin\gamma)
+ m_y(\hat Y_1\cos\gamma+\hat X_1\hat Z_2\sin\gamma).
\end{equation}

Finally,
\begin{equation}
O:=(W_L^{\rm pre})^\dagger M_2 W_L^{\rm pre},
\qquad
W_L^{\rm pre}=W_{L,1}^{\rm pre}\otimes W_{L,2}^{\rm pre}.
\end{equation}
For example, on the mixed term $\hat Y_1\hat Z_2$,
\begin{equation}
(W_L^{\rm pre})^\dagger(\hat Y_1\hat Z_2)W_L^{\rm pre}
=
\big((W_{L,1}^{\rm pre})^\dagger Y\,W_{L,1}^{\rm pre}\big)\otimes
\big((W_{L,2}^{\rm pre})^\dagger Z\,W_{L,2}^{\rm pre}\big),
\end{equation}
so $W_L^{\rm pre}$ preserves the support $\{1,2\}$ while reweighting and mixing the local Pauli factors.

\subsection{Generalization to $N$-Qubit Systems}
\label{app:Nqubit}

We now carry the same three stages through for arbitrary $N$ and arbitrary
interaction graph $\mathcal{G}=(V,E)$, for a single encoding layer and a single entangling round, and show that the graph then determines a hard ceiling on the frequencies a given readout can reach.

Fix the measured qubit $n$, write $\mathcal{N}(n)=\{k:\{n,k\}\in E\}$ for its
neighbour set and $d_{n}=|\mathcal{N}(n)|$ for its degree. The measurement is
$M=\hat{Z}_{n}$ and the trainable block is
$W_{L}=W_{L}^{\mathrm{post}}U_{\mathrm{ent}}W_{L}^{\mathrm{pre}}$ with
$W_{L}^{\mathrm{post}}=\bigotimes_{j}W_{L,j}^{\mathrm{post}}$ and
$W_{L}^{\mathrm{pre}}=\bigotimes_{j}W_{L,j}^{\mathrm{pre}}$.

\paragraph{Stage 1 (Expose).} Since $W_{L}^{\mathrm{post}}$ is a tensor product
and $\hat{Z}_{n}$ acts only on qubit $n$, only $W_{L,n}^{\mathrm{post}}$ affects
the conjugation. The Rodrigues argument of
Appendix~\ref{appendix:proofofentanglementplacement} applies verbatim on that
factor, giving
\begin{equation}
M_{1}=(W_{L}^{\mathrm{post}})^{\dagger}\hat{Z}_{n}W_{L}^{\mathrm{post}}
=m_{x}\hat{X}_{n}+m_{y}\hat{Y}_{n}+m_{z}\hat{Z}_{n},
\qquad m_{x}^{2}+m_{y}^{2}+m_{z}^{2}=1,
\end{equation}
independently of $N$ and of the graph.

\paragraph{Stage 2 (Spread).} The operators $\hat{Z}_{k}$, $k\in\mathcal{N}(n)$,
commute, so $e^{-i\Phi_{n}}=\prod_{k\in\mathcal{N}(n)}
(\cos\gamma_{nk}\,\hat{I}-i\sin\gamma_{nk}\hat{Z}_{k})$. Expanding the product
over subsets $T\subseteq\mathcal{N}(n)$ and writing
$\hat{Z}_{T}\triangleq\bigotimes_{t\in T}\hat{Z}_{t}$,
\begin{equation}
\label{eq:kappaT}
\kappa_{T}\triangleq\prod_{k\in\mathcal{N}(n)\setminus T}\cos\gamma_{nk}
\prod_{k\in T}\sin\gamma_{nk},
\end{equation}
\begin{equation}
\label{eq:cosSinExpansion}
\cos\Phi_{n}=\!\!\sum_{|T|\,\mathrm{even}}\!\!(-1)^{|T|/2}\kappa_{T}\hat{Z}_{T},
\qquad
\sin\Phi_{n}=\!\!\sum_{|T|\,\mathrm{odd}}\!\!(-1)^{\frac{|T|-1}{2}}\kappa_{T}
\hat{Z}_{T}.
\end{equation}
Substituting into Lemma~\ref{lem:ising},
\begin{align}
\label{eq:M2general}
M_{2}=U_{\mathrm{ent}}^{\dagger}M_{1}U_{\mathrm{ent}}
=m_{z}\hat{Z}_{n}
+\!\!\sum_{|T|\,\mathrm{even}}\!\!(-1)^{\frac{|T|}{2}}\kappa_{T}
\bigl(m_{x}\hat{X}_{n}+m_{y}\hat{Y}_{n}\bigr)\hat{Z}_{T} \notag\\ 
+\!\!\sum_{|T|\,\mathrm{odd}}\!\!(-1)^{\frac{|T|-1}{2}}\kappa_{T}
\bigl(m_{y}\hat{X}_{n}-m_{x}\hat{Y}_{n}\bigr)\hat{Z}_{T}.
\end{align}
Every string in $M_{2}$ is supported on $\{n\}\cup T$ for some
$T\subseteq\mathcal{N}(n)$; the graph fixes which qubits can appear, and the
largest support is $\{n\}\cup\mathcal{N}(n)$.

\paragraph{Stage 3 (Align).} $W_{L}^{\mathrm{pre}}$ is a tensor product, so
conjugation acts qubit-wise and maps each single-qubit factor into a combination
of $\hat{X}$, $\hat{Y}$, $\hat{Z}$ on the same qubit. It reweights and mixes the
Pauli factors but cannot enlarge the support set. The effective observable
$O=W_{L}^{\dagger}\hat{Z}_{n}W_{L}$ is therefore supported on
$\{n\}\cup\mathcal{N}(n)$, of size $d_{n}+1$.

\paragraph{Graph-limited reachability.} This support constraint caps the
reachable spectrum. By Proposition~\ref{prop:planes}, a Pauli string belongs to a
plane with configuration $(S_{+},S_{-},T)$ only if it carries $\hat{Y}$ or
$\hat{Z}$ on $S_{+}\cup S_{-}$ and $\hat{X}$ on $T$, so its frequency obeys
$|\omega|\le|S_{+}|+|S_{-}|$, which is at most the size of its support. Hence
\begin{equation}
\label{eq:reachbound}
\langle\mathcal{X}_{\omega,r},O\rangle=\langle\mathcal{Y}_{\omega,r},O\rangle=0
\qquad\text{for all }\ \omega>d_{n}+1 .
\end{equation}
The bound is attained. The string $\hat{Y}_{n}\hat{Z}_{\mathcal{N}(n)}$ appears in
\eqref{eq:M2general} with coefficient $\pm\mu\kappa_{\mathcal{N}(n)}$, where
$\mu=m_{y}$ for $d_{n}$ even and $\mu=m_{x}$ for $d_{n}$ odd, and carries exactly
one $\hat{Y}$ factor, so it lies in $\mathcal{Y}_{d_{n}+1,r}$ for the plane with
$S_{+}=\{n\}\cup\mathcal{N}(n)$ and $S_{-}=T=\emptyset$. Since all other strings
of $M_{2}$ have smaller support and are orthogonal to it,
\begin{equation}
\label{eq:topoverlap}
\bigl|\langle \mathcal{Y}_{d_n+1,r}, M_2\rangle\bigr| = 2\,|\mu|\prod_{k\in\mathcal{N}(n)}|\sin\gamma_{nk}|.
\end{equation}

Two consequences follow. First, the reachable spectrum is set by the local degree rather than by $N$: measuring a qubit with $d_{n}$ neighbours caps the accessible frequency at $d_{n}+1$ no matter how many qubits the circuit has or how the parameters are trained. Activating frequency $\omega$ therefore requires $\deg(n)\ge\omega-1$ in the interaction graph, a condition a designer can check before training, and Corollary~\ref{cor:activation} shows that it also suffices.  Second, by \eqref{eq:topoverlap} the top-plane overlap scales as $\prod_{k}|\sin\gamma_{nk}|$, so the top frequency becomes unreachable if any
coupling angle is $0$ or $\pi$. Since
$\sum_{T\subseteq\mathcal{N}(n)}\kappa_T^2=\prod_k(\cos^2\gamma_{nk}+\sin^2\gamma_{nk})=1$, the coupling angles distribute a fixed budget of the effective observable across its Pauli supports $\{n\}\cup T$, so no single choice of $\gamma$ maximizes its overlap with every invariant plane at once, which is why we treat the entangling
strengths as trainable parameters.

Setting $d_{n}=1$ recovers Appendix~\ref{appendix:proofofentanglementplacement}:
$\kappa_{\emptyset}=\cos\gamma$, $\kappa_{\{2\}}=\sin\gamma$, the reachable
ceiling is $\omega=2$, and the top-plane overlap is $2|m_x\sin\gamma|$, matching. The star graph of Fig.~\ref{fig:Design_flow} has
$d_{1}=2$ and ceiling $\omega=3$, which is the frequency shown there. This
analysis assumes a single entangling round; with $L$ rounds the support spreads to
the radius-$L$ neighbourhood of $n$ and the ceiling grows accordingly, which we do
not analyze here.

\subsection{Design Procedure}
\label{app:design_alg}
\begin{algorithm}[h]
\caption{Placement rule for a target frequency}
\label{alg:design}
\begin{algorithmic}[1]
\Require interaction graph $\mathcal{G}=(V,E)$ with $|V|=N$, target frequency $\omega^{\star}\le N$, with $\omega^{\star}=N$ if the spectrum is unknown
\State choose the readout qubit $n\in\arg\max_{v\in V}\deg(v)$
\If{$\deg(n)<\omega^{\star}-1$}
\State \Return add entangling rounds or data re-uploading \Comment{one round reaches at most $\deg(n)+1$}
\EndIf
\State $U_{\mathrm{ent}}\gets\prod_{k\in\mathcal{N}(n)}e^{-i\gamma_{nk}\hat{Z}_{n}\hat{Z}_{k}/2}$ \Comment{Spread, $CZ$ at $|\gamma_{nk}|=\pi/2$}
\State $W_{L}\gets W_{L}^{\mathrm{post}}\,U_{\mathrm{ent}}\,W_{L}^{\mathrm{pre}}$ \Comment{Expose, Spread, Align}
\State $W_{R}\gets W_{R}^{\mathrm{post}}\,U_{\mathrm{ent}}\,W_{R}^{\mathrm{pre}}$ \Comment{state side, sets the amplitude}
\State \Return $U(x)=W_{L}\,S(x)\,W_{R}$ with readout $\hat{Z}_{n}$
\end{algorithmic}
\end{algorithm}
The procedure reads each vertex degree once and places $2\deg(n)$ two-qubit gates and four layers of single-qubit rotations, so it runs in $O(|V|+|E|)$ time. The invariant planes enter only its correctness proof, never its execution, so the $4^{N}$-dimensional operator space is never formed. By Corollary~\ref{cor:activation} and Appendix~\ref{app:activation}, every $\omega\le\deg(n)+1$ of the returned circuit has a nonzero coefficient at almost every parameter setting, and each can reach amplitude $1$ on its own. For $\omega^{\star}=N$, a vertex of degree $N-1$ needs $N-1$ edges, so the star centered on the measured qubit is the smallest graph that reaches every frequency of one encoding layer in one round, and it is the circuit of our experiments. The star is native to platforms with all-to-all connectivity, such as trapped ions with M{\o}lmer--S{\o}rensen gates, while on bounded-degree devices, such as heavy-hex processors with degree at most three, one round reaches $\omega\le4$ and larger targets need further rounds or re-uploading.

\subsection{Scope of the Results}
\label{app:general_ent}
\textbf{Any encoding generator.} Let $G=\sum_{k}\lambda_{k}\Pi_{k}$ be the spectral decomposition of a Hermitian generator. Every operator splits as $A=\sum_{j,k}\Pi_{j}A\Pi_{k}$ with $\mathrm{ad}_{G}(\Pi_{j}A\Pi_{k})=(\lambda_{j}-\lambda_{k})\Pi_{j}A\Pi_{k}$, and $\mathrm{ad}_{G}$ is self-adjoint for the Hilbert--Schmidt inner product, so $\mathcal{B}(\mathcal{H})$ is the orthogonal sum of the eigenspaces $\mathcal{B}_{\omega}=\{A:\mathrm{ad}_{G}(A)=-2\omega A\}$, on which $S(x)^{\dagger}AS(x)=e^{-i\omega x}A$. For an orthogonal basis $\{E_{\omega,r}\}_{r}$ of $\mathcal{B}_{\omega}$ with $\omega>0$, the operators $\mathcal{X}_{\omega,r}=E_{\omega,r}+E_{\omega,r}^{\dagger}$ and $\mathcal{Y}_{\omega,r}=-i(E_{\omega,r}-E_{\omega,r}^{\dagger})$ are closed pairs in the sense of Proposition~\ref{prop:2Dclosure}, mutually orthogonal and of equal norm, and the proof of Proposition~\ref{proposition:multi-qubit_Projected} applies verbatim with $2^{N-m_{\omega,r}-1}$ replaced by $2^{N}/\|\mathcal{X}_{\omega,r}\|^{2}$. The frequencies are the half-differences of the eigenvalues of $G$ \citep{Schuld}, and Proposition~\ref{prop:planes} is the explicit basis for $G=\sum_{n}\hat{X}_{n}$.

\textbf{Pauli encodings and readouts.} If qubit $j$ is encoded along a Pauli axis $\sigma_{j}$ and the readout is a Pauli operator on qubit $n$, single-qubit Clifford gates map $\sum_{j}\sigma_{j}$ to $\sum_{j}\hat{X}_{j}$ and the readout to $\hat{Z}_{n}$. These gates merge into the adjacent rotation layers, which are general, so every result holds with the Pauli labels exchanged.

\textbf{Entanglers.} Let $P_{n}$ be a Pauli operator on the measured qubit and $Q_{k}$ a Pauli operator on $k\in\mathcal{N}(n)$. Single-qubit Clifford gates that map $P_{n}$ to $\hat{Z}_{n}$ and each $Q_{k}$ to $\hat{Z}_{k}$ turn $\exp\bigl(-\tfrac{i}{2}\sum_{k}\gamma_{nk}P_{n}Q_{k}\bigr)$ into $U_{\mathrm{ent}}$ of Lemma~\ref{lem:ising}, with signs absorbed into $\gamma_{nk}$, and they merge into the adjacent rotation layers. Lemma~\ref{lem:ising}, Corollary~\ref{cor:activation}, the amplitude results of Appendix~\ref{app:activation} and Algorithm~\ref{alg:design} therefore hold unchanged for every entangler of commuting two-qubit couplings that share their factor $P_{n}$. This class contains $CZ$ and controlled-phase gates, $CNOT$ gates with the measured qubit as common control or common target, whose single-qubit factors commute with the couplings, echoed cross-resonance gates with a common control, and M{\o}lmer--S{\o}rensen $\hat{X}\hat{X}$ gates. For any other layer of two-qubit gates on the edges at $n$, $O$ remains supported on $\{n\}\cup\mathcal{N}(n)$, so the ceiling $\deg(n)+1$ still holds, and part (i) of the corollary applies whenever $O$ reaches the top plane at one parameter setting. This can fail. In our numerical checks, a star of iSWAP gates, whose $\hat{X}\hat{X}$ and $\hat{Y}\hat{Y}$ terms do not share a factor on $n$, leaves $c_{\deg(n)+1}\equiv0$, whereas a star of $\sqrt{\mathrm{iSWAP}}$ gates activates it, so the class above is sufficient but not necessary.

\newpage
\section{Data Reuploading}
\label{appendix:DataReupload}
\textbf{Data Reuploading.} The previous analysis considered a single data-encoding block $S(x)$. A direct
extension is to encode the input multiple times throughout the circuit:
\begin{equation}
    U_{\rm DR}(x,\Theta)
    =
    W_LS(x)W_{L-1}S(x)\cdots W_1S(x)W_0 .
\end{equation}
This is known as data reuploading \citep{perez2020data}. From the Fourier viewpoint, each encoding block introduces another opportunity to generate input-dependent rotations in the operator space. Therefore, repeating the encoding increases the range of accessible Fourier components. In particular, if one encoding layer with $N$ qubits can generate frequencies up to $\omega \leq N$, then $L$ reuploading layers can, in principle, generate higher frequencies up to $\omega \leq LN$. Thus, data reuploading provides a practical way to increase spectral expressivity without increasing the number of qubits. Access to higher frequencies does not mean that the model uses them. Our results are proved for a single encoding layer, and we do not extend Proposition~\ref{proposition:multi-qubit_Projected}, Corollary~\ref{cor:activation} or Algorithm~\ref{alg:design} to $L>1$. Our re-uploading experiments are therefore empirical and test whether the placement remains useful beyond the proved setting.

\section{Limitations and Potentials}
\label{appendix:limitation}
Our placement analysis covers one encoding layer and one entangling round on each side of the encoder, with single-qubit Pauli encodings and readouts and with entanglers whose two-qubit couplings commute and share a factor on the measured qubit (Appendix~\ref{app:general_ent}), and we evaluate data re-uploading only empirically. One round reaches frequencies up to $\deg(n)+1$, so on devices of bounded connectivity, such as heavy-hex processors of degree three, larger targets need further entangling rounds, which our analysis does not cover, or re-uploading. High-frequency planes are spanned by Pauli strings that act on many qubits, and local noise such as depolarization damps a string exponentially in the number of qubits it acts on, so the frequencies that our rule activates are also the most exposed to noise. At 14 qubits, finite-shot readout needs more than $10^{4}$ shots per point before any circuit we tested beats the constant predictor (Table~\ref{tab:noise14}). We evaluate all circuits in classical simulation and leave experiments on quantum hardware to future work. In broader impact, an interpretable design rule can help practitioners build PQCs that are both expressive and controllable, and it can reduce unnecessary circuit depth and two-qubit gate counts.

\end{document}

%% file: math_commands.tex
\usepackage{amsmath,amsfonts,bm}

\def\eqref#1{equation~\ref{#1}}

\def\1{\bm{1}}

\DeclareMathAlphabet{\mathsfit}{\encodingdefault}{\sfdefault}{m}{sl}
\SetMathAlphabet{\mathsfit}{bold}{\encodingdefault}{\sfdefault}{bx}{n}

